\long\def\@makecaption#1#2{%
  \vskip\abovecaptionskip\footnotesize
  \sbox\@tempboxa{#1. #2}%
  \ifdim \wd\@tempboxa >\hsize
    #1. #2\par
  \else
    \global \@minipagefalse
    \hb@xt@\hsize{\hfil\box\@tempboxa\hfil}%
  \fi
  \vskip\belowcaptionskip}
\newcommand{\p}{\partial}
\newcommand{\sgn}{\mathop{\rm sgn}\nolimits}
\newcommand{\ord}{\mathop{\rm ord}\nolimits}
\newcommand{\lsemioplus}{\mathbin{\mbox{$\lefteqn{\hspace{.77ex}\rule{.4pt}{1.2ex}}{\in}$}}}
\newlength{\mylength}
\newcommand{\solution}{\hspace*{-\mylength}\bullet\quad}
\newcommand{\todo}[1][\null]{\ensuremath{\clubsuit}}
\newcommand{\noprint}[1]{}
\newtheorem{theorem}{Theorem}
\newtheorem{lemma}[theorem]{Lemma}
\newtheorem{corollary}[theorem]{Corollary}
\theoremstyle{definition}
\newtheorem{remark}[theorem]{Remark}
\begin{document}

\par\noindent {\LARGE\bf
Generalized symmetries of remarkable\\ (1+2)-dimensional Fokker--Planck equation
\par}

\vspace{3.5mm}\par\noindent{\large
\large Dmytro R. Popovych$^{\dag\ddag}$, Serhii D. Koval$^{\dag\ddag}$ and Roman O. Popovych$^{\S\ddag}$
}
	
\vspace{4mm}\par\noindent{\it\small
$^\dag$Department of Mathematics and Statistics, Memorial University of Newfoundland,\\
$\phantom{^\dag}$\,St.\ John's (NL) A1C 5S7, Canada
\par}

\vspace{2mm}\par\noindent{\it\small
$^\ddag$Institute of Mathematics of NAS of Ukraine, 3 Tereshchenkivska Str., 01024 Kyiv, Ukraine
\par}

\vspace{2mm}\par\noindent{\it\small
$^\S$\,Mathematical Institute, Silesian University in Opava, Na Rybn\'\i{}\v{c}ku 1, 746 01 Opava, Czech Republic
\par}

\vspace{3mm}\par\noindent
E-mails:
dpopovych@mun.ca, skoval@mun.ca, rop@imath.kiev.ua
	
\vspace{4mm}\par\noindent\hspace*{10mm}\parbox{140mm}{\small
Using an original method,
we find the algebra of generalized symmetries of
a remarkable (1+2)-dimensional ultraparabolic Fokker--Planck equation,
which is also called the Kolmogorov equation and is singled out within
the entire class of ultraparabolic linear second-order partial differential equations with three independent variables
by its wonderful symmetry properties.
It turns out that the essential subalgebra of this algebra,
which consists of linear generalized symmetries,
is generated by the recursion operators associated with the nilradical
of the essential Lie invariance algebra of the Kolmogorov equation,
and the Casimir operator of the Levi factor of the latter algebra unexpectedly arises in the consideration.
We~also establish an isomorphism between this algebra and the Lie algebra associated with the second Weyl algebra,
which provides a dual perspective for studying their properties.
After developing the theoretical background of finding exact solutions
of homogeneous linear systems of differential equations using their linear generalized symmetries,
we efficiently apply it to the Kolmogorov equation.
}\par\vspace{3mm}

\noprint{
Keywords:
(1+2)-dimensional ultraparabolic Fokker--Planck equation;
generalized symmetry;
Lie symmetry;
Algebras of differential operators;
Universal enveloping algebras

MSC: 35B06, 35K70, 17B35, 17B66, 16S32, 13P25

13-XX Commutative algebra
  13Pxx Computational aspects and applications of commutative rings
    13P25 Applications of commutative algebra

16-XX Associative rings and algebras
  16Sxx Associative rings and algebras arising under various constructions
    16S32 Rings of differential operators (associative algebraic aspects)

17-XX Nonassociative rings and algebras
  17Bxx Lie algebras and Lie superalgebras
    17B35 Universal enveloping (super)algebras
    17B66 Lie algebras of vector fields and related (super) algebras

35-XX Partial differential equations
  35Kxx Parabolic equations and parabolic systems {For global analysis, analysis on manifolds, see 58J35}
    35K05 Heat equation
    35K10 Second-order parabolic equations
    35K70 Ultraparabolic equations, pseudoparabolic equations, etc.
  35Qxx	Partial differential equations of mathematical physics and other areas of application [See also 35J05, 35J10, 35K05, 35L05]
    35Q84 Fokker-Planck equations {For fluid mechanics, see 76X05, 76W05; for statistical mechanics, see 82C31}
  35Axx General topics
    35A30 Geometric theory, characteristics, transformations [See also 58J70, 58J72]
  35Bxx Qualitative properties of solutions
    35B06 Symmetries, invariants, etc.
  35Cxx Representations of solutions
    35C05 Solutions in closed form
    35C06 Self-similar solutions
    35C07 Traveling wave solutions
}

\section{Introduction}

Generalized (or higher) symmetries of differential equations
first appeared in the literature in their present form
in Noether's seminal paper~\cite{noet1918} in 1918.
Since then, they have found various applications in symmetry analysis of differential equations, integrability theory,
differential geometry and calculus of variations.
See \cite[pp.~374--379]{olve1993A}
for an excellent exposition on the history and development of the theory
of generalized symmetries and their applications as well as
other monographs on the subject~\cite{blum2010A,blum1989A,boch1999A,ibra1985A,kras1986A}.
At the same time, despite being under study for over a century,
the exhaustive descriptions of generalized symmetry algebras with complete proofs
have only been presented for a small number of specific systems of differential equations.
The main reason for this is the computational complexity inherent in all the problems on finding
objects that are related to systems of differential equations
and defined in the corresponding infinite-order jet spaces.
Notably, the generalized symmetry algebras even of such fundamental and simple models of mathematical physics
as
the linear (1+1)-dimensional heat equation~\cite{kova2023b},
the Burgers equation~\cite{popo2024a},
the linear Korteweg--de Vries equation~\cite{popo2010a} and
the (1+1)-dimensional Klein--Gordon equation~\cite{opan2020e}
were fully described only recently.
See also~\cite{opan2020e} for a review of advances in this field,
\cite{opan2020c} for constructing the generalized symmetry algebra of an isothermal no-slip drift flux model,
and~\cite{east2005a,shap1992a}, \cite{east2008A} and \cite{gove2012a,leva2017a,mich2014a}
for considering generalized symmetries of the Laplace, biharmonic and polyharmonic equations, respectively.

In the present paper, we comprehensively describe the algebra of generalized symmetries of
the Kolmogorov equation~\cite{kolm1934a}
\begin{gather}\label{eq:RemarkableFP}
u_t+xu_y=u_{xx},
\end{gather}
which is an ultraparabolic Fokker--Planck equation.
This equation is singled out within the entire class~$\mathcal U$
of ultraparabolic linear second-order partial differential equations with three independent variables
by its remarkable symmetry properties.
More specifically, it is the unique equation, modulo the point equivalence,
whose essential Lie invariance algebra~$\mathfrak g^{\rm ess}$ is eight-dimensional,
which is the maximum such dimension in the class~$\mathcal U$.
This is why we refer to~\eqref{eq:RemarkableFP} as the {\it remarkable Fokker--Planck equation}.
The above distinguishing properties of the equation~\eqref{eq:RemarkableFP} within the class~$\mathcal U$
are analogous to those of the heat equation
within the class of linear second-order parabolic partial differential equation with two independent variables,
see a discussion in~\cite{kova2023b}.
This is why these two equations are counterparts of each other in their respective classes.
As we will show, this relation also manifests on the level of generalized symmetries,
see Remark~\ref{rem:HeatEqGenSyms}.

The extended classical symmetry analysis of the remarkable Fokker--Planck equation
was carried out in~\cite{kova2023a},
featuring its numerous interesting symmetry properties.
In particular, the point-symmetry pseudogroup~$G$ of~\eqref{eq:RemarkableFP} was computed using the advanced version of the direct method.
One- and two-dimensional subalgebras of the algebra~$\mathfrak g^{\rm ess}$ were classified
modulo the action of the essential subgroup~$G^{\rm ess}$ of~$G$,
followed with the exhaustive classification of Lie reductions of the equation~\eqref{eq:RemarkableFP}
and the construction of wide families of its exact solutions.

The algebra~$\mathfrak g^{\rm ess}$ is wide and has a compound structure.
This provides knowledge of many generalized symmetries of~\eqref{eq:RemarkableFP} for free on the one hand
and complicates the computations and analysis within both the classical and the generalized frameworks on the other hand.
More specifically, the algebra~$\mathfrak g^{\rm ess}$ is isomorphic to a semidirect sum ${\rm sl}(2,\mathbb R)\lsemioplus{\rm h}(2,\mathbb R)$
of the real order-two special linear Lie algebra ${\rm sl}(2,\mathbb R)$
and the real rank-two Heisenberg algebra ${\rm h}(2,\mathbb R)$,
where the action of ${\rm sl}(2,\mathbb R)$ on ${\rm h}(2,\mathbb R)$ is given by the direct sum
of the one- and four-dimensional irreducible representations of ${\rm sl}(2,\mathbb R)$.
Despite the fact that such a structure is similar to those of the essential Lie invariance algebra of the linear (1+1)-dimensional
heat equation, which is isomorphic to ${\rm sl}(2,\mathbb R)\lsemioplus{\rm h}(1,\mathbb R)$,
the corresponding computations are of a higher complexity level.

A preliminary analysis of the generalized symmetry algebra~$\Sigma$
of the remarkable Fokker--Planck equation~\eqref{eq:RemarkableFP}
was carried out in~\cite{kova2023a,kova2024c}.
According to \cite[Proposition~5.22]{olve1993A}, any Lie-symmetry operator%
\footnote{%
A \emph{Lie-symmetry operator} of a homogeneous linear system of differential equations~$\mathcal L$: $\mathrm Lu=0$ is
a first-order linear differential operator~$\mathrm Q$ in total derivatives
such that the tuple of differential functions $\mathrm Qu$ is the characteristic of an (essential) Lie symmetry~of~$\mathcal L$.
}
of the equation~\eqref{eq:RemarkableFP} is its recursion operator.
It was shown in~\cite{kova2024c} that
a complete list of independent operators among such operators
is exhausted by those associated with the canonical basis elements
of the radical~$\mathfrak r$ of~$\mathfrak g^{\rm ess}$.
This is why the associative algebra generated by Lie-symmetry operators of~\eqref{eq:RemarkableFP}
is denoted by~$\Upsilon_{\mathfrak r}$.
We considered the subalgebra~$\Lambda_{\mathfrak r}$ of~$\Sigma$
that consists of the generalized-symmetry vector fields
obtained by the action of the operators from~$\Upsilon_{\mathfrak r}$
on the elementary symmetry vector field~$u\partial_u$ of~\eqref{eq:RemarkableFP}.
We related this subalgebra to generating solutions of the equation~\eqref{eq:RemarkableFP}
via the iterative action by its Lie-symmetry operators.
In this way, taking the group-invariant solutions of the equation~\eqref{eq:RemarkableFP} as seeds,
many more solution families were constructed for it.
Nevertheless, the description of the generalized symmetry algebra
was left in~\cite{kova2023a,kova2024c} as an open problem, which we solve in the present paper.

The algebra $\Sigma$ splits over its infinite-dimensional ideal $\Sigma^{-\infty}$
associated with the linear superposition of the solutions
and constituted by the vector fields $f(t,x,y)\p_u$,
where the parameter function~$f$ runs through the solution set of the equation~\eqref{eq:RemarkableFP}.
Thus, $\Sigma=\Sigma^{\rm ess}\lsemioplus\Sigma^{-\infty}$,
where $\Sigma^{\rm ess}$ is a complementary subalgebra to the ideal $\Sigma^{-\infty}$ in $\Sigma$.
We show that the subalgebra~$\Sigma^{\rm ess}$ coincides with~$\Lambda_{\mathfrak r}$.
The proof of this assertion is surprisingly unusual.
The core of the proof is to show that the entire subalgebra~$\Lambda$ of $\Sigma$
constituted by the linear generalized symmetries of the equation~\eqref{eq:RemarkableFP}
coincides with the algebra~$\Lambda_{\mathfrak r}$.
The latter straightforwardly implies that any subspace consisting of the linear generalized symmetries
of order bounded by a fixed $n\in\mathbb N$ is finite-dimensional,
which allows us to apply the Shapovalov--Shirokov theorem~\cite{shap1992a}
and state that $\Sigma^{\rm ess}=\Lambda$.
Moreover, this approach requires a preliminary study
of the algebra~$\Upsilon_{\mathfrak r}$ using methods from ring theory and algebraic geometry,
which is uncommon for group analysis of differential equations.
The biggest challenge was to analyze how the Casimir operator
of the Levi factor $\mathfrak f\simeq{\rm sl}(2,\mathbb R)$ of~$\mathfrak g^{\rm ess}$
and its multiples are related to the algebra~$\Upsilon_{\mathfrak r}$.
More specifically, the counterpart~$\mathrm C$ of this operator in~$\Upsilon_{\mathfrak r}$
is of degree four as a polynomial, while having order three as a differential operator.
This property impacted constructing a basis and, therefore, computing the dimension
of the subspace~$\Lambda_{\mathfrak r}^n$ of~$\Lambda_{\mathfrak r}$, $n\in\mathbb N_0$,
that is constituted by the elements of~$\Lambda_{\mathfrak r}$ whose order is bounded by~$n$.\looseness=-1

We were impressed to find out that the algebra $\Upsilon_{\mathfrak r}$
is isomorphic to the second Weyl algebra ${\rm W}(2,\mathbb R)$,
which gives rise to the isomorphism between the algebra $\Lambda$
and the Lie algebra ${\rm W}(2,\mathbb R)^{\mbox{\tiny$(-)$}}$ associated with ${\rm W}(2,\mathbb R)$.
Due to this, we straightforwardly obtain a number of properties of the algebra~$\Lambda$.
In particular, it is two-generated and $\mathbb Z$-graded, its center is one-dimensional,
and its quotient algebra by the center is simple.
It was proved in Corollary~21 of the fourth arXiv version of~\cite{kova2023b}
that the algebra of the linear generalized symmetries of the linear (1+1)-dimensional heat equation
is isomorphic to the Lie algebra ${\rm W}(1,\mathbb R)^{\mbox{\tiny$(-)$}}$
associated with the first Weyl algebra ${\rm W}(1,\mathbb R)$,
see Remark~\ref{rem:HeatEqGenSyms} below for more details.
The above facts strengthen the connection between the heat equation and the remarkable Fokker--Planck equation
from the point of view of generalized symmetries,
which we also fit into the wider framework of (ultra)parabolic linear second-order partial differential equations
with an arbitrary number of independent variables in the conclusion.

The paper is organized as follows.
In Section~\ref{sec:RemarkableFPMIA}, we present the maximal Lie invariance algebra
of the remarkable Fokker--Planck equation~\eqref{eq:RemarkableFP} and describe its key properties.
This is followed by the study of the associative algebra $\Upsilon_{\mathfrak r}$
of differential operators generated by the
Lie-symmetry operators associated with the radical $\mathfrak r$ of $\mathfrak g^{\rm ess}$.
We construct, in an explicit form, a basis of the algebra~$\Upsilon_{\mathfrak r}$
and, for each $n\in\mathbb N_0$,
a basis of its subspace of differential operators of order less than or equal to~$n$.
We also show that the algebra~$\Upsilon_{\mathfrak r}$ is isomorphic to the second Weyl algebra ${\rm W}(2,\mathbb R)$.
Section~\ref{sec:RemarkableFPPolynomialSols} is devoted to the study
of the polynomial solutions of~\eqref{eq:RemarkableFP}.
The results of this section are used in Section~\ref{sec:RemarkableFPGenSyms}
in the course of proving the assertion that the algebra~$\Lambda$
coincides with~$\Lambda_{\mathfrak r}$.
The latter straightforwardly leads to the description of the algebra~$\Sigma$
of the generalized symmetries of the equation~\eqref{eq:RemarkableFP}.
Proving that the algebra~$\Lambda$ is isomorphic to the Lie algebra~${\rm W}(2,\mathbb R)^{\mbox{\tiny$(-)$}}$
allows us to transfer the known results about~${\rm W}(2,\mathbb R)^{\mbox{\tiny$(-)$}}$ to~$\Lambda$
and, conversely, to study~${\rm W}(2,\mathbb R)^{\mbox{\tiny$(-)$}}$
from the perspective of its explicit faithful realization~$\Lambda$.
Section~\ref{sec:ExactSolutions} begins with a review of symmetry approaches
for the construction of exact solutions of systems of differential equations.
We also essentially develop the theoretical background of finding
solutions of homogeneous linear systems of differential equations
using their linear generalized symmetries.
Then the developed tools are efficiently applied to the equation~\eqref{eq:RemarkableFP}.
The results of the paper and possible avenues for future research are discussed in Section~\ref{sec:Conclusion}.

\section{Lie-symmetry operators}\label{sec:RemarkableFPMIA}

The maximal Lie invariance algebra of the equation~\eqref{eq:RemarkableFP} is
(see, e.g., \cite{kova2013a})
\begin{gather*}
\mathfrak g:=\langle \mathcal P^t,\,\mathcal D,\,\mathcal K,\,
\mathcal P^3,\,\mathcal P^2,\,\mathcal P^1,\,\mathcal P^0,\,\mathcal I,\mathcal Z(f)\rangle,
\end{gather*}
where
\begin{gather*}
\mathcal P^t =\p_t,\ \
\mathcal D   =2t\p_t+x\p_x+3y\p_y-2u\p_u,\ \
\mathcal K   =t^2\p_t+(tx+3y)\p_x+3ty\p_y-(x^2\!+2t)u\p_u,\\[.5ex]
\mathcal P^3 =3t^2\p_x+t^3\p_y+3(y-tx)u\p_u,\ \
\mathcal P^2 =2t\p_x+t^2\p_y-xu\p_u,\ \
\mathcal P^1 =\p_x+t\p_y,\ \
\mathcal P^0 =\p_y,\\[.5ex]
\mathcal I   =u\p_u,\quad
\mathcal Z(f)=f(t,x,y)\p_u.
\end{gather*}
Here the parameter function $f$ of~$(t,x,y)$ runs through the solution set of the equation~\eqref{eq:RemarkableFP}.
	
The vector fields $\mathcal Z(f)$ constitute the infinite-dimensional abelian ideal $\mathfrak g^{\rm lin}$ of~$\mathfrak g$
associated with the linear superposition of solutions of~\eqref{eq:RemarkableFP}, $\mathfrak g^{\rm lin}:=\{\mathcal Z(f)\}$.
Thus, the algebra $\mathfrak g$ can be represented as a semidirect sum, $\mathfrak g=\mathfrak g^{\rm ess}\lsemioplus\mathfrak g^{\rm lin}$,
where
\begin{gather*}
\mathfrak g^{\rm ess}=\langle\mathcal P^t,\mathcal D,\mathcal K,\mathcal P^3,\mathcal P^2,\mathcal P^1,\mathcal P^0,\mathcal I\rangle
\end{gather*}
is an (eight-dimensional) subalgebra of $\mathfrak g$,
called the essential Lie invariance algebra of~\eqref{eq:RemarkableFP}.
	
Up to the skew-symmetry of the Lie bracket, the nonzero commutation relations between the basis vector fields of $\mathfrak g^{\rm ess}$ are the following:
\begin{gather*}
[\mathcal P^t,\mathcal D]  = 2\mathcal P^t,\quad
[\mathcal P^t,\mathcal K]  =  \mathcal D,\quad
[\mathcal D,  \mathcal K]  = 2\mathcal K,\\[.5ex]
[\mathcal P^t,\mathcal P^3]= 3\mathcal P^2,\quad
[\mathcal P^t,\mathcal P^2]= 2\mathcal P^1,\quad	
[\mathcal P^t,\mathcal P^1]=  \mathcal P^0,\\[.5ex]
[\mathcal D,\mathcal P^3]  = 3\mathcal P^3,\quad		
[\mathcal D,\mathcal P^2]  =  \mathcal P^2,\quad
[\mathcal D,\mathcal P^1]  =- \mathcal P^1,\quad
[\mathcal D,\mathcal P^0]  =-3\mathcal P^0,\\[.5ex]
[\mathcal K,\mathcal P^2]  =- \mathcal P^3,\quad
[\mathcal K,\mathcal P^1]  =-2\mathcal P^2,\quad
[\mathcal K,\mathcal P^0]  =-3\mathcal P^1,\\[.5ex]
[\mathcal P^1,\mathcal P^2]=- \mathcal I,\quad
[\mathcal P^0,\mathcal P^3]= 3\mathcal I.
\end{gather*}
	
The algebra $\mathfrak g^{\rm ess}$ is nonsolvable.
Its Levi decomposition is given by $\mathfrak g^{\rm ess}=\mathfrak f\lsemioplus\mathfrak r$,
where the radical~$\mathfrak r$ of~$\mathfrak g^{\rm ess}$ coincides with the nilradical of~$\mathfrak g^{\rm ess}$ and
is spanned by the vector fields $\mathcal P^3$, $\mathcal P^2$, $\mathcal P^1$, $\mathcal P^0$ and~$\mathcal I$,
\[
\mathfrak r=\langle\mathcal P^3,\mathcal P^2,\mathcal P^1,\mathcal P^0,\mathcal I\rangle.
\]
The Levi factor $\mathfrak f=\langle\mathcal P^t,\mathcal D,\mathcal K\rangle$ of~$\mathfrak g^{\rm ess}$
is isomorphic to ${\rm sl}(2,\mathbb R)$,
the radical~$\mathfrak r$ of~$\mathfrak g^{\rm ess}$ is isomorphic to the rank-two Heisenberg algebra ${\rm h}(2,\mathbb R)$,
and the real representation of the Levi factor~$\mathfrak f$ on the radical~$\mathfrak r$
coincides, in the basis $(\mathcal P^3,\mathcal P^2,\mathcal P^1,\mathcal P^0,\mathcal I)$,
with the real representation $\rho_3\oplus \rho_0$ of~${\rm sl}(2,\mathbb R)$.
Here $\rho_n$ is the standard real irreducible representation of~${\rm sl}(2,\mathbb R)$ in the $(n+1)$-dimensional vector space.
More specifically,
\[
\rho_n( \mathcal P^t)_{ij}=(n-j)\delta_{i,j+1},\quad
\rho_n( \mathcal D)_{ij}  =(n-2j)\delta_{ij},\quad
\rho_n(-\mathcal K)_{ij}  =j\delta_{i+1,j},
\]
where $i,j\in\{1,2,\dots,n+1\}$, $n\in\mathbb N_0:=\mathbb N\cup\{0\}$,
and $\delta_{kl}$ is the Kronecker delta, i.e., $\delta_{kl}=1$ if $k=l$ and $\delta_{kl}=0$ otherwise, $k,l\in\mathbb N_0$.
Thus, the entire algebra~$\mathfrak g^{\rm ess}$ is isomorphic to the algebra
$L_{8,19}\simeq{\rm sl}(2,\mathbb R)\lsemioplus_{\rho_2\oplus \rho_0}{\rm h}(1,\mathbb R)$
from the classification of indecomposable Lie algebras of dimensions up to eight
with nontrivial Levi decompositions, which was carried out in~\cite{turk1988a}.

Lie algebras whose Levi factors are isomorphic to the algebra ${\rm sl}(2,\mathbb R)$ often arise
within the field of group analysis of differential equations
as Lie invariance algebras of parabolic partial differential equations.
At the same time, the action of Levi factors on the corresponding radicals is usually described
in terms of the representations $\rho_0$, $\rho_1$, $\rho_2$ or their direct sums,
cf.\ the essential Lie invariance algebra of the linear (1+1)-dimensional heat equation,
which is isomorphic to the so-called special Galilei algebra
${\rm sl}(2,\mathbb R)\lsemioplus_{\rho_1\oplus \rho_0}{\rm h}(1,\mathbb R)$
\cite[Section~3]{kova2023b}.
To the best of our knowledge, algebras analogous to~$\mathfrak g^{\rm ess}$ had not been studied
in group analysis from the point of view of their subalgebra structure before~\cite{kova2023a}.
See the conclusion for the discussion on (ultra)parabolic second-order multidimensional linear
partial differential equations with essential Lie invariance algebras
isomorphic to ${\rm sl}(2,\mathbb R)\lsemioplus_{\rho_{2n-1}\oplus \rho_0}{\rm h}(n,\mathbb R)$,
where ${\rm h}(n,\mathbb R)$ denotes the rank-$n$ Heisenberg algebra, $n\in\mathbb N$.

Consider the Lie-symmetry operators of~\eqref{eq:RemarkableFP}
that are associated with the Lie-symmetry vector fields~$-\mathcal P^3$, $-\mathcal P^2$, $-\mathcal P^1$, $-\mathcal P^0$
and~$-\mathcal P^t$, $-\mathcal D$, $-\mathcal K$
(here we take minuses for a nicer representation of differential operators),
\begin{gather*}
\mathrm P^3:=3t^2\mathrm D_x+t^3\mathrm D_y-3(y-tx),\ \
\mathrm P^2:=2t\mathrm D_x+t^2\mathrm D_y+x,\ \
\mathrm P^1:=\mathrm D_x+t\mathrm D_y,\ \
\mathrm P^0:=\mathrm D_y,\\
\mathrm P^t:=\mathrm D_t,\ \
\mathrm D  :=2t\mathrm D_t+x\mathrm D_x+3y\mathrm D_y+2,\ \
\mathrm K  :=t^2\mathrm D_t+(tx+3y)\mathrm D_x+3ty\mathrm D_y+x^2\!+2t.
\end{gather*}

The associative operator algebra $\Upsilon_{\mathfrak r}$
generated by the operators~$\mathrm P^3$, $\mathrm P^2$, $\mathrm P^1$ and $\mathrm P^0$
admits the following presentation:
\begin{gather}\label{eq:UpsilonPresentation}
\begin{split}
\Upsilon_{\mathfrak r}=\big\langle
&\mathrm P^3,\mathrm P^2,\mathrm P^1,\mathrm P^0\mid
\\
&[\mathrm P^3,\mathrm P^0]=3,\,
[\mathrm P^1,\mathrm P^2]=1,\,
[\mathrm P^3,\mathrm P^2]=[\mathrm P^3,\mathrm P^1]=[\mathrm P^2,\mathrm P^0]=[\mathrm P^1,\mathrm P^0]=0
\big\rangle.
\end{split}
\end{gather}
We begin describing the properties of the algebra $\Upsilon_{\mathfrak r}$ with finding its explicit basis.

\begin{lemma}\label{lem:BasisUpsilonR}
Fixed any ordering $(\mathrm Q^0,\mathrm Q^1,\mathrm Q^2,\mathrm Q^3)$ of $\{\mathrm P^0,\mathrm P^1,\mathrm P^2,\mathrm P^3\}$,
$\mathrm Q^0<\mathrm Q^1<\mathrm Q^2<\mathrm Q^3$,
the monomials $\mathbf Q^\alpha:=(\mathrm Q^0)^{\alpha_0}(\mathrm Q^1)^{\alpha_1}(\mathrm Q^2)^{\alpha_2}(\mathrm Q^3)^{\alpha_3}$
with $\alpha=(\alpha_0,\alpha_1,\alpha_2,\alpha_3)\in\mathbb N_0^{\,\,4}$
constitute a basis of the algebra~$\Upsilon_{\mathfrak r}$.
\end{lemma}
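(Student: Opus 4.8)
The plan is to establish the two defining properties of a basis---spanning and linear independence---while treating the freedom of ordering uniformly. Fix the order $\mathrm Q^0<\mathrm Q^1<\mathrm Q^2<\mathrm Q^3$. For the spanning property I would run the obvious sorting procedure: using the relations of~\eqref{eq:UpsilonPresentation} to transpose adjacent out-of-order generators, I replace each transposition by the ordered product plus a correction term coming from the commutator. Since every defining commutator is a scalar, i.e.\ has degree zero in the generators, each correction term has total degree two less than the word being reordered. An induction on the pair (total degree, number of inversions relative to the fixed order), ordered lexicographically, then shows that the procedure terminates and expresses every word, hence every element of $\Upsilon_{\mathfrak r}$, as a linear combination of the monomials $\mathbf Q^\alpha$.

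The heart of the matter is linear independence, and here I would avoid computing with the explicit differential operators---whose principal symbols do not by themselves separate the $\mathbf Q^\alpha$, since at a fixed $t$ they lie in the $(n+1)$-dimensional space of degree-$n$ forms in $(\xi_x,\xi_y)$ while the monomials of degree $n$ are far more numerous---and instead read off the structure directly from the presentation. Let $\mathfrak h=\langle e_0,e_1,e_2,e_3,z\rangle$ be the five-dimensional Heisenberg algebra with $z$ central, $[e_3,e_0]=3z$, $[e_1,e_2]=z$, and all other brackets of basis elements zero; this is $\mathrm h(2,\mathbb R)$ up to rescaling. Comparing generators and relations, the assignment $e_i\mapsto\mathrm P^i$, $z\mapsto 1$ identifies $\Upsilon_{\mathfrak r}$ with the quotient $U(\mathfrak h)/(z-1)$ of the universal enveloping algebra. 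By the Poincar\'e--Birkhoff--Witt theorem, valid for an arbitrary total order on a basis of $\mathfrak h$, the monomials $\mathbf e^\alpha z^m$ with $\alpha\in\mathbb N_0^{\,4}$ and $m\in\mathbb N_0$ form a basis of $U(\mathfrak h)$, where the $e_i$ are ordered in parallel with the $\mathrm Q^j$; because $z$ is central, its place in the order is immaterial, and this is precisely what yields the freedom in ordering $\mathrm Q^0,\dots,\mathrm Q^3$ in the statement.

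It then remains to descend to the quotient. Spanning of the images of the $\mathbf e^\alpha$ is immediate, as $z\equiv 1$ collapses each $\mathbf e^\alpha z^m$ to $\mathbf e^\alpha$. For independence I would take a relation $\sum_\alpha c_\alpha\mathbf e^\alpha=(z-1)w$, expand $w=\sum_{\beta,m}d_{\beta,m}\mathbf e^\beta z^m$ in the PBW basis, and compare coefficients of each $\mathbf e^\beta z^k$; using that $z$ is central one obtains $d_{\beta,k}=d_{\beta,k-1}$ for $k\ge 1$ together with $c_\beta=-d_{\beta,0}$. Finiteness of $w$ then forces every $d_{\beta,m}$, and hence every $c_\alpha$, to vanish, so the images of the $\mathbf e^\alpha$ are linearly independent. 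Transporting back along the isomorphism gives the claim for the fixed order, and the ordering-independence of PBW upgrades it to every order.

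The step I expect to demand the most care is the passage to the quotient, together with the verification that the presentation carries no hidden relations beyond those identifying $\Upsilon_{\mathfrak r}$ with $U(\mathfrak h)/(z-1)$. A fully self-contained alternative that sidesteps the enveloping-algebra machinery is Bergman's Diamond Lemma applied directly to the sorting reduction system of the first paragraph: there the only genuine obligation is to confirm that the length-three overlap ambiguities are resolvable, which holds automatically precisely because each commutator is a central scalar. I would therefore regard checking confluence of all overlaps---a short but essential finite verification---as the main, if routine, obstacle.
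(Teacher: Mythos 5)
Your proposal is correct, and its primary route differs from the paper's, whose entire proof is the argument you relegate to an alternative in your last paragraph: the paper interprets $\Upsilon_{\mathfrak r}$ through the presentation~\eqref{eq:UpsilonPresentation}, applies Bergman's diamond lemma, and notes that the four overlap ambiguities $\mathrm Q^3\mathrm Q^2\mathrm Q^1$, $\mathrm Q^3\mathrm Q^2\mathrm Q^0$, $\mathrm Q^3\mathrm Q^1\mathrm Q^0$ and $\mathrm Q^2\mathrm Q^1\mathrm Q^0$ are resolvable, for exactly the reason you identify (the defining commutators are central scalars). Your main argument instead realizes the presented algebra as $\mathfrak U\big({\rm h}(2,\mathbb R)\big)/(z-1)$, invokes the Poincar\'e--Birkhoff--Witt theorem for an arbitrary ordering of the basis, and descends to the quotient via the coefficient recursion $d_{\beta,k}=d_{\beta,k-1}$, $c_\beta=-d_{\beta,0}$; that computation is correct, and it is essentially the paper's proof of Lemma~\ref{lem:IsoToUnivEnv} run in the opposite logical direction --- the paper first obtains the basis from the diamond lemma and then uses it to identify $\ker\hat\varphi$ with $(\iota(\mathcal I)+1)$, whereas you establish the quotient description first and extract the basis from it, so your route proves Lemma~\ref{lem:BasisUpsilonR} and the substance of Lemma~\ref{lem:IsoToUnivEnv} in one stroke at the cost of importing PBW, while the paper's route buys a two-line proof at the cost of a finite confluence check. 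One caveat applies equally to both arguments: identifying $\Upsilon_{\mathfrak r}$ with $\mathfrak U\big({\rm h}(2,\mathbb R)\big)/(z-1)$ ``by comparing generators and relations'' is automatic only when $\Upsilon_{\mathfrak r}$ is read off from~\eqref{eq:UpsilonPresentation}; for the concrete algebra of differential operators one additionally needs that the displayed relations are the only ones, which is precisely the linear independence at stake. You flag this explicitly, and the paper assumes the presentation with no more justification, so this is not a gap relative to the paper's own standard. Your observation that principal symbols cannot separate the monomials is also well taken: it is the same degree--order mismatch that later forces the operator~$\mathrm C$ into the basis of Theorem~\ref{thm:BasisOforderNOperators}.
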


\begin{proof}
The required claim follows from Bergman's diamond lemma \cite[Theorem~1.2]{berg1978a},
see therein for all the related notions.
Indeed, under the interpretation of the algebra~$\Upsilon_{\mathfrak r}$ following \cite[Section~3]{berg1978a},
there are exactly four overlap ambiguities, which are related to the products
$\mathrm Q^3\mathrm Q^2\mathrm Q^1$,
$\mathrm Q^3\mathrm Q^2\mathrm Q^0$,
$\mathrm Q^3\mathrm Q^1\mathrm Q^0$ and
$\mathrm Q^2\mathrm Q^1\mathrm Q^0$,
and each of them is resolvable.
\end{proof}

By default, we use the ordering $\mathrm P^3<\mathrm P^2<\mathrm P^1<\mathrm P^0$.

\begin{lemma}\label{lem:IsoToUnivEnv}
In the sense of unital algebras, the algebra $\Upsilon_{\mathfrak r}$ is isomorphic
to the quotient algebra of the universal enveloping algebra $\mathfrak U(\mathfrak r)$ of~$\mathfrak r$
by the two-sided ideal $(\iota(\mathcal I)+1)$ generated by $\iota(\mathcal I)+1$,
$\Upsilon_{\mathfrak r}\simeq\mathfrak U(\mathfrak r)/(\iota(\mathcal I)+1)$,
where $\iota\colon\mathfrak r\hookrightarrow\mathfrak U(\mathfrak r)$
is the canonical embedding of the Lie algebra $\mathfrak r$ in its universal enveloping algebra~$\mathfrak U(\mathfrak r)$.
Moreover, this defines an isomorphism between the associated Lie algebras $\Upsilon_{\mathfrak r}^{\mbox{\tiny$(-)$}}$
and $\big(\mathfrak U(\mathfrak r)/(\iota(\mathcal I)+1)\big)^{\mbox{\tiny$(-)$}}$.
\end{lemma}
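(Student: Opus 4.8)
The plan is to build the isomorphism directly from the universal property of the enveloping algebra and then pin down injectivity by a basis count, with Lemma~\ref{lem:BasisUpsilonR} doing the decisive work.

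First I would define a linear map $\phi\colon\mathfrak r\to\Upsilon_{\mathfrak r}$ on the basis by $\phi(\mathcal P^i)=\mathrm P^i$ for $i\in\{0,1,2,3\}$ and $\phi(\mathcal I)=-1$, where $-1$ denotes $-1$ times the unit of the (unital) algebra $\Upsilon_{\mathfrak r}$, and check that $\phi$ is a homomorphism of Lie algebras from $\mathfrak r$ into $\Upsilon_{\mathfrak r}^{\mbox{\tiny$(-)$}}$. This is a finite verification on the commutation relations of $\mathfrak r$: the only nonzero brackets among the basis elements are $[\mathcal P^1,\mathcal P^2]=-\mathcal I$ and $[\mathcal P^0,\mathcal P^3]=3\mathcal I$, which $\phi$ sends to $1$ and $-3$, matching the defining relations $[\mathrm P^1,\mathrm P^2]=1$ and $[\mathrm P^3,\mathrm P^0]=3$ of the presentation~\eqref{eq:UpsilonPresentation}, while the centrality of $\mathcal I$ is respected because $\phi(\mathcal I)$ is a scalar. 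By the universal property of $\mathfrak U(\mathfrak r)$, the map $\phi$ extends uniquely to a homomorphism of unital algebras $\tilde\phi\colon\mathfrak U(\mathfrak r)\to\Upsilon_{\mathfrak r}$ with $\tilde\phi\circ\iota=\phi$. Since $\tilde\phi(\iota(\mathcal I)+1)=\phi(\mathcal I)+1=0$, the two-sided ideal $(\iota(\mathcal I)+1)$ lies in $\ker\tilde\phi$, so $\tilde\phi$ descends to a homomorphism $\bar\phi\colon\mathfrak U(\mathfrak r)/(\iota(\mathcal I)+1)\to\Upsilon_{\mathfrak r}$. Surjectivity of $\bar\phi$ is immediate, as its image contains all the generators $\mathrm P^0,\mathrm P^1,\mathrm P^2,\mathrm P^3$ of $\Upsilon_{\mathfrak r}$.

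The crux is injectivity, which I would establish by exhibiting bases on both sides that $\bar\phi$ matches. Because $\mathcal I$ is central in $\mathfrak r$, the element $c:=\iota(\mathcal I)$ is central in $\mathfrak U(\mathfrak r)$, and the Poincar\'e--Birkhoff--Witt theorem, with $c$ taken as the last generator, presents $\mathfrak U(\mathfrak r)$ as a free right $\mathbb R[c]$-module with basis the ordered monomials $\iota(\mathcal P^0)^{\alpha_0}\iota(\mathcal P^1)^{\alpha_1}\iota(\mathcal P^2)^{\alpha_2}\iota(\mathcal P^3)^{\alpha_3}$, $\alpha\in\mathbb N_0^{\,\,4}$. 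Tensoring with $\mathbb R[c]/(c+1)\simeq\mathbb R$ then shows that the images of these monomials form a basis of $\mathfrak U(\mathfrak r)/(\iota(\mathcal I)+1)$. Under $\bar\phi$ such an image goes to $(\mathrm P^0)^{\alpha_0}(\mathrm P^1)^{\alpha_1}(\mathrm P^2)^{\alpha_2}(\mathrm P^3)^{\alpha_3}$, which is exactly the monomial $\mathbf Q^\alpha$ for the admissible ordering $\mathrm P^0<\mathrm P^1<\mathrm P^2<\mathrm P^3$; by Lemma~\ref{lem:BasisUpsilonR} these constitute a basis of $\Upsilon_{\mathfrak r}$. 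Thus $\bar\phi$ carries a basis bijectively onto a basis, so it is a linear isomorphism and hence an isomorphism of unital algebras.

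The final claim about the associated Lie algebras requires no further work: any isomorphism of associative algebras automatically intertwines the commutator brackets, so $\bar\phi$ is simultaneously an isomorphism between $\big(\mathfrak U(\mathfrak r)/(\iota(\mathcal I)+1)\big)^{\mbox{\tiny$(-)$}}$ and $\Upsilon_{\mathfrak r}^{\mbox{\tiny$(-)$}}$. I expect the only genuinely delicate point to be the basis description of the quotient---specifically, justifying via centrality of $\mathcal I$ that killing $\iota(\mathcal I)+1$ leaves precisely the PBW monomials in the $\iota(\mathcal P^i)$ as a basis; once this is in hand, Lemma~\ref{lem:BasisUpsilonR} closes the argument with a clean basis-to-basis correspondence.
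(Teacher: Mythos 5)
Your proposal is correct and follows essentially the same route as the paper: the universal property of $\mathfrak U(\mathfrak r)$ yields a surjection onto $\Upsilon_{\mathfrak r}$, and injectivity of the induced map on the quotient is settled by combining the Poincar\'e--Birkhoff--Witt theorem with Lemma~\ref{lem:BasisUpsilonR}. The only cosmetic difference is that the paper identifies $\ker\hat\varphi=(\iota(\mathcal I)+1)$ by an explicit divisibility argument (each $\mathcal I^j-(-1)^j$ is divisible by $\mathcal I+1$), whereas you obtain a basis of the quotient by viewing $\mathfrak U(\mathfrak r)$ as a free $\mathbb R[c]$-module and tensoring with $\mathbb R[c]/(c+1)$; both are valid and rest on the same two ingredients.
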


\begin{proof}
The correspondence $\mathcal P^j\mapsto\mathrm P^j$, $j=0,1,2,3$, and $\mathcal I\mapsto-1$
linearly extends to the Lie algebra homomorphism~$\varphi$ from $\mathfrak r$
to the Lie algebra $\Upsilon_{\mathfrak r}^{\mbox{\tiny$(-)$}}$
associated with the associative algebra $\Upsilon_{\mathfrak r}$.
By the universal property of the universal enveloping algebra~$\mathfrak U(\mathfrak r)$,
the Lie algebra homomorphism~$\varphi$ extends to the (unital) associative algebra homomorphism
$\hat\varphi\colon\mathfrak U(\mathfrak r)\to\Upsilon_{\mathfrak r}$,
i.e., $\varphi=\hat\varphi\circ\iota$ as homomorphisms of vector spaces.
Since the algebra $\Upsilon_{\mathfrak r}$ is generated by $\varphi(\mathfrak r)$,
the homomorphism $\hat\varphi$ is surjective.

For the rest of the proof, we identify $\mathfrak r$ with its image under the map $\iota$ in $\mathfrak U(\mathfrak r)$.
It is clear that $(\mathcal I+1)\subset\ker\hat\varphi$.
To show the reverse inclusion,
consider an arbitrary polynomial $Q\in\mathfrak U(\mathfrak r)$,
which in view of the Poincar\'e--Birkhoff--Witt theorem takes the form
\[
Q=c_{i_3i_2i_1i_0j}(\mathcal P^3)^{i_3}(\mathcal P^2)^{i_2}(\mathcal P^1)^{i_1}(\mathcal P^0)^{i_0}\mathcal I^j
\]
with a  finite number of nonzero coefficients $c_{i_3i_2i_1i_0j}$, and assume that $Q\in\ker\hat\varphi$,
\begin{gather*}
\hat\varphi(Q)=(-1)^jc_{i_3i_2i_1i_0j}(\mathrm P^3)^{i_3}(\mathrm P^2)^{i_2}(\mathrm P^1)^{i_1}(\mathrm P^0)^{i_0}=0.
\end{gather*}
Here and in what follows we assume summation with respect to repeated indices.
In view of Lemma~\ref{lem:BasisUpsilonR},
we have $(-1)^jc_{i_3i_2i_1i_0j}=0$ for each fixed tuple $(i_3,i_2,i_1,i_0)$.
Therefore,
\begin{gather*}
\begin{split}
Q&{}=c_{i_3i_2i_1i_0j}(\mathcal P^3)^{i_3}(\mathcal P^2)^{i_2}(\mathcal P^1)^{i_1}(\mathcal P^0)^{i_0}\mathcal I^j
-(-1)^jc_{i_3i_2i_1i_0j}(\mathcal P^3)^{i_3}(\mathcal P^2)^{i_2}(\mathcal P^1)^{i_1}(\mathcal P^0)^{i_0}
\\&{}=c_{i_3i_2i_1i_0j}(\mathcal P^3)^{i_3}(\mathcal P^2)^{i_2}(\mathcal P^1)^{i_1}(\mathcal P^0)^{i_0}(\mathcal I^j-(-1)^j).
\end{split}
\end{gather*}
For each~$j$, the factor $\mathcal I^j-(-1)^j$ is divisible by $\mathcal I+1$.
Therefore, $\ker\hat\varphi=(\mathcal I+1)$
and the isomorphism $\Upsilon_{\mathfrak r}\simeq\mathfrak U(\mathfrak r)/(\mathcal I+1)$
follows from the first isomorphism theorem for associative algebras.

The isomorphism between the associated Lie algebras
$\Upsilon_{\mathfrak r}^{\mbox{\tiny$(-)$}}$ and $\big(\mathfrak U(\mathfrak r)/(\mathcal I+1)\big)^{\mbox{\tiny$(-)$}}$
follows from the fact that, by definition,
the Lie brackets on these algebras are the ring-theoretic commutators
on the corresponding associative algebras.
\end{proof}

\begin{remark}\label{rem:WeylAlgebra}
Recall the definition of the $n$th Weyl algebra ${\rm W}(n,\mathbb R)$.
It is the quotient of the free associative $\mathbb R$-algebra on the alphabet
$\{\hat p_1,\dots,\hat p_n,\hat q_1,\dots,\hat q_n\}$
by the two-side ideal generated by
$\hat p_i\hat p_j-\hat p_j\hat p_i$, $\hat q_i\hat q_j-\hat q_j\hat q_i$
and $\hat p_i\hat q_j-\hat q_j\hat p_i-\delta_{ij}$.
Here and in the rest of this remark, the indices~$i$ and~$j$ run from 1 to~$n$.
Recall that $\delta_{ij}$ denotes the Kronecker delta.
Hence the algebra ${\rm W}(n,\mathbb R)$ admits the presentation
\[
{\rm W}(n,\mathbb R)=
\langle\hat p_1,\dots,\hat p_n,\hat q_1,\dots,\hat q_n\mid
\hat p_i\hat p_j-\hat p_j\hat p_i=\hat q_i\hat q_j-\hat q_j\hat q_i=0,\,\hat p_i\hat q_j-\hat q_j\hat p_i=\delta_{ij}
\rangle.
\]
This algebra can be related to the quotient of the universal enveloping algebra
of the rank-$n$ Heisenberg Lie algebra ${\rm h}(n,\mathbb R)$.
More specifically, let the elements~$p_i$, $q_i$ and $c$ constitute
the canonical basis of the Lie algebra ${\rm h}(n,\mathbb R)$,
and thus they satisfy the commutation relations $[p_i,p_j]=[q_i,q_j]=0$ and $[p_i,q_j]=\delta_{ij}c$.
The $n$th Weyl algebra ${\rm W}(n,\mathbb R)$ is the quotient
of the universal enveloping algebra $\mathfrak U\big({\rm h}(n,\mathbb R)\big)$ of ${\rm h}(n,\mathbb R)$
by the two-sided ideal $(c-1)$ generated by $c-1$,
\smash{${\rm W}(n,\mathbb R):=\mathfrak U\big({\rm h}(n,\mathbb R)\big)/(c-1)$}.
The canonical basis in ${\rm W}(n,\mathbb R)$ consists of monomials $q^\kappa p^\lambda$,
where
$q^\kappa:=q_1^{\kappa_1}\cdots q_n^{\kappa_n}$,
$p^\lambda:=p_1^{\lambda_1}\cdots p_n^{\lambda_n}$
and
$\kappa:=(\kappa_1,\dots,\kappa_n)$, $\lambda=(\lambda_1,\dots,\lambda_n)$
are multiindices running through $\mathbb N_0^n$.
\end{remark}

The commutation relations in the above basis of ${\rm W}(n,\mathbb R)$ take the form
\begin{gather}\label{eq:CommutatorsWn-}
[q^\kappa p^\lambda,q^{\kappa'}p^{\lambda'}]
=\sum_{\nu\in\mathbb N_0^n} \nu!\left(\binom{\kappa'}\nu\binom\lambda\nu-\binom\kappa\nu\binom{\lambda'}\nu\right)
q^{\kappa+\kappa'-\nu}p^{\lambda+\lambda'-\nu}.
\end{gather}
Note that the series in the commutator $[q^\kappa p^\lambda,q^{\kappa'}p^{\lambda'}]$
is in fact the finite sum with the multiindex~$\nu$ satisfying the inequalities
$\nu\leqslant\kappa$ and $\nu\leqslant\lambda'$ or $\nu\leqslant\kappa'$ and $\nu\leqslant\lambda$
since $\binom\kappa\nu=0$ if $\nu$ does not satisfy the inequality $\nu\leqslant\kappa$.

\begin{remark}\label{rem:OpWeylAlgebra}
The opposite algebra ${\rm W}(n,\mathbb R)^{\rm op}$ of ${\rm W}(n,\mathbb R)$ admits the presentation
\[
{\rm W}(n,\mathbb R)^{\rm op}=
\langle \check p_1,\dots,\check p_n,\check q_1,\dots,\check q_n\mid
\check p_i\check p_j-\check p_j\check p_i=\check q_i\check q_j-\check q_j\check q_i=0,\,
\check p_i\check q_j-\check q_j\check p_i=-\delta_{ij}
\rangle.
\]
This results in the isomorphism ${\rm W}(n,\mathbb R)^{\rm op}\simeq\mathfrak U\big({\rm h}(n,\mathbb R)\big)/(c+1)$
defined on the algebra generators by the correspondence $\check p_i\mapsto p_i$, $\check q_i\mapsto q_i$.
It is clear that the algebras ${\rm W}(n,\mathbb R)$ and ${\rm W}(n,\mathbb R)^{\rm op}$ are isomorphic, where
the simplest isomorphism is given by permuting the $p$- and $q$-tuples,
$\hat p_i\leftrightarrow\check q_i$,~$\hat q_i\leftrightarrow\check p_i$.
\end{remark}

\begin{corollary}\label{cor:IsoToWeyl}
The algebra~$\Upsilon_{\mathfrak r}$ is isomorphic to the opposite of the second Weyl algebra
and hence to the second Weyl algebra itself,
$\Upsilon_{\mathfrak r}\simeq{\rm W}(2,\mathbb R)^{\rm op}\simeq{\rm W}(2,\mathbb R)$.
\end{corollary}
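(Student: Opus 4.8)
The plan is to deduce the statement directly from Lemma~\ref{lem:IsoToUnivEnv}, which already presents $\Upsilon_{\mathfrak r}$ as the quotient $\mathfrak U(\mathfrak r)/(\mathcal I+1)$. In view of Remark~\ref{rem:OpWeylAlgebra}, where the opposite Weyl algebra is realized as $\mathfrak U({\rm h}(2,\mathbb R))/(c+1)$, it suffices to produce a Lie-algebra isomorphism $\mathfrak r\xrightarrow{\sim}{\rm h}(2,\mathbb R)$ that carries the central generator $\mathcal I$ to the central generator $c$: the induced isomorphism of enveloping algebras will then send the two-sided ideal $(\mathcal I+1)$ onto $(c+1)$ and descend to the quotients.

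First I would write down an explicit linear map matching the canonical bases. Using the commutation relations $[\mathcal P^1,\mathcal P^2]=-\mathcal I$ and $[\mathcal P^0,\mathcal P^3]=3\mathcal I$ of $\mathfrak r$ (with $\mathcal I$ central and all remaining brackets among $\mathcal P^3,\mathcal P^2,\mathcal P^1,\mathcal P^0$ vanishing), I would set
\[
p_1:=\mathcal P^2,\quad q_1:=\mathcal P^1,\quad p_2:=\tfrac13\mathcal P^0,\quad q_2:=\mathcal P^3,\quad c:=\mathcal I.
\]
A short check then confirms the defining Heisenberg relations $[p_i,q_j]=\delta_{ij}c$ together with $[p_i,p_j]=[q_i,q_j]=0$ and the centrality of $c$: indeed $[p_1,q_1]=[\mathcal P^2,\mathcal P^1]=\mathcal I=c$ and $[p_2,q_2]=\tfrac13[\mathcal P^0,\mathcal P^3]=\mathcal I=c$, while every mixed bracket vanishes because $[\mathcal P^3,\mathcal P^2]=[\mathcal P^3,\mathcal P^1]=[\mathcal P^2,\mathcal P^0]=[\mathcal P^1,\mathcal P^0]=0$. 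This establishes the required isomorphism $\mathfrak r\simeq{\rm h}(2,\mathbb R)$ of Lie algebras sending $\mathcal I\mapsto c$.

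Next I would lift this isomorphism to the universal enveloping algebras. By functoriality of $\mathfrak U$, the Lie-algebra isomorphism extends to an isomorphism $\mathfrak U(\mathfrak r)\xrightarrow{\sim}\mathfrak U({\rm h}(2,\mathbb R))$ of associative algebras; since it carries $\mathcal I$ to $c$, it carries $\mathcal I+1$ to $c+1$ and hence the two-sided ideal $(\mathcal I+1)$ onto $(c+1)$. Passing to the quotients and combining with Lemma~\ref{lem:IsoToUnivEnv} and Remark~\ref{rem:OpWeylAlgebra} yields
\[
\Upsilon_{\mathfrak r}\simeq\mathfrak U(\mathfrak r)/(\mathcal I+1)\simeq\mathfrak U({\rm h}(2,\mathbb R))/(c+1)\simeq{\rm W}(2,\mathbb R)^{\rm op},
\]
and the final isomorphism ${\rm W}(2,\mathbb R)^{\rm op}\simeq{\rm W}(2,\mathbb R)$ is the one recorded in Remark~\ref{rem:OpWeylAlgebra}, given by interchanging the $p$- and $q$-generators.

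Since every step past the identification of bases is formal, I expect the only point requiring genuine care to be the choice of normalizing scalars and signs in the explicit basis correspondence: the bracket $[\mathcal P^0,\mathcal P^3]=3\mathcal I$ carries the factor $3$, so one must rescale (here $p_2=\tfrac13\mathcal P^0$) to reproduce the unit structure constant of the canonical Heisenberg relation, and one must verify that $\mathcal I$ (rather than $-\mathcal I$) is sent to $c$ so that the ideal $(\mathcal I+1)$ lands on $(c+1)$ and thus yields the opposite Weyl algebra. As ${\rm W}(2,\mathbb R)^{\rm op}$ and ${\rm W}(2,\mathbb R)$ are isomorphic anyway, this last sign convention is immaterial to the final conclusion.
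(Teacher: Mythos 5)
Your proposal is correct and follows essentially the same route as the paper: the corollary is obtained by combining Lemma~\ref{lem:IsoToUnivEnv} with the realization ${\rm W}(2,\mathbb R)^{\rm op}\simeq\mathfrak U\big({\rm h}(2,\mathbb R)\big)/(c+1)$ from Remark~\ref{rem:OpWeylAlgebra}, via an explicit matching of $\mathfrak r$ with ${\rm h}(2,\mathbb R)$ sending $\mathcal I\mapsto c$. Your basis correspondence (with the $\tfrac13$ rescaling absorbing the structure constant in $[\mathcal P^0,\mathcal P^3]=3\mathcal I$) is a legitimate variant of the generator correspondence $(\tfrac13\mathrm P^3,\mathrm P^2,\mathrm P^1,\mathrm P^0)\mapsto(\check q_1,\check p_2,\check q_2,\check p_1)$ recorded in the paper immediately after the corollary.
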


The explicit isomorphisms in Corollary~\ref{cor:IsoToWeyl} are established,
e.g., by the correspondence on the level of generators in the following way:
\[
(\tfrac13\mathrm P^3,\mathrm P^2,\mathrm P^1,\mathrm P^0)
\mapsto(\check q_1,\check p_2,\check q_2,\check p_1)
\mapsto(\hat p_1,\hat q_2,\hat p_2,\hat q_1).
\]

The algebra~$\Upsilon_{\mathfrak r}$ possesses two natural filtrations,
\begin{gather*}
F_1\colon\quad
\Upsilon_{\mathfrak r}=\bigcup_{n\in\mathbb N_0}\Upsilon^{\ord}_n,\quad
\Upsilon^{\ord}_n:=\{\mathrm Q\in\Upsilon_{\mathfrak r}\mid\mathop{\rm ord}\mathrm Q\leqslant n\},
\\
F_2\colon\quad
\Upsilon_{\mathfrak r}=\bigcup_{n\in\mathbb N_0}\Upsilon^{\deg}_n,\quad
\Upsilon^{\deg}_n:=\{\mathrm Q\in\Upsilon_{\mathfrak r}\mid\mathop{\rm deg}\mathrm Q\leqslant n\},
\end{gather*}
where $\mathop{\rm ord}\mathrm Q$ is the order of~$\mathrm Q$ as a differential operator
and $\mathop{\rm deg}\mathrm Q$ is the degree of~$\mathrm Q$ as a (noncommutative) polynomial
in $\{\mathrm P^0,\mathrm P^1,\mathrm P^2,\mathrm P^3\}$.
It is clear that $\mathop{\rm ord}\mathrm Q\leqslant\mathop{\rm deg}\mathrm Q$ for any $\mathrm Q\in\Upsilon_{\mathfrak r}$.
Therefore, for each $n\in\mathbb N_0$ we have the inclusion $\Upsilon^{\deg}_n\subseteq\Upsilon^{\ord}_n$.
The (unordered) basis of the space~$\Upsilon^{\deg}_n$
that corresponds to the ordering $\mathrm P^3<\mathrm P^2<\mathrm P^1<\mathrm P^0$
is the set
\[\{(\mathrm P^3)^{i_3}(\mathrm P^2)^{i_2}(\mathrm P^1)^{i_1}(\mathrm P^0)^{i_0}\mid i_3+i_2+i_1+i_0\leqslant n\},\]
which is the restriction of the corresponding basis of the algebra~$\Upsilon_{\mathfrak r}$
to the subspace~$\Upsilon^{\deg}_n$.

The description of bases of the subspaces $\Upsilon^{\ord}_n$, $n\in\mathbb N_0$,
is more complicated.
To construct such bases, we should consider a distinguish element~$\mathrm C$ of~$\Upsilon_{\mathfrak r}$.
On solutions of the equation~\eqref{eq:RemarkableFP},
its Lie-symmetry operators~$\mathrm P^t$, $\mathrm D$ and~$\mathrm K$
associated with its Lie symmetries~$-\mathcal P^t$, $-\mathcal D$ and $-\mathcal K$
are equivalent to the elements
\begin{gather}\label{eq:PDKRepresentations}
\begin{split}&
\hat{\mathrm P}^t:=(\mathrm P^1)^2-\mathrm P^2\mathrm P^0=\mathrm D_x^2-x\mathrm D_y,
\\&
\hat{\mathrm D}  :=\mathrm P^2\mathrm P^1-\mathrm P^3\mathrm P^0+2=2t\mathrm D_x^2+x\mathrm D_x+(3y-2tx)\mathrm D_y+2,
\\&
\hat{\mathrm K}  :=(\mathrm P^2)^2-\mathrm P^3\mathrm P^1=t^2\mathrm D_x^2+(3y+tx)\mathrm D_x+t(3y-tx)\mathrm D_y+x^2+2t
\end{split}
\end{gather}
of the associative algebra~$\Upsilon_{\mathfrak r}$, respectively.
The associative algebra~$\Upsilon_{\mathfrak f}$ generated by~$\hat{\mathrm P}^t$, $\hat{\mathrm D}$ and~$\hat{\mathrm K}$
is isomorphic to the universal enveloping algebra~$\mathfrak U(\mathfrak f)$ of the Levi factor~$\mathfrak f$.
In other words, the algebra~$\Upsilon_{\mathfrak r}$ contains
an isomorphic copy~$\Upsilon_{\mathfrak f}$ of the universal enveloping algebra~$\mathfrak U(\mathfrak f)$.
This allows us to consider the counterpart
of the Casimir operator $\mathrm D^2-2(\mathrm K\mathrm P^t+\mathrm P^t\mathrm K)$
of the Levi factor~$\mathfrak f$ inside the algebra~$\Upsilon_{\mathfrak r}$.
This operator
is equivalent on the solutions of~\eqref{eq:RemarkableFP} to the operator
\begin{gather*}
\begin{split}
\mathrm C&:=\hat{\mathrm D}^2-2(\hat{\mathrm K}\hat{\mathrm P}^t+\hat{\mathrm P}^t\hat{\mathrm K})
\noprint{
\\&\phantom{:}
=(\mathrm P^2\mathrm P^1-\mathrm P^3\mathrm P^0+2)^2
-2((\mathrm P^2)^2-\mathrm P^3\mathrm P^1)((\mathrm P^1)^2-\mathrm P^2\mathrm P^0)
-2((\mathrm P^1)^2-\mathrm P^2\mathrm P^0)((\mathrm P^2)^2-\mathrm P^3\mathrm P^1)
\\&\phantom{:} 
=(\mathrm P^2)^2(\mathrm P^1)^2+\mathrm P^2\mathrm P^1-2\mathrm P^3\mathrm P^2\mathrm P^1\mathrm P^0
+(\mathrm P^3)^2(\mathrm P^0)^2-3\mathrm P^3\mathrm P^0+4\mathrm P^2\mathrm P^1-4\mathrm P^3\mathrm P^0+4
\\&\phantom{:=}
-2(\mathrm P^2)^2(\mathrm P^1)^2+2(\mathrm P^2)^3\mathrm P^0
+2\mathrm P^3(\mathrm P^1)^3-2\mathrm P^3\mathrm P^2\mathrm P^1\mathrm P^0-2\mathrm P^3\mathrm P^0
\\&\phantom{:=}
-2(\mathrm P^2)^2(\mathrm P^1)^2-8\mathrm P^2\mathrm P^1-4
+2\mathrm P^3(\mathrm P^1)^3+2(\mathrm P^2)^3\mathrm P^0
-2\mathrm P^3\mathrm P^2\mathrm P^1\mathrm P^0+6\mathrm P^2\mathrm P^1
}
\\&\phantom{:} 
=(\mathrm P^3)^2(\mathrm P^0)^2-6\mathrm P^3\mathrm P^2\mathrm P^1\mathrm P^0-3(\mathrm P^2)^2(\mathrm P^1)^2
+4(\mathrm P^2)^3\mathrm P^0+4\mathrm P^3(\mathrm P^1)^3
+3\mathrm P^2\mathrm P^1-9\mathrm P^3\mathrm P^0
\\&\phantom{:} 
=-12y\mathrm D_x^3-3x^2\mathrm D_x^2+18xy\mathrm D_x\mathrm D_y
+9y^2\mathrm D_y^2+3x\mathrm D_x+(4x^3+27y)\mathrm D_y.
\end{split}
\end{gather*}
We observe an interesting phenomenon in the algebra~$\Upsilon_{\mathfrak r}$.
The element~$\mathrm C$ of~$\Upsilon_{\mathfrak r}$ is a third-order differential operator.
At the same time, it is a linear combination of monomials in $(\mathrm P^3,\mathrm P^2,\mathrm P^1,\mathrm P^0)$ up to degree four and, in view of Lemma~\ref{lem:BasisUpsilonR},
it cannot be represented as a linear combination of monomials of degrees less than or equal to three.
Moreover, it can be proved that modulo linearly recombining with later monomials,
it is a unique element with such property
within the subspace of third-order differential operators in~$\Upsilon_{\mathfrak r}$.%
\footnote{%
We first derived this claim after computing the space~$\Sigma^3$ of generalized symmetries
of the equation~\eqref{eq:RemarkableFP},
see the notation in Section~\ref{sec:RemarkableFPGenSyms} below.
It is also an obvious consequence of Theorem~\ref{thm:BasisOforderNOperators}.
}
The operator~$\mathrm C$ has a number of other specific properties.
In particular, the only third-order differentiation in it is $\mathrm D_x^3$,
it contains no zero-order term and its coefficients do not depend on~$t$.

\begin{theorem}\label{thm:BasisOforderNOperators}
A basis of the subspace $\Upsilon^{\ord}_n$ of differential operators of order less than or equal to $n\in\mathbb N_0$
in~$\Upsilon_{\mathfrak r}$ is constituted by the products
$(\mathrm P^3)^{i_3}(\mathrm P^2)^{i_2}(\mathrm P^1)^{i_1}(\mathrm P^0)^{i_0}$,
where ${i_0,i_1,i_2,i_3\in\mathbb N_0}$ with $i_0+i_1+i_2+i_3\leqslant n$, and
$\mathrm C^m(\mathrm P^3)^{i_3}(\mathrm P^2)^{i_2}(\mathrm P^1)^{i_1}(\mathrm P^0)^{i_0}$,
where ${i_0,i_1,i_2,i_3\in\mathbb N_0}$ and $m\in\mathbb N$ with $i_0+i_1+i_2+i_3+3m=n$.
\end{theorem}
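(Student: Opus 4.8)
The plan is to play the two filtrations against each other through the commutative associated graded ring of the degree filtration. Since every commutator $[\mathrm P^i,\mathrm P^j]$ is a scalar, by Lemma~\ref{lem:BasisUpsilonR} the associated graded algebra $S:=\mathrm{gr}^{F_2}\Upsilon_{\mathfrak r}$ is the polynomial ring $\mathbb R[\bar{\mathrm P}^0,\bar{\mathrm P}^1,\bar{\mathrm P}^2,\bar{\mathrm P}^3]$ on the symbols $\bar{\mathrm P}^j$ of the generators. Taking the order-$d$ principal symbol of a degree-$d$ element defines a graded $\mathbb R$-algebra homomorphism $\Psi\colon S\to\mathbb R[t,\xi,\eta]$ with $\bar{\mathrm P}^j\mapsto p_j$, where $p_0=\eta$, $p_1=\xi+t\eta$, $p_2=2t\xi+t^2\eta$, $p_3=3t^2\xi+t^3\eta$. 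Writing $\mathrm{td}(\mathrm Q)\in S_d$ for the top-degree part of $\mathrm Q$, the number $\Psi(\mathrm{td}(\mathrm Q))$ is precisely the order-$d$ symbol of $\mathrm Q$, so that $\ord\mathrm Q=\deg\mathrm Q$ holds exactly when $\Psi(\mathrm{td}(\mathrm Q))\neq0$.

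The technical heart of the argument is the identity $\ker\Psi=(\bar{\mathrm C})$, where $\bar{\mathrm C}=\mathrm{td}(\mathrm C)\in S_4$ is the degree-four top part of the Casimir counterpart. Since $\Psi(S)$ is a domain, $\ker\Psi$ is prime, and it contains $\bar{\mathrm C}$ because $\ord\mathrm C=3<4$. The subring $\Psi(S)$ contains $\eta$, $\xi+t\eta$ and $\xi^2=\Psi\big((\bar{\mathrm P}^1)^2-\bar{\mathrm P}^2\bar{\mathrm P}^0\big)$, which are algebraically independent, so $\Psi(S)$ has transcendence degree three and $\ker\Psi$ has height one; as $S$ is a unique factorization domain, $\ker\Psi$ is principal, and since $\bar{\mathrm C}$ is irreducible (or, equivalently, by matching the Hilbert function of $\Psi(S)$ with that of $S/(\bar{\mathrm C})$) it is a generator. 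I expect this step, which encodes the ``degree four versus order three'' phenomenon, to be the main obstacle and the only place where ring theory and algebraic geometry genuinely enter.

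Granting $\ker\Psi=(\bar{\mathrm C})$, I would decompose the algebra along powers of $\mathrm C$. The monomials not divisible by the leading monomial of $\bar{\mathrm C}$ span a graded complement $\mathcal N$ to $(\bar{\mathrm C})$ in $S$, whence $S=\bigoplus_{m\geq0}\bar{\mathrm C}^{\,m}\mathcal N$; lifting $\mathcal N$ to the span $\tilde{\mathcal N}$ of the corresponding monomials $\mathbf P^\beta$ makes the induced map on $\mathrm{gr}^{F_2}$ an isomorphism, so $\Upsilon_{\mathfrak r}=\bigoplus_{m\geq0}\mathrm C^{\,m}\tilde{\mathcal N}$ and every $\mathrm Q$ is uniquely $\sum_m\mathrm C^{\,m}R_m$ with $R_m\in\tilde{\mathcal N}$. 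Because $\mathrm{td}(R_m)\notin(\bar{\mathrm C})$ one has $\ord(\mathrm C^{\,m}R_m)=3m+\deg R_m$, and since $\sigma_3(\mathrm C)=-12y\xi^3$ is a single monomial, the principal symbol of $\mathrm C^{\,m}R_m$ has $y$-degree exactly $m$. Summands with distinct $m$ therefore cannot cancel, which yields the clean formulas $\ord\mathrm Q=\max_m(3m+\deg R_m)$, $\deg\mathrm Q=\max_m(4m+\deg R_m)$ and $v_{\bar{\mathrm C}}(\mathrm{td}(\mathrm Q))=\min\{m:4m+\deg R_m=\deg\mathrm Q\}$. Taking $m_0$ realizing the last minimum gives at once the key inequality $\ord\mathrm Q\geq 3m_0+\deg R_{m_0}=\deg\mathrm Q-v_{\bar{\mathrm C}}(\mathrm{td}(\mathrm Q))$; equivalently, if $\ord\mathrm Q\leq n<\deg\mathrm Q=d$ then $\bar{\mathrm C}^{\,d-n}$ divides $\mathrm{td}(\mathrm Q)$.

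With this divisibility the theorem follows. For spanning I would induct on $\deg\mathrm Q=d$: if $d\leq n$ then $\mathrm Q$ is a combination of degree-$\le n$ monomials (type~(a)); if $d>n$, write $\mathrm{td}(\mathrm Q)=\bar{\mathrm C}^{\,d-n}g$ with $g\in S_{4n-3d}$ and take its monomial lift $\tilde g$, so that $\mathrm C^{\,d-n}\tilde g$ is a combination of type-(b) elements (indeed $3(d-n)+(4n-3d)=n$), has the same top-degree part as $\mathrm Q$ and has order $\le n$; hence $\mathrm Q-\mathrm C^{\,d-n}\tilde g$ has strictly smaller degree and order $\le n$, and induction applies. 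For linear independence I would use the degree filtration: a type-(b) element $\mathrm C^{\,m}\mathbf P^\alpha$ with $|\alpha|+3m=n$ has degree $n+m$ and top part $\bar{\mathrm C}^{\,m}\bar{\mathrm P}^\alpha$, so the distinct degrees $n+m$ separate the $m$-levels, and within each level the domain property of $S$ together with the independence of the monomials $\bar{\mathrm P}^\alpha$ forces the coefficients to vanish; the remaining type-(a) monomials are independent by Lemma~\ref{lem:BasisUpsilonR}. Combining spanning, independence, and the evident inclusion of both families in $\Upsilon^{\ord}_n$ then establishes the claimed basis.
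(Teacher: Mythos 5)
Your proposal is correct and its skeleton is the same as the paper's: compare the order and degree filtrations through their commutative associated graded algebras, show that the top-degree form of any operator with $\ord\mathrm Q<\deg\mathrm Q$ is divisible by the appropriate power of $\check{\mathrm C}:=\psi_2(\mathrm C)$, and then run the reduction and independence arguments. Where you genuinely diverge is in the two technical pivots. First, the paper establishes $\psi_2(\mathrm Q)\in(\check{\mathrm C})$ by passing through zero loci and invoking a form of Hilbert's Nullstellensatz (plus the observation that $(\check{\mathrm C})$ is prime, hence radical), whereas you identify $\ker\Psi$ directly as a height-one prime in the UFD $\mathbb R[x_0,x_1,x_2,x_3]$ --- prime because the target is a domain, height one by the transcendence-degree count using $\eta$, $\xi+t\eta$, $\xi^2$ --- and conclude that it is principal and generated by $\check{\mathrm C}$. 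This is cleaner and sidesteps the delicacy of applying the Nullstellensatz over the non-algebraically-closed field $\mathbb R$. Second, to pin down the exact power of $\check{\mathrm C}$ dividing the top form, the paper argues by a rather terse degree-counting contradiction, while you set up the decomposition $\Upsilon_{\mathfrak r}=\bigoplus_m\mathrm C^m\tilde{\mathcal N}$ and exploit the fact that the principal symbol of $\mathrm C$ is the single monomial $-12y\xi^3$, so that summands with distinct $m$ have principal symbols of distinct $y$-degrees and cannot cancel; this yields the exact formula $\ord\mathrm Q=\max_m(3m+\deg R_m)$ and is more explicit than the paper's version. The one ingredient you assert without proof --- and which both routes need --- is the irreducibility of $\check{\mathrm C}$; the paper settles it by analysing the only possible factorization $(x_3x_0+p)(x_3x_0+q)$, and you would need to supply such an argument (your parenthetical Hilbert-function alternative would likewise require an actual computation). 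With that supplied, your spanning induction on $\deg\mathrm Q$ and the degree-separation argument for linear independence match the paper's and are sound.
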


\begin{proof}
Consider the associated graded algebras~${\rm gr}_1\Upsilon_{\mathfrak r}$ and~${\rm gr}_2\Upsilon_{\mathfrak r}$
of the algebra~$\Upsilon_{\mathfrak r}$ with respect to the filtrations $F_1$ and $F_2$, respectively,
\begin{gather*}
{\rm gr}_1\Upsilon_{\mathfrak r}:=\bigoplus_{n=0}^\infty \Upsilon^{\ord}_n/\Upsilon^{\ord}_{n-1}
\quad\text{and}\quad
{\rm gr}_2\Upsilon_{\mathfrak r}:=\bigoplus_{n=0}^\infty \Upsilon^{\deg}_n/\Upsilon^{\deg}_{n-1},
\end{gather*}
assuming $\Upsilon^{\ord}_{-1}=\Upsilon^{\deg}_{-1}:=\{0\}$.
The algebra~$\Upsilon_{\mathfrak r}$ is related to~${\rm gr}_1\Upsilon_{\mathfrak r}$
and~${\rm gr}_2\Upsilon_{\mathfrak r}$
via the corresponding initial form maps $\psi_i\colon\Upsilon_{\mathfrak r}\to{\rm gr}_i\Upsilon_{\mathfrak r}$,
\[
\psi_1(\mathrm Q):=\pi^{1}_{\ord \mathrm Q-1}(\mathrm Q)\quad
\text{and}\quad
\psi_2(\mathrm Q):=\pi^{2}_{\deg \mathrm Q-1}(\mathrm Q),
\quad
\mathrm Q\in\Upsilon_{\mathfrak r},
\]
where $\pi^{1}_n\colon\Upsilon_{\mathfrak r}\to\Upsilon_{\mathfrak r}/\Upsilon^{\ord}_n$
and $\pi^{2}_n\colon\Upsilon_{\mathfrak r}\to\Upsilon_{\mathfrak r}/\Upsilon^{\deg}_n$
are the canonical projections.
Properties of the commutator of differential operators
and the presentation~\eqref{eq:UpsilonPresentation} of the algebra~$\Upsilon_{\mathfrak r}$ straightforwardly imply
that the algebras~${\rm gr}_1\Upsilon_{\mathfrak r}$ and~${\rm gr}_2\Upsilon_{\mathfrak r}$ are commutative.
Moreover, the algebra ${\rm gr}_2\Upsilon_{\mathfrak r}$ is the polynomial algebra
$\mathbb R[x_0,x_1,x_2,x_3]$ in the variables $x_j:=\psi_2(\mathrm P^j)$, $j=0,1,2,3$.
Extending $\psi_1$ to the algebra of differential operators in the total derivatives
with respect to~$x$ and~$y$ with coefficients depending on $(t,x,y)$,
we denote $X:=\psi_1({\rm D}_x)$ and $Y:=\psi_1({\rm D}_y)$.
Then
\begin{gather*}
\psi_1(\mathrm P^0):=Y,\quad
\psi_1(\mathrm P^1):=X+tY,\quad
\psi_1(\mathrm P^2):=2tX+t^2Y,\quad
\psi_1(\mathrm P^3):=3t^2X+t^3Y,
\end{gather*}
and the algebra ${\rm gr}_1\Upsilon_{\mathfrak r}$ can be identified with the polynomial algebra
\[\mathbb R[Y,\,X+tY,\,2tX+t^2Y,\,3t^2X+t^3Y].\]

The subspace inclusions $i_n\colon\Upsilon^{\deg}_n\hookrightarrow\Upsilon^{\ord}_n$, $n\in\mathbb N_0\cup\{-1\}$,
jointly give rise to an algebra homomorphism $f\colon{\rm gr}_2\Upsilon_{\mathfrak r}\to{\rm gr}_1\Upsilon_{\mathfrak r}$
that makes the following diagram commutative for each $n\in\mathbb N_0\cup\{-1\}$:
\begin{center}
\begin{tikzcd}
\Upsilon^{\deg}_n\arrow[d,"\psi_2"]\arrow[r,hookrightarrow,"i_n"]&\Upsilon^{\ord}_n\arrow[d,"\psi_1"]\\
{\rm gr}_2\Upsilon_{\mathfrak r}\arrow[r,"f"]  &{\rm gr}_1\Upsilon_{\mathfrak r}
\end{tikzcd}
\end{center}
The map $f$ is defined elementwise via the correspondence
\[\mathrm Q+\Upsilon^{\deg}_{\deg\mathrm Q-1}\mapsto\psi_1(\mathrm Q)+\Upsilon^{\ord}_{\deg\mathrm Q-1}.\]
It is straightforward to verify that it is a well-defined unital homomorphism of associative algebras,
and $f(x_j)=\psi_1(\mathrm P^j)$, $j=0,1,2,3$.
In other words, the image of a differential operator~$\mathrm Q\in\Upsilon_{\mathfrak r}$ under the composition $f\circ\psi_2$
is its formal symbol if $\ord \mathrm Q=\deg \mathrm Q$, and it is zero otherwise.

The property $f\circ\psi_2(\mathrm C)=0$ of the Casimir element $\mathrm C\in\Upsilon_{\mathfrak r}$
is equivalent to the fact that the solution set of the polynomial equation $\check{\mathrm C}=0$, where
\[
\check{\mathrm C}:=\psi_2(\mathrm C)=x_3^2x_0^2-6x_3x_2x_1x_0-3x_2^2x_1^2+4x_2^3x_0+4x_3x_1^3,
\]
is a hypersurface in~$\mathbb R^4$ with the parameterization
\begin{gather*}
x_3=3t^2X+t^3Y,\quad
x_2=2tX+t^2Y,\quad
x_1=X+tY,\quad
x_0=Y,
\end{gather*}
where $(t,X,Y)$ is considered as the coordinate tuple of the affine space $\mathbb R^3$.

If $\deg \mathrm Q>\ord\mathrm Q$, then $f\circ\psi_2(\mathrm Q)=0$,
and thus the zero locus of the polynomial $\check{\mathrm C}$
is contained in the zero locus of the polynomial $\check{\mathrm Q}:=\psi_2(\mathrm Q)$.
In other words, the vanishing ideal of the hypersurface~$\check{\mathrm Q}=0$ in the polynomial algebra $\mathbb R[x_0,x_1,x_2,x_3]$
is contained in the vanishing ideal of the hypersurface~$\check{\mathrm C}=0$ in this algebra.
Therefore, by Hilbert's Nullstellensatz in the form~\cite[Chapter~VII, Theorem~14]{zari1960A},
the polynomial $\check{\mathrm Q}$ belongs to the radical of the principal ideal $I:=(\check{\mathrm C})$ in $\mathbb R[x_0,x_1,x_2,x_3]$,
i.e., there exists $m\in\mathbb N$ such that $\check{\mathrm Q}^m\in I$.

We show that the polynomial~$\check{\mathrm C}$ is irreducible.
Assume to the contrary that it is reducible.
Since the multipliers $x_3^2$ and $x_0^2$ appear only in the monomial $x_3^2x_0^2$ in~$\check{\mathrm C}$
and both~$x_0$ and~$x_3$ do not divide~$\check{\mathrm C}$,
the only possible factorization of~$\check{\mathrm C}$ is
\[(x_3x_0+p)(x_3x_0+q)\]
for some homogeneous second-degree polynomials $p,q\in\mathbb R[x_0,x_1,x_2,x_3]$
that are affine with respect to $(x_0,x_3)$.
Hence $p+q=-6x_1x_2$ and $pq=-3x_2^2x_1^2+4x_2^3x_0+4x_3x_1^3$.
Up to the permutation of~$p$ and~$q$, we can assume
that $q$ does not involve $x_0$ and $x_3$.
Then $q$ divides both $x_2^3$ and~$x_1^3$, which is impossible if $q$ is not a constant.

The irreducibility of $\check{\mathrm C}$ implies its primality
since the algebra $\mathbb R[x_0,x_1,x_2,x_3]$ is a unique factorization domain.
This is why the ideal $I=(\check{\mathrm C})$ is prime.
Hence it is radical as well, i.e., it coincides with its radical
$\sqrt I:=\{g\in\mathbb R[x_0,x_1,x_2,x_3]\mid g^m\in I\text{ for some }m\in\mathbb N\}$.

Moreover, if $\deg\mathrm Q-\ord\mathrm Q=:m\in\mathbb N$,
then the polynomial~$\check{\mathrm Q}$ is a linear combination
of monomials of the form $\check{\mathrm C}^mx_3^{i_3}x_2^{i_2}x_1^{i_1}x_0^{i_0}$,
where $i_3+i_2+i_1+i_0=\ord\mathrm Q-3m=\deg\mathrm Q-4m$.
Indeed, in the light of the above arguments, the polynomial~$\check{\mathrm Q}$
is of the form $\check{\mathrm C}^lF$ for some $l\in\mathbb N$,
where $F$ is a homogeneous polynomial of the degree $\deg\mathrm Q-4l$
with $F\notin (\check{\mathrm C})$.
This implies that $l\leqslant m$.
Assuming that $l<m$, by elementary degree counting we have $\deg F>\ord F$,
which thus gives us that $F\in(\check{\mathrm C})$.
This contradiction proves the required claim.

As a result, we prove that the set $\mathcal B$ of the products listed in the theorem's statement
spans the subspace~$\Upsilon^{\ord}_n$.

Consider the linear combination
\begin{gather*}
\begin{split}
\mathrm Q:=&\sum_{j=1}^{k+1}\sum_{|i|=n-3j}\lambda_{ji_0i_1i_2i_3}\mathrm C^j
(\mathrm P^3)^{i_3}(\mathrm P^2)^{i_2}(\mathrm P^1)^{i_1}(\mathrm P^0)^{i_0}
\\&{}+\sum_{|i|\leqslant n}\lambda_{0i_0i_1i_2i_3}
(\mathrm P^3)^{i_3}(\mathrm P^2)^{i_2}(\mathrm P^1)^{i_1}(\mathrm P^0)^{i_0},
\end{split}
\end{gather*}
where $|i|:=i_0+i_1+i_2+i_3$ and $\lambda_{ji_0i_1i_2i_3}\in\mathbb R$.
Suppose that $\mathrm Q=0$. Then
\begin{gather}\label{eq:LinIndepCondition}
\psi_2(\mathrm Q)=\sum_{j=1}^{k+1}\hat{\mathrm C}^j\sum_{|i|=n-3j}\lambda_{ji_0i_1i_2i_3}x_3^{i_3}x_2^{i_2}x_1^{i_1}x_0^{i_0}
+\sum_{|i|\leqslant n}\lambda_{0i_0i_1i_2i_3}x_3^{i_3}x_2^{i_2}x_1^{i_1}x_0^{i_0}=0,
\end{gather}
where we assign $j=0$ for the terms in the last sum.
We have $\deg \hat{\mathrm C}^jx_3^{i_3}x_2^{i_2}x_1^{i_1}x_0^{i_0}=|i|+4j$.
Since all monomials in~\eqref{eq:LinIndepCondition} are different,
they are linearly independent, and thus $\lambda_{ji_0i_1i_2i_3}=0$ for all relevant values of~$(j,i_0,i_1,i_2,i_3)$.
We obtain that the set $\mathcal B$ is linearly independent.
Therefore, it is a basis of the subspace~$\Upsilon^{\ord}_n$.
\end{proof}

\begin{corollary}\label{cor:Upsilon_nDim}
The dimension of the subspace $\Upsilon^{\ord}_n$ of the algebra~$\Upsilon_{\mathfrak r}$,
which consists of differential operators of order less than or equal to $n$, is
\begin{gather}\label{eq:DimOperOrderLessEqN}
\dim\Upsilon^{\ord}_n=
\begin{cases}
\frac1{18}(n+1)(n+3)(n^2+4n+6)
&\mbox{if}\quad n\equiv0~\mbox{or}~2\pmod3,
\\[.5ex]
\frac1{18}(n+2)^2(n^2+4n+5)
&\mbox{if}\quad n\equiv1\pmod3.
\end{cases}
\end{gather}
The dimension of the quotient space $\Upsilon^{\ord}_n/\Upsilon^{\ord}_{n-1}$
associated with the $n$th order differential operators in the algebra $\Upsilon_{\mathfrak r}$ is
\begin{gather*}
\dim\Upsilon^{\ord}_n/\Upsilon^{\ord}_{n-1}=
\begin{cases}
\frac19(2n+3)(n^2+3n+3)
&\mbox{if}\quad n\equiv0\pmod3,
\\[.5ex]
\frac19(n+2)(2n^2+5n+5)
&\mbox{if}\quad n\equiv1\pmod3,
\\[.5ex]
\frac19(n+1)(2n^2+7n+8)
&\mbox{if}\quad n\equiv2\pmod3.
\end{cases}
\end{gather*}
\end{corollary}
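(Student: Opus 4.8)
The plan is to count the explicit basis of $\Upsilon^{\ord}_n$ furnished by Theorem~\ref{thm:BasisOforderNOperators} and then to read off the dimension of the quotient $\Upsilon^{\ord}_n/\Upsilon^{\ord}_{n-1}$ by differencing, using that $\dim\Upsilon^{\ord}_n/\Upsilon^{\ord}_{n-1}=\dim\Upsilon^{\ord}_n-\dim\Upsilon^{\ord}_{n-1}$ for the nested spaces $\Upsilon^{\ord}_{n-1}\subseteq\Upsilon^{\ord}_n$. The theorem provides two disjoint families of basis elements. The first consists of the monomials $(\mathrm P^3)^{i_3}(\mathrm P^2)^{i_2}(\mathrm P^1)^{i_1}(\mathrm P^0)^{i_0}$ with $i_0+i_1+i_2+i_3\leqslant n$; a standard stars-and-bars count gives $\binom{k+3}3$ tuples with $i_0+i_1+i_2+i_3=k$, so this family contributes $\sum_{k=0}^n\binom{k+3}3=\binom{n+4}4$ by the hockey-stick identity. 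The second family consists of the elements $\mathrm C^m(\mathrm P^3)^{i_3}(\mathrm P^2)^{i_2}(\mathrm P^1)^{i_1}(\mathrm P^0)^{i_0}$ with $m\in\mathbb N$ and $i_0+i_1+i_2+i_3=n-3m\geqslant0$, contributing $\binom{n-3m+3}3$ elements for each admissible $m$, that is, for $1\leqslant m\leqslant\lfloor n/3\rfloor$. Hence
\[
\dim\Upsilon^{\ord}_n=\binom{n+4}4+\sum_{m=1}^{\lfloor n/3\rfloor}\binom{n-3m+3}3 .
\]

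First I would evaluate the remaining sum. Writing $n=3q+r$ with $q=\lfloor n/3\rfloor$ and $r\in\{0,1,2\}$ and reindexing by $s=q-m+1$, the sum turns into $\sum_{s=1}^q\binom{3s+r}3$, i.e., the sum of the cubic polynomial $\binom{3s+r}3$ over $s=1,\dots,q$. Evaluating it with the power-sum formulas for $\sum s$, $\sum s^2$, $\sum s^3$ produces a quartic in $q$ whose coefficients depend on the residue $r$; this dependence is exactly the source of the case distinction modulo $3$ in the statement. Adding $\binom{n+4}4$, re-expressing $q$ and $r$ through $n$, and factoring then yields the two stated closed forms: the residues $r=0$ and $r=2$ collapse to the common polynomial $\tfrac1{18}(n+1)(n+3)(n^2+4n+6)$, whereas $r=1$ gives $\tfrac1{18}(n+2)^2(n^2+4n+5)$. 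The arithmetic is easily cross-checked on small values, e.g.\ $\dim\Upsilon^{\ord}_0=1$, $\dim\Upsilon^{\ord}_1=5$, and $\dim\Upsilon^{\ord}_3=36$.

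Finally, the dimension of $\Upsilon^{\ord}_n/\Upsilon^{\ord}_{n-1}$ follows by subtracting the closed form at $n-1$ from that at $n$. Here the residues must be tracked with care: passing from $n$ to $n-1$ shifts the congruence class, so each of the three classes of $n$ pairs one branch of the $\dim\Upsilon^{\ord}_n$ formula with a \emph{different} branch of the $\dim\Upsilon^{\ord}_{n-1}$ formula, which is precisely why the quotient acquires three cases rather than two. Carrying out the three subtractions and simplifying yields the stated expressions. I expect the only genuine obstacle to be this bookkeeping, namely closing the arithmetic-progression sum $\sum_{s=1}^q\binom{3s+r}3$ and matching residues correctly when differencing. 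I would explicitly avoid the tempting shortcut of reading the quotient dimension off the basis as ``the number of order-$n$ basis elements'': since a linear combination of order-$n$ basis monomials can drop in order (this is exactly the behaviour exhibited by the Casimir element $\mathrm C$, whose degree is four but order three), the order-$(<n)$ basis elements of $\Upsilon^{\ord}_n$ fail to span $\Upsilon^{\ord}_{n-1}$, and only the dimension-differencing argument is safe.
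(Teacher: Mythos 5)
Your proposal is correct and follows essentially the same route as the paper: it counts the explicit basis from Theorem~\ref{thm:BasisOforderNOperators}, obtaining exactly the paper's expression $\sum_{k=0}^n\binom{k+3}3+\sum_{m=1}^{\lfloor n/3\rfloor}\binom{n-3m+3}3$, and then computes the quotient dimension by differencing the nested subspaces. The only cosmetic difference is that you close the sum directly via power-sum formulas with a residue-class analysis, whereas the paper verifies the closed form by induction on~$n$; your closing caveat about not reading the quotient dimension off the basis naively is also well taken, since the listed basis of $\Upsilon^{\ord}_{n-1}$ is not a subset of that of $\Upsilon^{\ord}_n$.
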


\begin{proof}
In view of Theorem~\ref{thm:BasisOforderNOperators},
the dimension of the space $\Upsilon^{\ord}_n$ is
\begin{gather*}
\dim\Upsilon^{\ord}_n
=\sum_{k=0}^n\binom{k+3}3+\sum_{k=1}^{\lfloor n/3\rfloor}\binom{n-3(k-1)}3,
\end{gather*}
where $\lfloor x\rfloor$ denotes the ``floor'' function.
By the induction with respect to the parameter $n\in\mathbb N_0$,
one can show that the above sum coincides with the value given in~\eqref{eq:DimOperOrderLessEqN}.

Since $\Upsilon^{\ord}_{n-1}\subset\Upsilon^{\ord}_n$,
we have $\dim\Upsilon^{\ord}_n/\Upsilon^{\ord}_{n-1}=\dim\Upsilon^{\ord}_n-\dim\Upsilon^{\ord}_{n-1}$.
\end{proof}

\section{Polynomial solutions}\label{sec:RemarkableFPPolynomialSols}

Given a linear system of differential equations~$\mathcal L$
and its recursion operator~$\mathrm Q$ that is a linear differential operator in total derivatives
with coefficients depending only on the system's independent variables,
we call $\mathrm Q$ a \emph{linear differential recursion operator} of~$\mathcal L$.
Further, the function~$\mathrm Qh$ is a solution of~$\mathcal L$ whenever the function~$h$ is~\cite{kova2024a},
see~\cite{shte1987a,shte1989a} for first examples of generating solutions of linear differential equations
using this approach.
The operators~$\mathrm P^2$ and~$\mathrm P^3$ are recursion operators%
\footnote{%
Since the equation~\eqref{eq:RemarkableFP} is a linear partial differential equation, it is not a surprise
that its basic recursion operators~$\mathrm P^0$, $\mathrm P^1$, $\mathrm P^2$ and $\mathrm P^3$
(see Lemma~\ref{lem:RemarkableFPLinGenSyms} below)
are usual linear differential operators in total derivatives,
but not pseudodifferential ones as commonly happens for nonlinear differential equations.
This is why it suffices to use here the formal interpretation of recursion operators
in the sense of~\cite{olve1977a} and \cite[Definition~5.20]{olve1993A}
and not to involve the more advanced interpretation of them as certain B\"acklund transformations.
The later interpretation was suggested in~\cite{papa1990a}, more explicitly formulated in~\cite{guth1994a},
further developed in~\cite{marv1996a} and intensively applied later
\cite{jano2024a,marv2004b,marv2003a,serg2017a,serg2022a}.
}
of the equation~\eqref{eq:RemarkableFP}
and~$u=1$ is its solution.
Hence this equation possesses the solutions $(\mathrm P^3)^k(\mathrm P^2)^l1$, $k,l\in\mathbb N_0$,
which are polynomials of $(t,x,y)$ and are linearly independent.
Moreover, as the following lemma states,
these solutions exhaust, up to linearly combining them,
all solutions of this equation that are polynomial with respect to~$x$.
\looseness=-1

\begin{lemma}\label{lem:RemarkableFPPolynomialSols}
The space~$\mathscr P_n$ of solutions of the remarkable Fokker--Planck equation~\eqref{eq:RemarkableFP}
that are polynomials with respect to~$x$ of degree less than or equal to~$n\in\mathbb N_0$
with coefficients depending on $(t,y)$ is of dimension $(n+1)(n+2)/2$.
All of its elements are polynomial with respect to the entire tuple of independent variables $(t,x,y)$
and it admits a basis consisting of the polynomials $(\mathrm P^3)^k(\mathrm P^2)^l1$, $0\leqslant k+l\leqslant n$.
\end{lemma}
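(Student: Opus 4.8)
The plan is to dispose of the easy halves first and then reduce everything to a single kernel computation. The functions $(\mathrm P^3)^k(\mathrm P^2)^l1$ are solutions because $\mathrm P^2$ and $\mathrm P^3$ are recursion operators and $u=1$ solves~\eqref{eq:RemarkableFP}, and they are manifestly polynomial in $(t,x,y)$. A short bookkeeping of leading $x$-terms shows that each of $\mathrm P^2$ and $\mathrm P^3$ raises the $x$-degree of its argument by exactly one (the leading contributions come from the zero-order multipliers $x$ and $3tx$), so, since $\mathrm P^3$ and $\mathrm P^2$ commute, $(\mathrm P^3)^k(\mathrm P^2)^l1$ has $x$-degree exactly $k+l$ with leading coefficient $(3t)^k$. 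These $(n+1)(n+2)/2$ functions are then linearly independent: grouping any vanishing combination by $x$-degree and using that the $x^{k+l}$-coefficient of $(\mathrm P^3)^k(\mathrm P^2)^l1$ equals $(3t)^k$, while the powers $(3t)^k$ are linearly independent over $\mathbb R$, a downward induction on the degree forces all coefficients to vanish. It therefore remains to prove the reverse estimate $\dim\mathscr P_n\leqslant(n+1)(n+2)/2$, from which the basis property, the exact dimension, and the polynomiality of \emph{all} elements of $\mathscr P_n$ follow simultaneously.

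I would prove this estimate by induction on $n$, using $\mathrm P^1=\mathrm D_x+t\mathrm D_y$ as a descent operator, with base case $n=0$ (where $u_t+xu_y=0$ forces $u$ to be constant, so $\mathscr P_0=\langle1\rangle$). Writing $u=\sum_{j=0}^na_j(t,y)x^j\in\mathscr P_n$ and reading off the coefficient of $x^{n+1}$ in~\eqref{eq:RemarkableFP} gives $\partial_ya_n=0$; hence the leading coefficient is $y$-independent and $\mathrm P^1u=u_x+tu_y$ has $x$-degree at most $n-1$, so $\mathrm P^1$ restricts to a map $\mathscr P_n\to\mathscr P_{n-1}$. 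From $[\mathrm P^1,\mathrm P^3]=0$, $[\mathrm P^1,\mathrm P^2]=1$ and $\mathrm P^11=0$ one obtains the action on the candidate functions, $\mathrm P^1(\mathrm P^3)^k(\mathrm P^2)^l1=l\,(\mathrm P^3)^k(\mathrm P^2)^{l-1}1$. By the induction hypothesis $\mathrm P^1u$ is a combination of these functions in $\mathscr P_{n-1}$, and replacing each monomial $(\mathrm P^3)^k(\mathrm P^2)^l1$ appearing in $\mathrm P^1u$ by $\tfrac1{l+1}(\mathrm P^3)^k(\mathrm P^2)^{l+1}1$ produces $v\in\mathscr P_n$ in the candidate span with $\mathrm P^1v=\mathrm P^1u$. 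Then $w:=u-v$ lies in $\ker(\mathrm P^1|_{\mathscr P_n})$, and the induction closes provided this kernel is contained in the candidate span.

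The heart of the argument is thus the computation of $\ker(\mathrm P^1|_{\mathscr P_n})$. Solving $\mathrm P^1w=w_x+tw_y=0$ coefficient-by-coefficient in $x$ shows that $w=\phi(t,\eta)$ in the characteristic coordinate $\eta:=y-tx$, with $\phi$ polynomial in $\eta$ of degree at most $n$. Substituting $w=\phi(t,y-tx)$ into~\eqref{eq:RemarkableFP} collapses it to the single reduced equation $\phi_t=t^2\phi_{\eta\eta}$. Expanding $\phi=\sum_{i=0}^n\psi_i(t)\eta^i$ turns this into the two-step recurrence $\psi_i'=(i+1)(i+2)t^2\psi_{i+2}$ with $\psi_i=0$ for $i>n$; integrating from the top of each parity chain shows that every $\psi_i$ is a polynomial in $t$ uniquely determined by the $n+1$ initial values $\psi_i(0)$, whence $\dim\ker(\mathrm P^1|_{\mathscr P_n})\leqslant n+1$. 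Since the $n+1$ functions $(\mathrm P^3)^k1$, $0\leqslant k\leqslant n$, already lie in this kernel (because $\mathrm P^11=0$ and $\mathrm P^1$ commutes with $\mathrm P^3$) and are independent, they span it; in particular $w=u-v$ lies in the candidate span, which closes the induction.

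I expect the main obstacle to be precisely this kernel step: recognizing that $\mathrm P^1$ is the correct operator to descend along, identifying the characteristic coordinate $\eta=y-tx$ that linearizes it, and checking that the reduced equation $\phi_t=t^2\phi_{\eta\eta}$ admits exactly an $(n+1)$-parameter family of $\eta$-polynomial solutions. The remaining ingredients — the leading-term bookkeeping, the integration-back construction of $v$, and the base case — are routine. A pleasant feature of this route is that polynomiality in the full tuple $(t,x,y)$ is never assumed for the coefficients $a_j(t,y)$: it emerges automatically once $\mathscr P_n$ is identified with the span of the manifestly polynomial functions $(\mathrm P^3)^k(\mathrm P^2)^l1$.
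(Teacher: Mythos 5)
Your proof is correct, but it takes a genuinely different route from the paper's. The paper argues entirely at the level of the coefficients $f^j(t,y)$ of the $x$-expansion: it substitutes the ansatz into~\eqref{eq:RemarkableFP}, derives the coupled system $\Delta_j\colon f^j_t+f^{j-1}_y=(j+1)(j+2)f^{j+2}$, runs a downward induction on $j$ with the differential consequences $\p_y^{n-j+1}\Delta_j$ to show each $f^j$ is polynomial in $y$ of degree $\leqslant n-j$, then pins down the remaining freedom through a system of ordinary differential equations in $t$ for the zero-degree coefficients $\tilde f^j$, counting $\sum_{j=0}^n(j+1)=(n+1)(n+2)/2$ integration constants; the basis property then follows by matching this dimension against the $(n+1)(n+2)/2$ independent candidates. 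You instead set up an operator-theoretic descent: $\mathrm P^1$ maps $\mathscr P_n$ into $\mathscr P_{n-1}$ (using only the single coefficient identity $\p_ya_n=0$), the commutation relations $[\mathrm P^1,\mathrm P^2]=1$, $[\mathrm P^1,\mathrm P^3]=0$ let you lift $\mathrm P^1u$ back into the candidate span, and the whole problem reduces to computing $\ker(\mathrm P^1|_{\mathscr P_n})$ via the characteristic coordinate $\eta=y-tx$ and the reduced equation $\phi_t=t^2\phi_{\eta\eta}$, whose $\eta$-polynomial solution space of degree $\leqslant n$ is exactly $(n+1)$-dimensional and is already filled by $(\mathrm P^3)^k1$, $0\leqslant k\leqslant n$. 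I checked the individual steps (the leading-coefficient computation $(3t)^k$ for linear independence, the formula $\mathrm P^1(\mathrm P^3)^k(\mathrm P^2)^l1=l(\mathrm P^3)^k(\mathrm P^2)^{l-1}1$, the identification $w=b_0(t,y-tx)$ from the recurrence $b_{j+1}=-\tfrac{t}{j+1}\p_yb_j$, and the parity-chain integration of $\psi_i'=(i+1)(i+2)t^2\psi_{i+2}$) and they all hold. Your approach buys a structurally cleaner conclusion --- it exhibits $\mathscr P_n$ directly as the candidate span rather than comparing two dimension counts, and polynomiality in $(t,x,y)$ drops out for free --- at the price of the extra insight about descending along $\mathrm P^1$; the paper's computation is more pedestrian but also more self-contained, needing no commutator bookkeeping.
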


\begin{proof}
Substituting the general form $u=\sum_{j=0}^nf^j(t,y)x^j$
of polynomials with respect to~$x$ of degree less than or equal to~$n\in\mathbb N_0$
into the equation~\eqref{eq:RemarkableFP} and splitting with respect to~$x$, we derive the system
\[
\Delta_j\colon\quad f^j_t+f^{j-1}_y=(j+1)(j+2)f^{j+2},\quad j=0,\dots,n+1,
\]
where the equation~$\Delta_j$ is obtained by collecting coefficients of~$x^j$,
and we assume that $f^j=0$ if $j<0$ or $j>n$.
The equations~$\Delta_{n+1}$ and~$\p_y\Delta_n$ take the form $f^n_y=0$ and $f^{n-1}_{yy}=0$, respectively.
Continuing by the induction with respect to~$j$ down to $j=1$
with the differential consequences $\p_y^{n-j+1}\Delta_j$,
we obtain that $\p_y^{n-j+1}f^j=0$, $j=0,\dots,n$,
i.e., $f^j$ is a polynomial with respect to~$y$ of degree less than or equal to~$n-j$
with coefficients depending on $t$.
More specifically, the equations~$\Delta_{n+1}$, $\Delta_n$, $\Delta_{n-1}$, $\Delta_j$, $j=n-2,\dots,2,1$,
respectively take the form
\begin{gather*}
f^n_y=0,\quad
f^{n-1}_y=-f^n_t,\quad
f^{n-2}_y=-f^{n-1}_t,\\
f^j_y=-f^{j+1}_t+(j+2)(j+3)f^{j+3},\quad j=n-3,\dots,1,0.
\end{gather*}
Therefore,
$f^n=\tilde f^n(t)$,
$f^{n-1}=-\tilde f^n_t(t)y+\tilde f^{n-1}(t)$,
$f^{n-2}=\frac12\tilde f^n_{tt}(t)y^2-\tilde f^{n-1}_t(t)y+\tilde f^{n-2}(t)$.
In general, $\tilde f^j$ denotes the coefficient of~$y^0$ in~$f^j$.
By the induction with respect to~$j$ down to $j=0$,
we can show that the coefficients of~$y^{n-j}$ and~$y^{n-j-1}$ in $f^j$
are equal to $(-1)^{n-j}\p_t^{n-j}\tilde f^n/(n-j)!$ and $(-1)^{n-j-1}\p_t^{n-j-1}\tilde f^{n-1}/(n-j-1)!$,
respectively.
Moreover, the other coefficients of~$f^j$ as a polynomial in~$y$,
except the zero-degree coefficient~$\tilde f^j$,
are expressed in terms of derivatives of~$\tilde f^i$, $i>j$, with respect to~$t$.
Then the equation~$\Delta_0$: $f^0_t=2f^2$ implies
$\p_t^{n+1}\tilde f^n=0$, $\p_t^n\tilde f^{n-1}=0$
and $\p_t^{n-j+1}\tilde f^{n-j}=g^{n-j}$, $j=2,\dots,n$,
where $g^{n-j}$ is a polynomial in~$t$ expressed in terms of derivatives of~$\tilde f^{n-i}$, $i<j$, with respect to~$t$.
The dimension of the solution space of the system for~$\tilde f^j$, $j=0,\dots,n$,
is $(n+1)(n+2)/2$ and coincides with $\dim\mathscr P_n$.

The polynomial solutions $(\mathrm P^3)^k(\mathrm P^2)^l1$, $0\leqslant k+l\leqslant n$,
of the equation~\eqref{eq:RemarkableFP} are linearly independent.
Their number is equal to $(n+1)(n+2)/2$ as well.
Therefore, these polynomials constitute a basis of~$\mathscr P_n$.
\end{proof}

\begin{lemma}\label{lem:SolInhomEquation}
A particular solution of the inhomogeneous equation
$\mathrm Fu=t^r(\mathrm P^3)^i(\mathrm P^2)^j1$,
where $\mathrm F:=\mathrm D_t+x\mathrm D_y-\mathrm D_x^{\,\,2}$ and $i,j,r\in\mathbb N_0$,
is $u=(r+1)^{-1}t^{r+1}(\mathrm P^3)^i(\mathrm P^2)^j1$.
\end{lemma}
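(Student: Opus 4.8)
The plan is to establish the lemma by a direct verification, the decisive structural observation being that the operator $\mathrm F=\mathrm D_t+x\mathrm D_y-\mathrm D_x^{\,\,2}$ decomposes into the summand~$\mathrm D_t$, which differentiates with respect to~$t$, and the summand $x\mathrm D_y-\mathrm D_x^{\,\,2}$, which acts only in the variables~$(x,y)$ and therefore commutes with multiplication by any function of~$t$ alone.

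First I would abbreviate $v:=(\mathrm P^3)^i(\mathrm P^2)^j1$ and record that $\mathrm Fv=0$. Indeed, since $\mathrm P^2$ and~$\mathrm P^3$ are recursion operators of the equation~\eqref{eq:RemarkableFP} and $u=1$ is its trivial solution, the function~$v$ is again a solution of~\eqref{eq:RemarkableFP}, as already used in Lemma~\ref{lem:RemarkableFPPolynomialSols}. This is the only nontrivial input to the argument, and it is available beforehand.

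Next I would apply~$\mathrm F$ to the candidate numerator~$t^{r+1}v$. As only $\mathrm D_t$ sees the factor~$t^{r+1}$, the Leibniz rule yields
\begin{gather*}
\mathrm F\big(t^{r+1}v\big)
=(r+1)t^rv+t^{r+1}\big(\mathrm D_tv+x\mathrm D_yv-\mathrm D_x^{\,\,2}v\big)
=(r+1)t^rv+t^{r+1}\mathrm Fv
=(r+1)t^rv,
\end{gather*}
where the final equality invokes $\mathrm Fv=0$. Dividing by the nonzero constant $r+1$ then gives $\mathrm F\big((r+1)^{-1}t^{r+1}v\big)=t^rv$, which is precisely the assertion.

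There is essentially no obstacle here: the whole computation is one line once $\mathrm Fv=0$ is in hand, and the combinatorial factor $r+1$ arises solely from the $t$-derivative hitting $t^{r+1}$. The only conceptual remark worth making explicit is that multiplication by a power of~$t$ sends a solution of the homogeneous equation to a particular solution of the equation with right-hand side $t^rv$, which is exactly the mechanism that will later let one invert~$\mathrm F$ on the polynomial forcing terms.
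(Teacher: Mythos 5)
Your argument is correct and is essentially identical to the paper's proof: both note that $h:=(\mathrm P^3)^i(\mathrm P^2)^j1$ satisfies $\mathrm Fh=0$ and then apply the Leibniz rule to $\mathrm F\big((r+1)^{-1}t^{r+1}h\big)$, with only the $\mathrm D_t$ term acting on the factor $t^{r+1}$, to obtain $t^rh$. Nothing is missing.
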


\begin{proof}
Since $u=h:=(\mathrm P^3)^i(\mathrm P^2)^j1$
is a solution of the homogeneous counterpart~\eqref{eq:RemarkableFP}
of the equation to be solved, $\mathrm Fh=0$, we obtain
$\mathrm F\big((r+1)^{-1}t^{r+1}h\big)=t^rh+(r+1)^{-1}t^{r+1}\mathrm Fh=t^rh$.
\end{proof}

\section{Generalized symmetries}\label{sec:RemarkableFPGenSyms}

Hereafter, we use the following notation.
The jet variable~$u_{kl}$ is identified with the derivative $\p^{k+l}u/\p x^k\p y^l$, $k,l\in\mathbb N_0$.
In particular, $u_{00}:=u$.
The jet variables $(t,x,y,u_{kl},k,l\in\mathbb N_0)$
constitute the standard coordinates on the manifold~$\mathscr F$ defined by the equation~\eqref{eq:RemarkableFP}
and its differential consequences in the infinite-order jet space $\mathrm J^\infty(\mathbb R^3_{t,x,y}\times\mathbb R_u)$
with the independent variables $(t,x,y)$ and the dependent variable~$u$.
We consider differential functions defined on~$\mathscr F$, and
$\eta[u]$ denotes a differential function~$\eta$ of~$u$
that depends on $t$, $x$, $y$ and a finite number of $u_{kl}$.
Recall that the order~$\ord\eta[u]$ of a differential function~$\eta[u]$ is the highest order of derivatives of~$u$ involved in~$\eta[u]$
if there are such derivatives, and $\ord \eta[u]=-\infty$ otherwise.
For a generalized vector field $Q=\eta[u]\p_u$, we define $\ord Q:=\ord\eta[u]$.
The restrictions of the operators $\mathrm D_t$, $\mathrm D_x$ and~$\mathrm D_y$
of total derivatives in~$t$, $x$ and~$y$ to such differential functions on~$\mathscr F$
are respectively
\begin{gather*}
\hat{\mathrm D}_t=\p_t+\sum_{k,l=0}^\infty (u_{k+2,l}-xu_{k,l+1}-ku_{k-1,l+1})\p_{u_{kl}},\\
\hat{\mathrm D}_x=\p_x+\sum_{k,l=0}^\infty u_{k+1,l}\p_{u_{kl}},\quad
\hat{\mathrm D}_y=\p_y+\sum_{k,l=0}^\infty u_{k,l+1}\p_{u_{kl}}.
\end{gather*}

As for any evolution equation,
it is natural to identify the quotient algebra of generalized symmetries of~\eqref{eq:RemarkableFP}
with respect to the equivalence of generalized symmetries
with the algebra
\[
\Sigma:=\big\{\eta[u]\p_u\mid\mathrm F\eta[u]=0\big\}
\quad\mbox{with}\quad\mathrm F:=\hat{\mathrm D}_t+x\hat{\mathrm D}_y-\hat{\mathrm D}_x^{\,\,2}
\]
of canonical representatives of equivalence classes, see~\cite[Section~5.1]{olve1993A}.
The subspace
\[
\Sigma^n:=\big\{\eta[u]\p_u\in\Sigma\mid\ord\eta[u]\leqslant n\big\},\quad n\in\mathbb N_0\cup\{-\infty\},
\]
of~$\Sigma$ is interpreted as the space of generalized symmetries of order less than or equal to~$n$.
The subspace~$\Sigma^{-\infty}$ can be identified with the subalgebra~$\mathfrak g^{\rm lin}$
of Lie symmetries of the equation~\eqref{eq:RemarkableFP}
that are associated with the linear superposition of solutions of this equation,
\[\Sigma^{-\infty}=\{\mathcal Z(h):=h(t,x,y)\p_u\mid h_t+xh_y=h_{xx}\}\simeq\mathfrak g^{\rm lin}.\]

The subspace family $\{\Sigma^n\mid n\in\mathbb N_0\cup\{-\infty\}\}$ filters the algebra~$\Sigma$.
Denote
$\Sigma^{[n]}:=\Sigma^n/\Sigma^{n-1}$, $n\in\mathbb N$,
$\Sigma^{[0]}:=\Sigma^0/\Sigma^{-\infty}$ and
$\Sigma^{[-\infty]}:=\Sigma^{-\infty}$.
The space $\Sigma^{[n]}$ is naturally identified with the space of canonical representatives
of cosets of~$\Sigma^{n-1}$ in~$\Sigma^n$
and thus assumed as the space of $n$th order generalized symmetries of the equation~\eqref{eq:RemarkableFP},
$n\in\mathbb N_0\cup\{-\infty\}$.%
\footnote{
The filtration $\Sigma=\cup_{n\in\mathbb N_0\cup\{-\infty\}}\Sigma^n$ of the algebra $\Sigma$
gives rise to the associated graded algebra
$\mathop{\rm gr}\Sigma=\oplus_{n\in\mathbb N_0}\Sigma^{[n]}$,
where $\Sigma^{[n]}:=\Sigma^n/\Sigma^{n-1}$ with $\Sigma^{-1}:=\Sigma^{-\infty}$.
In this notation, the space $\Sigma^{[n]}$ is the homogeneous component of degree $n$ of the $\mathbb N_0$-graded algebra~$\mathop{\rm gr}\Sigma$.
}

In view of the linearity of the equation~\eqref{eq:RemarkableFP},
an important subalgebra of its generalized symmetries consists of its linear generalized symmetries,
\[
\Lambda:=\bigg\{\eta[u]\p_u\in\Sigma\ \Big|\ \exists\,n\in\mathbb N_0,\,
\exists\,\eta^{kl}=\eta^{kl}(t,x,y),\, k,l\in\mathbb N_0,\, k+l\leqslant n\colon\,
\eta[u]=\!\!\sum_{k+l\leqslant n}\eta^{kl}u_{kl}\bigg\}.
\]
The subspace $\Lambda^n:=\Lambda\cap\Sigma^n$ of~$\Lambda$ with $n\in\mathbb N_0$
is constituted by the generalized symmetries with characteristics of the form
\begin{gather}\label{eq:RemarkableFPLinGenSyms}
\eta[u]=\sum_{k+l\leqslant n}\eta^{kl}(t,x,y)u_{kl}.
\end{gather}
A linear generalized symmetry is of order~$n$ if and only if
there exists a nonvanishing coefficient~$\eta^{kl}$ with $k+l=n$.
The quotient spaces $\Lambda^{[n]}=\Lambda^n/\Lambda^{n-1}$, $n\in\mathbb N$, and the subspace $\Lambda^{[0]}=\Lambda^0$
are naturally embedded in the respective spaces $\Sigma^{[n]}$, $n\in\mathbb N_0$,
when taking linear canonical representatives for cosets of~$\Sigma^{n-1}$ containing linear generalized symmetries.
We interpret the space $\Lambda^{[n]}$ as the space of $n$th order linear generalized symmetries
of the equation~\eqref{eq:RemarkableFP}, $n\in\mathbb N_0$.

\begin{lemma}\label{lem:RemarkableFPLinGenSyms}
The algebra~$\Lambda$ coincides with the algebra~$\Lambda_{\mathfrak r}$ of linear generalized symmetries
generated by acting with the recursion operators~$\mathrm P^3$, $\mathrm P^2$, $\mathrm P^1$ and $\mathrm P^0$
on the elementary seed symmetry vector field $u\p_u$,
\begin{gather*}
\Lambda=\Lambda_{\mathfrak r}:=
\big\langle\big((\mathrm P^3)^{i_3}(\mathrm P^2)^{i_2}(\mathrm P^1)^{i_1}(\mathrm P^0)^{i_0}u\big)\p_u
\mid i_0,i_1,i_2,i_3\in\mathbb N_0\big\rangle.
\end{gather*}
\end{lemma}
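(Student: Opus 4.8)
The plan is to identify $\Lambda$ with the centralizer of the Kolmogorov operator in a ring of differential operators in $\mathrm D_x$ and $\mathrm D_y$, and then to compute this centralizer order by order, matching its dimension against the value $\dim\Upsilon^{\ord}_n$ obtained in Theorem~\ref{thm:BasisOforderNOperators}. First I would dispose of the inclusion $\Lambda_{\mathfrak r}\subseteq\Lambda$: since $\mathrm P^0,\mathrm P^1,\mathrm P^2,\mathrm P^3$ are recursion operators of~\eqref{eq:RemarkableFP}, every element of $\Lambda_{\mathfrak r}$ is a linear generalized symmetry. For the converse, note that a linear generalized symmetry of order $\le n$ has characteristic $\eta=\mathrm Qu$ with $\mathrm Q=\sum_{k+l\le n}\eta^{kl}(t,x,y)\mathrm D_x^k\mathrm D_y^l$, an operator in $\mathrm D_x,\mathrm D_y$ only. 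Writing $\mathrm L:=\mathrm D_t+x\mathrm D_y-\mathrm D_x^2$ for the Kolmogorov operator (the operator $\mathrm F$ of Lemma~\ref{lem:SolInhomEquation}), the symmetry condition means $\mathrm L\mathrm Qu=0$ on all solutions, i.e.\ $[\mathrm L,\mathrm Q]u$ vanishes on~$\mathscr F$. As $[\mathrm L,\mathrm Q]$ again involves only $\mathrm D_x,\mathrm D_y$ and the $u_{kl}$ are coordinates on~$\mathscr F$, this forces the operator identity $[\mathrm L,\mathrm Q]=0$. Thus $\Lambda^n$ is isomorphic to the order-$\le n$ part $C_n(\mathrm L)$ of the centralizer of~$\mathrm L$, and $\Upsilon^{\ord}_n\subseteq C_n(\mathrm L)$; it therefore suffices to prove $\dim C_n(\mathrm L)\le\dim\Upsilon^{\ord}_n$ for every~$n$.

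Next I would extract the determining system. Expanding $[\mathrm L,\mathrm Q]=0$ and collecting the coefficient of each $\mathrm D_x^a\mathrm D_y^b$ yields
\[
\mathrm L\eta^{ab}=(a+1)\eta^{a+1,b-1}+2\p_x\eta^{a-1,b},
\]
where $\mathrm L\eta^{ab}$ means $\mathrm L$ applied to the coefficient as a function of $(t,x,y)$ and $\eta^{ab}=0$ whenever an index is negative or $a+b>n$. I would analyze this by descending induction on the total order $N=a+b$. The vanishing of the order-$(n+1)$ part of $[\mathrm L,\mathrm Q]$ forces $\p_x\eta^{ab}=0$ for $a+b=n$; iterating the relation along $a=0,1,\dots$ from the boundary index $(0,n)$ then shows that the top-order coefficients depend on~$t$ alone and are linked by $t$-differentiation, while the boundary relation at $(n,0)$ imposes a differential constraint on $\eta^{0,n}$. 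This is structurally parallel to the system solved in the proof of Lemma~\ref{lem:RemarkableFPPolynomialSols}, so the space of admissible top-order parts is finite-dimensional and consists of polynomials in~$t$.

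I would then run the descending induction on the order: having determined the finite-dimensional space of admissible top-order parts, I match a given top-order part with the leading part of a suitable element of $\Upsilon^{\ord}_n$, subtract it, and reduce to an operator of order $\le n-1$ in $C(\mathrm L)$. The polynomial solutions of Lemma~\ref{lem:RemarkableFPPolynomialSols} parameterize the residual freedom at each stage, and Lemma~\ref{lem:SolInhomEquation} supplies the particular solutions needed to resolve the inhomogeneous coupling term $2\p_x\eta^{a-1,b}$. Carrying this out gives $\dim C_n(\mathrm L)=\dim\Upsilon^{\ord}_n$ as computed in Corollary~\ref{cor:Upsilon_nDim}; combined with $\Upsilon^{\ord}_n\subseteq C_n(\mathrm L)$ this yields $C_n(\mathrm L)=\Upsilon^{\ord}_n$, hence $\Lambda^n=\Lambda_{\mathfrak r}^n$ for all~$n$, and finally $\Lambda=\Lambda_{\mathfrak r}$.

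The main obstacle is precisely this dimension bookkeeping, and it is caused by the Casimir element~$\mathrm C$: being of degree four but order three, it breaks the naive correspondence between the degree filtration (monomials in the~$\mathrm P^j$) and the order filtration. Consequently the surjectivity of the map from order-$n$ elements of~$\Upsilon_{\mathfrak r}$ onto the admissible top-order parts cannot be read off from the monomial basis of Lemma~\ref{lem:BasisUpsilonR}, since order-$n$ elements include lower-degree combinations involving powers of~$\mathrm C$; it requires the refined order-filtration basis of Theorem~\ref{thm:BasisOforderNOperators}, in which those powers of~$\mathrm C$ appear explicitly. Reconciling this realization count with the count coming from the top-order system is the delicate heart of the argument.
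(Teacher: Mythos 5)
Your reduction of the problem to operators commuting with $\mathrm L=\mathrm D_t+x\mathrm D_y-\mathrm D_x^{\,2}$ and your determining system agree with the paper's, and the inclusion $\Lambda_{\mathfrak r}\subseteq\Lambda$ is disposed of the same way. But the core of your argument --- ``determine the finite-dimensional space of admissible top-order parts, match it against $\Upsilon^{\ord}_n$, subtract and recurse, and conclude $\dim C_n(\mathrm L)=\dim\Upsilon^{\ord}_n$'' --- is precisely the step you never carry out, and it is the step that fails to be routine here. The obstruction is structural: the only relation closed at top order is $\p_x\eta^{ab}=0$ for $a+b=n$; every other equation $\Delta_{kl}$ couples the order-$m$ coefficients to the $x$-derivatives of order-$(m-1)$ coefficients, so whether a given top-order part is ``admissible'' (extendable to a full solution of the system) is decided only after propagating the cascade all the way down to $\Delta_{00}$, where the solvability conditions accumulate. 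Consequently the dimension of the admissible top-order space cannot be read off at order $n$, and the surjectivity of $\Upsilon^{\ord}_n$ onto it --- which your subtraction induction needs --- is essentially equivalent to the lemma you are trying to prove. You correctly identify that the degree--order discrepancy of $\mathrm C$ is what breaks the naive count, but identifying the obstacle is not the same as overcoming it; the paper's conclusion explicitly states that evaluating the dimension of the solution space of the determining system with order restrictions in this way is ``very difficult, perhaps even impossible,'' and that the entire point of their method is to avoid it.

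The paper's actual proof runs in the opposite direction and never counts dimensions of the determining system. It first performs a \emph{downward} induction on $k+l$ using $2\eta^{kl}_x=\mathrm F\eta^{k+1,l}-(k+2)\eta^{k+2,l-1}$ to show each $\eta^{kl}$ is polynomial in $x$ of degree at most $2n-2(k+l)$, then an \emph{upward} induction from $\Delta_{00}$ to show $\eta^{kl}\in\mathscr T$, the span of the terms $t^r(\mathrm P^3)^i(\mathrm P^2)^j1$, using Lemma~\ref{lem:RemarkableFPPolynomialSols} for the homogeneous parts and Lemma~\ref{lem:SolInhomEquation} for particular solutions of the inhomogeneous coupling. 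A triangular change of basis between the resulting representation of $\eta$ and the monomials $(\mathrm P^3)^i(\mathrm P^2)^j(\mathrm P^1)^k(\mathrm P^0)^lu$ then gives $\Lambda\subseteq\Lambda_{\mathfrak r}$ directly. The finiteness of $\dim\Lambda^{[n]}$ and the formulas of Corollary~\ref{cor:Upsilon_nDim} are \emph{consequences} of the lemma combined with Theorem~\ref{thm:BasisOforderNOperators}, not inputs to it. To repair your argument you would either have to supply an independent computation of the dimension of the order-restricted solution space of the determining system (which is exactly the intractable bookkeeping you flag) or switch to a bottom-up construction of the solutions themselves, which is what the paper does.
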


\begin{proof}
The condition $\mathrm F\eta[u]=0$ of invariance of the equation~\eqref{eq:RemarkableFP}
with respect to linear generalized symmetries with characteristics~$\eta$ of the form~\eqref{eq:RemarkableFPLinGenSyms}
can represented as
\begin{gather*}
\noprint{
\eta^{kl}_tu_{kl}+\eta^{kl}(u_{k+2,l}-xu_{k,l+1}-ku_{k-1,l+1})
+x\eta^{kl}_yu_{kl}+x\eta^{kl}u_{k,l+1}
-\eta^{kl}_{xx}u_{kl}-2\eta^{kl}_xu_{k+1,l}-\eta^{kl}u_{k+2,l}\\
}
(\eta^{kl}_t+x\eta^{kl}_y-\eta^{kl}_{xx})u_{kl}-k\eta^{kl}u_{k-1,l+1}-2\eta^{kl}_xu_{k+1,l}=0.
\end{gather*}
Splitting this condition with respect to the jet variables~$u_{kl}$,
we derive the system of determining equations for the coefficients~$\eta^{kl}$,
\[
\Delta_{kl}\colon\ \mathrm F\eta^{kl}-(k+1)\eta^{k+1,l-1}-2\eta^{k-1,l}_x=0,\quad
k,l\in\mathbb N_0,\quad k+l\leqslant n+1,
\]
where we denote $n:=\ord\eta$ and assume $\eta^{kl}=0$ if $k<0$ or $l<0$ or $k+l>n$.

\begin{figure}[t]
\centering
\begin{subfigure}[b]{0.48\textwidth}
\centering
\includegraphics[width=\textwidth]{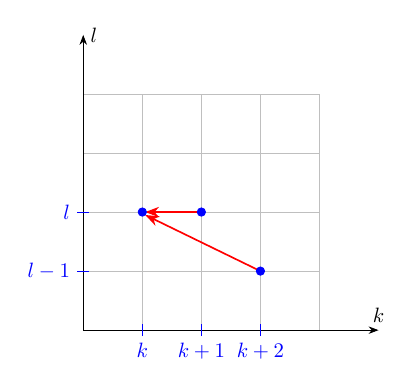}
\caption{}
\label{fig:GraphXPattern}
\end{subfigure}
\quad
\begin{subfigure}[b]{0.48\textwidth}
\centering
\includegraphics[width=\textwidth]{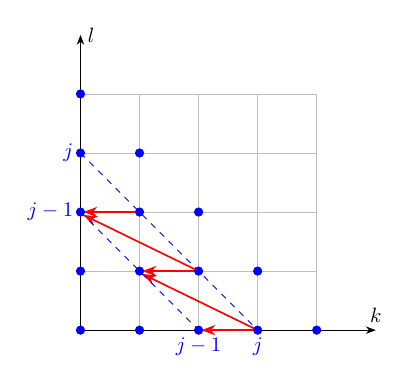}
\caption{}
\label{fig:GraphXInduction}
\end{subfigure}
\caption{The first induction (downward).
(a) Relation pattern.
(b) Induction step.}
\label{fig:GraphX}
\end{figure}

For each $(k,l)\in\mathbb N_0\times\mathbb N_0$ with $k+l\leqslant n$,
we rewrite the equation $\Delta_{k+1,l}$ as
\[
2\eta^{kl}_x=\mathrm F\eta^{k+1,l}-(k+2)\eta^{k+2,l-1}.
\]
In other words, the coefficient~$\eta^{kl}$ is defined by the coefficients~$\eta^{k+1,l}$ and~$\eta^{k+2,l-1}$
modulo a summand depending only on $(t,y)$.
Associating $\eta^{k'l'}$ with the point $(k',l')$ in the grid $\mathbb N_0\times\mathbb N_0$,
we geometrically depict this relation pattern in Figure~\ref{fig:GraphXPattern}.
Therefore, for each fixed $j\in\mathbb N$,
the coefficients~$\eta^{kl}$ with $k+l=j$ define the coefficients~$\eta^{kl}$ with $k+l=j-1$
up to summands depending only on $(t,y)$,
see Figure~\ref{fig:GraphXInduction}.
Thus, the induction with respect to~$m:=k+l$ from $m=n+1$, where $\eta^{kl}=0$, downwards to $m=0$,
in the course of which each induction step is realized
as the secondary induction with respect to~$l$ from $l=m$ downwards to $l=0$,
straightforwardly implies that the coefficient~$\eta^{kl}$ is a polynomial with respect to~$x$
of the degree at most $2n-2(k+l)$ with coefficients depending on $(t,y)$.

\begin{figure}[t]
\centering
\begin{subfigure}[b]{0.48\textwidth}
\centering
\includegraphics[width=\textwidth]{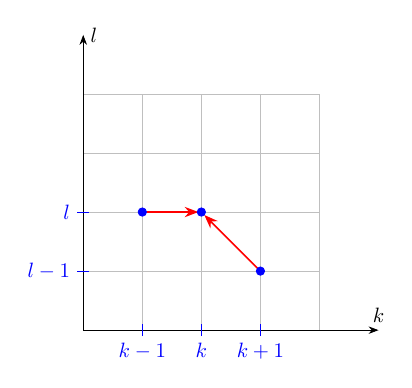}
\caption{}
\label{fig:GraphPattern}
\end{subfigure}
\quad
\begin{subfigure}[b]{0.48\textwidth}
\centering
\includegraphics[width=\textwidth]{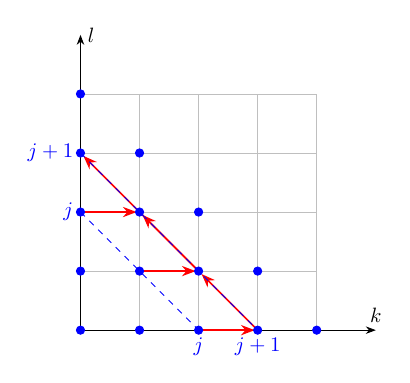}
\caption{}
\label{fig:GraphInduction}
\end{subfigure}
\caption{The second induction (upward).
(a) Relation pattern.
(b) Induction step.}
\label{fig:Graph}
\end{figure}

Now we prove that $\eta^{kl}\in\mathscr T$ for any $k,l\in\mathbb N_0$,
where $\mathscr T$ is the space of finite linear combinations of terms $t^r(\mathrm P^3)^i(\mathrm P^2)^j1$, $i,j,r\in\mathbb N_0$.
Using Lemma~\ref{lem:SolInhomEquation},
we carry out the induction with respect to~$m:=k+l$ in the opposite direction, from $m=0$ upwards to $m=n$,
as shown in Figure~\ref{fig:GraphInduction},
where each induction step is performed as the secondary induction with respect to~$l$ from ${l=0}$ upwards to $l=m$.
The induction base $k=l=0$ follows in view of Lemma~\ref{lem:RemarkableFPPolynomialSols}
from the equation $\Delta_{00}$: $\mathrm F\eta^{00}=0$ and the polynomiality of~$\eta^{00}$ with respect to~$x$.
On the step $(k,l)$, we have $\eta^{k+1,l-1},\eta^{k-1,l}\in\mathscr T$ by the induction supposition.
Taking into account $[\mathrm D_x,\mathrm P^2]=1$ and ${[\mathrm D_x,\mathrm P^3]=3t}$, we obtain
\[
\big(t^r(\mathrm P^3)^i(\mathrm P^2)^j1\big)_x=
3it^{r+1}(\mathrm P^3)^{i-1}(\mathrm P^2)^j1+jt^r(\mathrm P^3)^i(\mathrm P^2)^{j-1}1.
\]
Therefore, $\eta^{k-1,l}_x\in\mathscr T$ as well.
Considering $\Delta_{kl}$ as an inhomogeneous equation with respect to~$\eta^{kl}$,
we represent $\eta^{kl}$ as the sum
of a particular solution~$\hat\eta^{kl}$ of this equation according to Lemma~\ref{lem:SolInhomEquation}
and a solution~$\check\eta^{kl}$ of the homogeneous counterpart $\mathrm F\eta^{kl}=0$ of the equation $\Delta_{kl}$,
see Figure~\ref{fig:GraphPattern} for an illustration.
Since $\hat\eta^{kl}\in\mathscr T$ due to the choice in Lemma~\ref{lem:SolInhomEquation}
and $\eta^{kl}$ is polynomial with respect to~$x$ in view of the above arguments,
$\check\eta^{kl}$ is also polynomial with respect to~$x$ and
Lemma~\ref{lem:RemarkableFPPolynomialSols} implies that $\check\eta^{kl}\in\mathscr T$,
including only terms with $r=0$.
Hence $\eta^{kl}=\hat\eta^{kl}+\check\eta^{kl}\in\mathscr T$.

As a result,
we derive the following representation for~$\eta$:
\begin{gather}\label{eq:RemarkableFPGenSymRepresentation}
\eta=\sum_{i,j,k,l\in\mathbb N_0}c_{ijkl}W^{ijkl}, \quad
W^{ijkl}:=\big((\mathrm P^3)^i(\mathrm P^2)^j1\big)u_{kl}+\sum_{(k',l')\succ(k,l)}V^{ijklk'l'}u_{k'l'},
\end{gather}
where
$(k',l')\succ(k,l)$ means that $k',l'\in\mathbb N_0$, $l'\geqslant l$, $k'+l'\geqslant k+l$ and $(k',l')\ne(k,l)$,
each $V^{ijklk'l'}$ is an element of~$\mathscr T$ that is completely defined by $(i,j,k,l,k',l')$,
has $r>0$ for each of its summand,
and only finite number of~$c_{ijkl}$ and of~$V^{ijklk'l'}$ are nonzero.
In other words,
any generalized symmetry $\eta\p_u$ of the equation~\eqref{eq:RemarkableFP}
is completely defined by the corresponding coefficients $c_{ijkl}$ of~$W^{ijkl}$
or, equivalently, of $(\mathrm P^3)^i(\mathrm P^2)^ju_{kl}$
in its representation~\eqref{eq:RemarkableFPGenSymRepresentation}.
At the same time,
\[
(\mathrm P^3)^i(\mathrm P^2)^j(\mathrm P^1)^k(\mathrm P^0)^lu=
\big((\mathrm P^3)^i(\mathrm P^2)^j1\big)u_{kl}+\sum_{(k',l')\succ(k,l)}\tilde V^{ijklk'l'}u_{k'l'},
\]
where the coefficients~$\tilde V^{ijklk'l'}$ have the same properties as $V^{ijklk'l'}$.
Therefore, $\eta\p_u\in\Lambda_{\mathfrak r}$, i.e., $\Lambda\subseteq\Lambda_{\mathfrak r}$.
The inverse inclusion follows from the definitions of $\Lambda$ and~$\Lambda_{\mathfrak r}$.
Thus, $\Lambda=\Lambda_{\mathfrak r}$.
\end{proof}

\begin{corollary}\label{cor:Upsilon}
The associative algebra $\Upsilon$ of linear differential recursion operators
of the equation~\eqref{eq:RemarkableFP} coincides with the algebra~$\Upsilon_{\mathfrak r}$.
\end{corollary}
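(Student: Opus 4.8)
The plan is to deduce the statement directly from Lemma~\ref{lem:RemarkableFPLinGenSyms} by setting up a dictionary between linear differential recursion operators of~\eqref{eq:RemarkableFP} and its linear generalized symmetries. To a linear differential operator~$\mathrm Q$ in total derivatives with coefficients depending only on $(t,x,y)$ I would associate the generalized vector field $(\mathrm Q u)\p_u$, where $\mathrm Q u$ is restricted to the manifold~$\mathscr F$ and thereby written, using $\hat{\mathrm D}_t u_{kl}=u_{k+2,l}-xu_{k,l+1}-ku_{k-1,l+1}$ to eliminate all $t$-derivatives, as a linear differential function $\sum_{k,l}\eta^{kl}(t,x,y)u_{kl}$ in the pure $x,y$-derivatives. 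The crucial equivalence is that $\mathrm Q$ maps solutions to solutions precisely when $\mathrm F(\mathrm Q u)=0$ on~$\mathscr F$, that is, exactly when $(\mathrm Q u)\p_u$ is a linear generalized symmetry. Hence this assignment maps~$\Upsilon$ onto~$\Lambda$.

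Next I would verify that the assignment is injective once recursion operators are identified modulo their action on the solution set: since the variables $u_{kl}$, $k,l\in\mathbb N_0$, are independent coordinates on~$\mathscr F$, the vanishing of $\mathrm Q u=\sum\eta^{kl}u_{kl}$ on~$\mathscr F$ forces every~$\eta^{kl}$ to vanish, so the reduced operator is zero; thus the assignment is a bijection between~$\Upsilon$ and~$\Lambda$. Restricting it to~$\Upsilon_{\mathfrak r}$, the monomials $(\mathrm P^3)^{i_3}(\mathrm P^2)^{i_2}(\mathrm P^1)^{i_1}(\mathrm P^0)^{i_0}$, which form a basis of~$\Upsilon_{\mathfrak r}$ by Lemma~\ref{lem:BasisUpsilonR}, are sent exactly to the generators $\big((\mathrm P^3)^{i_3}(\mathrm P^2)^{i_2}(\mathrm P^1)^{i_1}(\mathrm P^0)^{i_0}u\big)\p_u$ of~$\Lambda_{\mathfrak r}$, so the image of~$\Upsilon_{\mathfrak r}$ is precisely~$\Lambda_{\mathfrak r}$. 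I would also record the inclusion $\Upsilon_{\mathfrak r}\subseteq\Upsilon$ independently: the operators~$\mathrm P^0,\mathrm P^1,\mathrm P^2,\mathrm P^3$ are recursion operators, and the set of recursion operators is closed under linear combinations and composition, since a sum of solutions is a solution and $\mathrm Q_1(\mathrm Q_2h)$ is a solution whenever $\mathrm Q_2h$ is; hence this set contains the whole algebra generated by~$\mathrm P^0,\mathrm P^1,\mathrm P^2,\mathrm P^3$.

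With the dictionary in place, the conclusion is immediate. For the reverse inclusion, any $\mathrm Q\in\Upsilon$ is mapped to $(\mathrm Q u)\p_u\in\Lambda=\Lambda_{\mathfrak r}$ by Lemma~\ref{lem:RemarkableFPLinGenSyms}, and since $\Lambda_{\mathfrak r}$ is the image of~$\Upsilon_{\mathfrak r}$ under the same bijection, injectivity gives $\mathrm Q\in\Upsilon_{\mathfrak r}$; together with $\Upsilon_{\mathfrak r}\subseteq\Upsilon$ this yields $\Upsilon=\Upsilon_{\mathfrak r}$. The main obstacle is foundational bookkeeping rather than computation: one must fix the correct notion of equality in~$\Upsilon$ (operators identified when they act identically on all solutions, equivalently when their reductions to $\mathrm D_x,\mathrm D_y$-form via the equation coincide), so that the formal freedom of adding multiples of the equation and of replacing $\hat{\mathrm D}_t$ does not spoil the well-definedness or injectivity of $\mathrm Q\mapsto(\mathrm Q u)\p_u$. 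Once this is settled, no further estimates are needed, as all the analytic work has already been done in the proof of Lemma~\ref{lem:RemarkableFPLinGenSyms}.
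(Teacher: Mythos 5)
Your proposal is correct and follows essentially the same route the paper intends: the corollary is meant to be an immediate consequence of Lemma~\ref{lem:RemarkableFPLinGenSyms} via the standard bijection $\mathrm Q\mapsto(\mathrm Qu)\p_u$ between linear differential recursion operators and linear generalized symmetries (the paper invokes \cite[Proposition~5.22]{olve1993A} for this in Section~\ref{sec:ExactSolutionsTheory}, and the same correspondence reappears in the proof of Corollary~\ref{cor:LambdaIso1}). Your extra care about well-definedness and injectivity after reducing operators modulo the equation is sound and only makes explicit what the paper leaves implicit.
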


\begin{corollary}\label{cor:LambdaIso1}
The algebra~$\Lambda=\Lambda_{\mathfrak r}$ is anti-isomorphic to the algebra~$\Upsilon_{\mathfrak r}^{\mbox{\tiny$(-)$}}$
and, therefore, to the Lie algebra
associated with the quotient of the universal enveloping algebra of the Lie algebra~$\mathfrak r$
by the principal ideal $(\iota(\mathcal I)+1)$ generated by $\iota(\mathcal I)+1$,
$\Lambda_{\mathfrak r}\simeq\big(\mathfrak U(\mathfrak r)/(\iota(\mathcal I)+1)\big)^{\mbox{\tiny$(-)$}}$.
\end{corollary}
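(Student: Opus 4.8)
The plan is to realize the asserted (anti-)isomorphisms through the single natural map
\[
\Phi\colon\Upsilon_{\mathfrak r}\to\Lambda_{\mathfrak r},\qquad
\mathrm Q\mapsto(\mathrm Qu)\p_u,
\]
and to show that it is a bijective Lie-algebra anti-homomorphism when $\Upsilon_{\mathfrak r}$ is endowed with its commutator bracket, i.e., regarded as $\Upsilon_{\mathfrak r}^{\mbox{\tiny$(-)$}}$. First I would check that $\Phi$ is well defined with no reduction modulo the equation: each generator $\mathrm P^j$ involves only the total derivatives $\mathrm D_x$ and $\mathrm D_y$, hence so does every $\mathrm Q\in\Upsilon_{\mathfrak r}$, and therefore $\mathrm Qu$ is already a differential function of the canonical jet coordinates $u_{kl}$ on~$\mathscr F$, i.e., a legitimate linear characteristic with $\mathrm F(\mathrm Qu)=0$. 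Surjectivity of $\Phi$ onto $\Lambda_{\mathfrak r}$ is then immediate from the definition of $\Lambda_{\mathfrak r}$ together with the linearity of~$\Phi$, while the equality $\Lambda=\Lambda_{\mathfrak r}$ supplied by Lemma~\ref{lem:RemarkableFPLinGenSyms} identifies the target algebra.

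The central computation is the behavior of $\Phi$ on brackets. Writing $\mathrm Q_a=\sum_{k,l}q_a^{kl}(t,x,y)\mathrm D_x^k\mathrm D_y^l$, so that $\eta_a:=\mathrm Q_au=\sum_{k,l}q_a^{kl}u_{kl}$ is linear in the jet variables, I would apply the prolongation of the evolutionary field $\eta_1\p_u$ to~$\eta_2$. Since $\p_{u_{kl}}\eta_2=q_2^{kl}$ and the jet coordinates on~$\mathscr F$ carry only $x$- and $y$-derivatives, the prolonged action yields $\sum_{k,l}q_2^{kl}\hat{\mathrm D}_x^k\hat{\mathrm D}_y^l\eta_1=\mathrm Q_2\eta_1=\mathrm Q_2\mathrm Q_1u$, and symmetrically the reverse action gives $\mathrm Q_1\mathrm Q_2u$. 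Hence the characteristic of the Lie bracket of the two generalized-symmetry fields is
\[
\mathrm Q_2\mathrm Q_1u-\mathrm Q_1\mathrm Q_2u=-[\mathrm Q_1,\mathrm Q_2]u,
\]
so that $[\Phi(\mathrm Q_1),\Phi(\mathrm Q_2)]=-\Phi([\mathrm Q_1,\mathrm Q_2])$ and $\Phi$ is an anti-homomorphism of Lie algebras. This step is where the care must be taken: one must use that the recursion operators involve no~$\mathrm D_t$, so that replacing $u$ by $\mathrm Q_1u$ in the linear function~$\eta_2$ captures the full prolonged action on~$\mathscr F$, and one must track the sign that produces the ``anti''.

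It remains to establish injectivity and to convert the anti-isomorphism into the stated isomorphism. For injectivity, if $\Phi(\mathrm Q)=0$ then $\mathrm Qu\equiv0$ on~$\mathscr F$; as the $u_{kl}$ are independent coordinates, all coefficients $q^{kl}$ vanish, so $\mathrm Q$ is the zero differential operator, and since $\Upsilon_{\mathfrak r}$ is by its very construction an algebra of differential operators (with the monomial basis of Lemma~\ref{lem:BasisUpsilonR}) this means $\mathrm Q=0$ in $\Upsilon_{\mathfrak r}$. Thus $\Phi$ is a Lie-algebra anti-isomorphism $\Upsilon_{\mathfrak r}^{\mbox{\tiny$(-)$}}\to\Lambda=\Lambda_{\mathfrak r}$, which is the first claim. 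Finally, since every Lie algebra is isomorphic to its opposite via $x\mapsto-x$, an anti-isomorphism is simultaneously an isomorphism after composing with this negation; combining this with the associative isomorphism $\Upsilon_{\mathfrak r}\simeq\mathfrak U(\mathfrak r)/(\iota(\mathcal I)+1)$ of Lemma~\ref{lem:IsoToUnivEnv}, which by that lemma already intertwines the commutator brackets of the associated Lie algebras, yields $\Lambda_{\mathfrak r}\simeq\big(\mathfrak U(\mathfrak r)/(\iota(\mathcal I)+1)\big)^{\mbox{\tiny$(-)$}}$, as asserted.
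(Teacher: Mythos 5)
Your proposal is correct and follows essentially the same route as the paper's proof: the paper likewise sets up the linear correspondence between operators in $\Upsilon_{\mathfrak r}$ and the vector fields $(\mathrm Qu)\p_u$, computes the bracket via the prolongation formula for evolutionary fields (obtaining the characteristic $[\mathrm R,\mathrm Q]u$ and hence the anti-homomorphism property with the same sign), and concludes by invoking Lemma~\ref{lem:IsoToUnivEnv}. The only differences are cosmetic: you orient the map from $\Upsilon_{\mathfrak r}$ to $\Lambda_{\mathfrak r}$ rather than the reverse, and you spell out the injectivity argument and the negation trick converting the anti-isomorphism into the stated isomorphism, both of which the paper leaves implicit.
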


\begin{proof}
The correspondence
$\big((\mathrm P^3)^{i_3}(\mathrm P^2)^{i_2}(\mathrm P^1)^{i_1}(\mathrm P^0)^{i_0}u\big)\p_u\mapsto
(\mathrm P^3)^{i_3}(\mathrm P^2)^{i_2}(\mathrm P^1)^{i_1}(\mathrm P^0)^{i_0}$
extended by linearity
straightforwardly gives us a vector-space isomorphism $\varphi$
from $\Lambda_{\mathfrak r}$ to $\Upsilon_{\mathfrak r}$.
Consider operators $\mathrm Q,\mathrm R\in\Upsilon_{\mathfrak r}$, i.e.,
$\mathrm Q=Q^{ij}\mathrm D_x^i\mathrm D_y^j$ and $\mathrm R=R^{ij}\mathrm D_x^i\mathrm D_y^j$,
where only a finite number of the polynomials~$Q^{ij}$ and~$R^{ij}$ of $(t,x,y)$ are nonzero.
Here and in what follows we assume summation with respect two repeated indices~$i$ and~$j$ through~$\mathbb N_0$.
In view of \cite[Proposition~5.15]{olve1993A},
the commutator $\big[(\mathrm Qu)\p_u,(\mathrm Ru)\p_u\big]$ of evolutionary generalized vector fields $(\mathrm Qu)\p_u$ and $(\mathrm Ru)\p_u$
from~$\Lambda_{\mathfrak r}$
is an evolutionary vector field with characteristic
\begin{gather*}
\mathop{\rm pr}\big((\mathrm Qu)\p_u\big)(\mathrm Ru)-\mathop{\rm pr}\big((\mathrm Ru)\p_u\big)(\mathrm Qu)
=\mathrm D_x^i\mathrm D_y^j(\mathrm Qu)\p_{u_{ij}}(\mathrm Ru)-\mathrm D_x^i\mathrm D_y^j(\mathrm Ru)\p_{u_{ij}}(\mathrm Qu)
\\
\qquad
=R^{ij}\mathrm D_x^i\mathrm D_y^j(\mathrm Qu)-Q^{ij}\mathrm D_x^i\mathrm D_y^j(\mathrm Ru)
=\mathrm R(\mathrm Qu)-\mathrm Q(\mathrm Ru)=[\mathrm R,\mathrm Q]u,
\end{gather*}
where $\mathop{\rm pr}(\eta\p_u)$ denotes the prolongation of a generalized vector field $\eta\p_u$ with respect~$x$ and~$y$,
$\mathop{\rm pr}(\eta\p_u)=(\mathrm D_x^i\mathrm D_y^j\eta)\p_{u_{ij}}$.
Therefore, $\varphi([\mathrm Qu\p_u,\mathrm Ru\p_u])=-[\mathrm Q,\mathrm R]$,
i.e., $\varphi\colon\Lambda_{\mathfrak r}\to\Upsilon_{\mathfrak r}^{\mbox{\tiny$(-)$}}$ is an anti-isomorphism,
which combines with Lemma~\ref{lem:IsoToUnivEnv} to the second assertion in this theorem.
\end{proof}

We can reformulate Corollary~\ref{cor:LambdaIso1},
recalling the isomorphism of $\mathfrak r$ to the rank-two Heisenberg algebra
${\rm h}(2,\mathbb R)=\langle p_1,p_2,q_1,q_2,c\rangle$, Remark~\ref{rem:WeylAlgebra}
and Corollary~\ref{cor:IsoToWeyl}.
In particular,
\[
\mathfrak U(\mathfrak r)/(\iota(\mathcal I)+1)
\simeq\mathfrak U\big({\rm h}(2,\mathbb R)\big)/(c+1)
\simeq {\rm W}(2,\mathbb R)^{\rm op}\simeq {\rm W}(2,\mathbb R).
\]

\begin{corollary}
The algebra~$\Lambda=\Lambda_{\mathfrak r}$ of the linear generalized symmetries
of the remarkable Fokker--Planck equation~\eqref{eq:RemarkableFP}
is isomorphic to the Lie algebra ${\rm W}(2,\mathbb R)^{\mbox{\tiny$(-)$}}$
associated with the second Weyl algebra ${\rm W}(2,\mathbb R)$,
$\Lambda\simeq {\rm W}(2,\mathbb R)^{\mbox{\tiny$(-)$}}$.
\end{corollary}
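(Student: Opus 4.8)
The plan is to deduce this corollary as a purely formal consequence of the structural results already established, since all the genuine work has been carried out in Lemma~\ref{lem:RemarkableFPLinGenSyms}, Corollary~\ref{cor:IsoToWeyl} and Corollary~\ref{cor:LambdaIso1}. The strategy is simply to assemble a chain of (anti-)isomorphisms connecting $\Lambda$ to ${\rm W}(2,\mathbb R)^{\mbox{\tiny$(-)$}}$ and to verify that, at the level of Lie algebras, the distinction between an isomorphism and an anti-isomorphism is immaterial.

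First I would invoke Corollary~\ref{cor:LambdaIso1}, which gives the anti-isomorphism $\varphi\colon\Lambda=\Lambda_{\mathfrak r}\to\Upsilon_{\mathfrak r}^{\mbox{\tiny$(-)$}}$ sending $\big((\mathrm P^3)^{i_3}(\mathrm P^2)^{i_2}(\mathrm P^1)^{i_1}(\mathrm P^0)^{i_0}u\big)\p_u$ to the operator $(\mathrm P^3)^{i_3}(\mathrm P^2)^{i_2}(\mathrm P^1)^{i_1}(\mathrm P^0)^{i_0}$. Next I would transport the associative-algebra isomorphism $\Upsilon_{\mathfrak r}\simeq{\rm W}(2,\mathbb R)$ of Corollary~\ref{cor:IsoToWeyl} to the associated Lie algebras: any isomorphism of unital associative algebras automatically respects the ring-theoretic commutator, so it induces an isomorphism $\Upsilon_{\mathfrak r}^{\mbox{\tiny$(-)$}}\simeq{\rm W}(2,\mathbb R)^{\mbox{\tiny$(-)$}}$. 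Composing, $\Lambda$ becomes anti-isomorphic to ${\rm W}(2,\mathbb R)^{\mbox{\tiny$(-)$}}$.

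The only point requiring a moment's care---and it is the closest thing here to an obstacle---is passing from an anti-isomorphism to an isomorphism. For any Lie algebra an anti-isomorphism $\phi$ satisfies $\phi([a,b])=-[\phi(a),\phi(b)]$, whence the map $-\phi$ is a genuine Lie-algebra isomorphism, since $(-\phi)([a,b])=[\phi(a),\phi(b)]=[(-\phi)(a),(-\phi)(b)]$. Hence any Lie algebra anti-isomorphic to ${\rm W}(2,\mathbb R)^{\mbox{\tiny$(-)$}}$ is in fact isomorphic to it, which yields $\Lambda\simeq{\rm W}(2,\mathbb R)^{\mbox{\tiny$(-)$}}$. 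Alternatively, I would sidestep the sign bookkeeping entirely by starting from the second assertion of Corollary~\ref{cor:LambdaIso1}, namely $\Lambda\simeq\big(\mathfrak U(\mathfrak r)/(\iota(\mathcal I)+1)\big)^{\mbox{\tiny$(-)$}}$, and then applying the associative isomorphism $\mathfrak U(\mathfrak r)/(\iota(\mathcal I)+1)\simeq{\rm W}(2,\mathbb R)$ recorded immediately after that corollary, which already absorbs the anti-isomorphism subtlety into the symbol $\simeq$. I expect both routes to deliver the conclusion with essentially no computation.
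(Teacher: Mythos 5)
Your proposal is correct and follows essentially the same route as the paper, which derives this corollary by combining Corollary~\ref{cor:LambdaIso1} with the chain $\mathfrak U(\mathfrak r)/(\iota(\mathcal I)+1)\simeq\mathfrak U\big({\rm h}(2,\mathbb R)\big)/(c+1)\simeq {\rm W}(2,\mathbb R)^{\rm op}\simeq {\rm W}(2,\mathbb R)$ recorded immediately before the statement. Your explicit observation that an anti-isomorphism of Lie algebras composed with $-\mathrm{id}$ yields an isomorphism is a correct and welcome clarification of a point the paper leaves implicit.
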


Hence the center of the algebra~$\Lambda$ is one-dimensional and spanned by $u\p_u$,
and the quotient algebra $\Lambda/\langle u\p_u\rangle$ is simple.

Combining Corollaries~\ref{cor:Upsilon_nDim} and~\ref{cor:LambdaIso1},
we derive that
\begin{gather*}
\dim\Lambda^n=
\begin{cases}
\frac1{18}(n+1)(n+3)(n^2+4n+6)
&\mbox{if}\quad n\equiv0~\mbox{or}~2\pmod3,
\\[.5ex]
\frac1{18}(n+2)^2(n^2+4n+5)
&\mbox{if}\quad n\equiv1\pmod3,
\end{cases}
\\[1ex]
\dim\Lambda^{[n]}=
\begin{cases}
\frac19(2n+3)(n^2+3n+3)
&\mbox{if}\quad n\equiv0\pmod3,
\\[.5ex]
\frac19(n+2)(2n^2+5n+5)
&\mbox{if}\quad n\equiv1\pmod3,
\\[.5ex]
\frac19(n+1)(2n^2+7n+8)
&\mbox{if}\quad n\equiv2\pmod3.
\end{cases}
\end{gather*}

\begin{theorem}\label{thm:GenSymsRemarkableFP}
The algebra of canonical representatives of generalized symmetries
of the remarkable Fokker--Planck equation~\eqref{eq:RemarkableFP} is
$\Sigma=\Lambda_{\mathfrak r}\lsemioplus\Sigma^{-\infty}$,
where
\begin{gather*}
\Lambda_{\mathfrak r}=\big\langle\big((\mathrm P^3)^{i_3}(\mathrm P^2)^{i_2}(\mathrm P^1)^{i_1}(\mathrm P^0)^{i_0}u\big)\p_u
\mid i_0,i_1,i_2,i_3\in\mathbb N_0\big\rangle,\quad
\Sigma^{-\infty}:=\big\{\mathfrak Z(h)\big\}.
\end{gather*}
Here the parameter function~$h$ runs through the solution set of~\eqref{eq:RemarkableFP}.
\end{theorem}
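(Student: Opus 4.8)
The plan is to bootstrap the description of the full algebra $\Sigma$ from the description of its linear part already secured in Lemma~\ref{lem:RemarkableFPLinGenSyms}. Recalling the splitting $\Sigma=\Sigma^{\rm ess}\lsemioplus\Sigma^{-\infty}$, in which the superposition symmetries $\Sigma^{-\infty}=\big\{\mathfrak Z(h)\big\}$ form an ideal and $\Sigma^{\rm ess}$ is a complementary subalgebra, the asserted equality $\Sigma=\Lambda_{\mathfrak r}\lsemioplus\Sigma^{-\infty}$ is equivalent to the single identification $\Sigma^{\rm ess}=\Lambda_{\mathfrak r}$. Every element of $\Lambda_{\mathfrak r}$ is homogeneous of scaling degree one in the jet variables $u_{kl}$, whereas every element of $\Sigma^{-\infty}$ has scaling degree zero, so $\Lambda_{\mathfrak r}\cap\Sigma^{-\infty}=\{0\}$; since $\Lambda_{\mathfrak r}$ is a subalgebra and $\Sigma^{-\infty}$ an ideal, it remains only to establish the sum decomposition $\Sigma=\Lambda_{\mathfrak r}+\Sigma^{-\infty}$, i.e.\ that every generalized symmetry is the sum of a linear one and a superposition one.

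First I would record the finiteness input. By Lemma~\ref{lem:RemarkableFPLinGenSyms} the algebra $\Lambda$ of all linear generalized symmetries equals $\Lambda_{\mathfrak r}$, and by the anti-isomorphism $\Lambda_{\mathfrak r}\simeq\Upsilon_{\mathfrak r}^{\mbox{\tiny$(-)$}}$ of Corollary~\ref{cor:LambdaIso1} together with Corollary~\ref{cor:Upsilon_nDim}, each subspace $\Lambda^n$ of linear generalized symmetries of order at most $n$ is finite-dimensional, with the explicit dimension displayed above. The finiteness of $\dim\Lambda^n$ for every $n\in\mathbb N_0$ is exactly the hypothesis needed to invoke the Shapovalov--Shirokov theorem~\cite{shap1992a}.

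The core of the argument is the application of that theorem, whose effect is to force $\Sigma^{\rm ess}=\Lambda$. Its mechanism rests on the linearity of~\eqref{eq:RemarkableFP}: the scaling $u\mapsto\lambda u$ preserves the determining condition $\mathrm F\eta=0$ and the restricted total derivatives preserve scaling degree, so an arbitrary characteristic decomposes into homogeneous components $\eta=\sum_{d\geqslant0}\eta_{(d)}$ of scaling degree~$d$, each separately satisfying $\mathrm F\eta_{(d)}=0$. The component $\eta_{(0)}$ is a solution of~\eqref{eq:RemarkableFP} and hence lies in $\Sigma^{-\infty}$, while $\eta_{(1)}$ is a linear generalized symmetry and hence lies in $\Lambda=\Lambda_{\mathfrak r}$. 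It remains to annihilate the components of degree $d\geqslant2$; here the Shapovalov--Shirokov mechanism enters. Polarizing such a component into a symmetric $d$-linear form and then freezing $d-1$ of its arguments so that they run through the polynomial solutions of Lemma~\ref{lem:RemarkableFPPolynomialSols} yields a family of \emph{linear} generalized symmetries whose orders are bounded by $\ord\eta_{(d)}$, with coefficients depending only on $(t,x,y)$; were $\eta_{(d)}$ nonzero, this family would span an infinite-dimensional subspace of some $\Lambda^n$, contradicting the finiteness established above. Consequently $\eta_{(d)}=0$ for all $d\geqslant2$, giving $\eta\in\Lambda_{\mathfrak r}+\Sigma^{-\infty}$ and therefore $\Sigma=\Lambda_{\mathfrak r}\lsemioplus\Sigma^{-\infty}$.

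I expect the main obstacle to be the faithful application of the Shapovalov--Shirokov theorem to the ultraparabolic equation~\eqref{eq:RemarkableFP}: one must verify that its nondegeneracy hypotheses are met and, in the polarization-and-freezing step, that a nonvanishing component $\eta_{(d)}$ with $d\geqslant2$ genuinely produces \emph{infinitely many} linearly independent linear symmetries of bounded order rather than symmetries with solution-dependent coefficients. The explicit dimension count of Corollary~\ref{cor:Upsilon_nDim}, transported to $\Lambda^n$ through Corollary~\ref{cor:LambdaIso1}, is precisely what this infinitude contradicts, thereby forcing $\eta_{(d)}=0$ and closing the proof.
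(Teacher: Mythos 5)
Your proposal follows essentially the same route as the paper: Lemma~\ref{lem:RemarkableFPLinGenSyms} gives $\Lambda=\Lambda_{\mathfrak r}$ and hence $\dim\Lambda^{[n]}<\infty$ for all $n\in\mathbb N_0$, and the Shapovalov--Shirokov theorem then yields $\Sigma^{[n]}=\Lambda^{[n]}$, so $\Sigma=\Lambda_{\mathfrak r}\lsemioplus\Sigma^{-\infty}$. Your additional sketch of the theorem's internal mechanism (the scaling-degree decomposition and polarization step) is not needed and is not fully rigorous as stated---a general characteristic need not split into homogeneous components in the jet variables---but since the paper likewise cites the theorem as a black box, this does not constitute a gap relative to the paper's own proof.
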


\begin{proof}
Lemma~\ref{lem:RemarkableFPLinGenSyms} obviously implies that $\dim\Lambda^{[n]}<\infty$ for any $n\in\mathbb N_0$,
and thus $\Sigma^{[n]}=\Lambda^{[n]}$ for any $n\in\mathbb N_0$.
The last implication is just a particular formulation of the Shapovalov--Shirokov theorem~\cite[Theorem~4.1]{shap1992a}
for the equation~\eqref{eq:RemarkableFP}.
Therefore, $\Sigma=\Lambda\lsemioplus\Sigma^{-\infty}=\Lambda_{\mathfrak r}\lsemioplus\Sigma^{-\infty}$.
\end{proof}

In other words, the  algebra~$\Sigma$ splits over the infinite-dimensional abelian ideal~$\Sigma^{-\infty}$
of trivial generalized symmetries associated with the linear superposition of solutions.
The complementary subalgebra to~$\Sigma^{-\infty}$ in~$\Sigma$,
which is naturally called the \emph{essential algebra of generalized symmetries},
is just the algebra $\Lambda=\Lambda_{\mathfrak r}$ of linear generalized symmetries,
which is isomorphic to the Lie algebra ${\rm W}(2,\mathbb R)^{\mbox{\tiny$(-)$}}$
associated with the second Weyl algebra ${\rm W}(2,\mathbb R)$.

\begin{remark}
The subspaces $\Lambda^{[1]}$ and $\Lambda^{[2]}$ are in fact subalgebras of~$\Lambda$.
As Lie algebras, they are isomorphic to the (nil)radical~$\mathfrak r$ of
the essential Lie invariance algebra~$\mathfrak g^{\rm ess}$ of~\eqref{eq:RemarkableFP}
and the algebra~$\mathfrak g^{\rm ess}$ itself, respectively.
An interesting question is whether the algebra~$\Lambda$ contains
finite-dimensional noncommutative subalgebras
that are ${\rm Aut}(\Lambda)$-inequivalent to subalgebras of~$\Lambda^{[2]}$.
\end{remark}

\begin{remark}
Another concept related to generating generalized symmetries similarly to recursion operators
is that of master symmetry.
There are various (in general, inequivalent) notions of master symmetries in the literature.
According to the definition presented in \cite[p.~315]{olve1993A},
a \emph{master symmetry} is a generalized (or even nonlocal) vector field $\mathfrak M$ with the property
that whenever $Q$ is a generalized symmetry of the system of evolution equations under consideration,
so is the Lie bracket $[\mathfrak M,Q]$.
Since any generalized symmetry of the system satisfies this property, it is then a master symmetry.
Mimicking techniques from~\cite{shap1992a}, it is possible to prove that within the framework of this definition
and in the setting of the Shapovalov--Shirokov theorem,
\emph{a generalized vector field is a master symmetry of a linear system of differential equations if and only if
it is a generalized symmetry of this system}.
But even in \cite[p.~317]{olve1993A}, a different notion of master symmetry
(see, e.g., \cite[Definition~2]{wang2015b} for a precise formulation of the corresponding definition)
is implicitly used when considering the potential Burgers equation as an example,
where the commutators of the presented master symmetry
only with the elements of a (proper) abelian subalgebra of the entire algebra of generalized symmetries of this equation
are such symmetries.
The analogous master symmetry for the Burgers equation is discussed, e.g., in \cite[Section~1]{wang2015b}.
The generalized vector field $yu_{yy}\p_u$, which is the ``time-independent part''
of the generalized symmetry $-\frac13\big(\mathrm P^3(\mathrm P^0)^2u\big)\p_u$ of the equation~\eqref{eq:RemarkableFP},
satisfies all the properties of master symmetries according to \cite[Definition~2]{wang2015b},
except that the right-hand side of~\eqref{eq:RemarkableFP} does not arise
as the characteristic of a generalized symmetry from the generated hierarchy.
\end{remark}

\begin{remark}\label{rem:LamdbaTwo-Generation}
The algebra~$\Lambda$ has another remarkable generation property
following from the analogous property of~${\rm W}(2,\mathbb R)^{\mbox{\tiny$(-)$}}$~\cite{kova2025a}.
It is two-generated as a Lie algebra.
More specifically, any of its elements can be represented as a linear combination
of successive commutators (aka nonassociative monomials) of its two fixed elements,
e.g., those associated with the operators
\[
\mathrm P^3\mathrm P^0+5\mathrm P^2\mathrm P^1\quad\mbox{and}\quad
\big((\mathrm P^3)^3+(\mathrm P^0)^2+1\big)\big((\mathrm P^2)^3+(\mathrm P^1)^2+1\big).
\]
Similarly, both (isomorphic) algebras~$\Upsilon_{\mathfrak r}$ and~${\rm W}(2,\mathbb R)$
are two-generated as associative algebras.
\end{remark}

\begin{remark}\label{rem:W2R-Grading}
The Lie algebra ${\rm W}(2,\mathbb R)^{\mbox{\tiny$(-)$}}$ has a natural $\mathbb Z$-grading
associated with the specific inner semisimple derivation
given by the adjoint action of the element $s:=\hat p_1\hat q_1+\hat p_2\hat q_2$,
\[
{\rm ad}_s(\hat q^\kappa\hat p^\lambda)=(\kappa_1-\lambda_1+\kappa_2-\lambda_2)\hat q^\kappa\hat p^\lambda,
\]
where the elements $\hat q^\kappa\hat p^\lambda:=\hat q_1^{\kappa_1}\hat q_2^{\kappa_2}\hat p_1^{\lambda_1}\hat p_2^{\lambda_2}$,
$\kappa:=(\kappa_1,\kappa_2),\lambda:=(\lambda_1,\lambda_2)\in\mathbb N_0^2$,
constitute the canonical basis of ${\rm W}(2,\mathbb R)$, see Remark~\ref{rem:WeylAlgebra} and the commutation relation~\eqref{eq:CommutatorsWn-}.
Denote by $\Gamma_m$ the eigenspace of~${\rm ad}_s$ associated with the eigenvalue $m\in\mathbb Z$,
\[
\Gamma_m:=\big\langle\hat q^\kappa\hat p^\lambda\mid\kappa,\lambda\in\mathbb N_0^2,\,\kappa_1-\lambda_1+\kappa_2-\lambda_2=m\big\rangle.
\]
Then the algebra ${\rm W}(2,\mathbb R)$ is the direct sum of~$\Gamma_m$, $m\in\mathbb Z$.
Since ${\rm ad}_s$ is a derivation of the  Lie algebra~${\rm W}(2,\mathbb R)^{\mbox{\tiny$(-)$}}$,
this direct sum decomposition is in fact its $\mathbb Z$-grading,
\[
{\rm W}(2,\mathbb R)^{\mbox{\tiny$(-)$}}=\bigoplus_{m\in\mathbb Z}\Gamma_m
\quad\mbox{with}\quad [\Gamma_m,\Gamma_{m'}]\subseteq\Gamma_{m+m'}.
\]
The isomorphism between~$\Lambda$ and~${\rm W}(2,\mathbb R)^{\mbox{\tiny$(-)$}}$
transfers this grading to the algebra $\Lambda$,
where the role of~$s$ is played by 
$\mathfrak R:=\big(\frac13\mathrm P^3\mathrm P^0u+\mathrm P^2\mathrm P^1u\big)\p_u$
and the grading components are given by
\[
\big\langle\big((\mathrm P^3)^{i_3}(\mathrm P^2)^{i_2}(\mathrm P^1)^{i_1}(\mathrm P^0)^{i_0}u\big)\p_u\mid
i_3,i_2,i_1,i_0\in\mathbb N_0,\, i_3-i_0+i_2-i_1=m\big\rangle,\quad
m\in\mathbb Z.
\]
\end{remark}

\begin{remark}\label{rem:HeatEqGenSyms}
The structure of the algebra~$\Sigma_{\rm h}$ of generalized symmetries of the linear (1+1)-dimensional heat equation
\begin{gather}\label{eq:HeatEq}
u_t=u_{xx}
\end{gather}
is similar to that of the algebra~$\Sigma$.
Indeed, the algebra $\Sigma_{\rm h}$
splits over its infinite-dimensional ideal $\Sigma_{\rm h}^{-\infty}$
associated with the linear superposition of solutions of~\eqref{eq:HeatEq},
$\Sigma_{\rm h}=\Sigma_{\rm h}^{\rm ess}\lsemioplus\Sigma_{\rm h}^{-\infty}$.
The complementary subalgebra $\Sigma_{\rm h}^{\rm ess}$ to the ideal~$\Sigma_{\rm h}^{-\infty}$
in the algebra~$\Sigma_{\rm h}$ coincides with the algebra~$\Lambda_{\rm h}$
of linear generalized symmetries of~\eqref{eq:HeatEq}, see~\cite{kova2023b}.
In view of \cite[Corollary~21]{kova2023b},
it is anti-isomorphic to the Lie algebra arising from
the quotient of the universal enveloping algebra~$\mathfrak U\big({\rm h}(1,\mathbb R)\big)$
of the rank-one Heisenberg algebra ${\rm h}(1,\mathbb R)$
by the principal two-sided ideal $(c+1)$ generated by~$c+1$,
\[\Lambda_{\rm h}\simeq\big(\mathfrak U\big({\rm h}(1,\mathbb R)\big)/(c+1)\big)^{\mbox{\tiny$(-)$}},\]
see Remark~\ref{rem:WeylAlgebra}.
Hence it is isomorphic to the Lie algebra ${\rm W}(1,\mathbb R)^{\mbox{\tiny$(-)$}}$
associated with the first Weyl algebra ${\rm W}(1,\mathbb R)$.
\end{remark}

Since any ultraparabolic linear second-order partial differential equation with three independent variables
whose essential Lie invariance algebra is eight-dimensional
is similar to the remarkable Fokker--Planck equation~\eqref{eq:RemarkableFP}
modulo point transformations, we obtain the following assertion.

\begin{corollary}\label{cor:GenSymsEquivRemarkableFP}
The algebra of canonical representatives of generalized symmetries
of any ultraparabolic linear second-order partial differential equation~$\mathcal L$
with the three independent variables $(t,x,y)$ and the dependent variable~$u$
whose essential Lie invariance algebra is eight-dimensional
is $\tilde\Sigma=\tilde\Lambda\lsemioplus\tilde\Sigma^{-\infty}$,
where
\begin{gather*}
\tilde\Lambda=\big\langle\big(
(\tilde{\mathrm P}^3)^{i_3}(\tilde{\mathrm P}^2)^{i_2}(\tilde{\mathrm P}^1)^{i_1}(\tilde{\mathrm P}^0)^{i_0}
u\big)\p_u\mid i_0,i_1,i_2,i_3\in\mathbb N_0\big\rangle,\quad
\tilde\Sigma^{-\infty}:=\big\{h(t,x,y)u\p_u\big\},
\end{gather*}
$\tilde{\mathrm P}^0$, \dots, $\tilde{\mathrm P}^3$ are Lie-symmetry operators of~$\mathcal L$
that are associated with (additional to $u\p_u$) basis elements of the radical
of the essential Lie invariance algebra of~$\mathcal L$,
and the parameter function~$h$ runs through the solution set of~$\mathcal L$.
\end{corollary}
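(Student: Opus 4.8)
The plan is to obtain the assertion from Theorem~\ref{thm:GenSymsRemarkableFP} by transporting it along the point equivalence between the remarkable Fokker--Planck equation~\eqref{eq:RemarkableFP} and the equation~$\mathcal L$. By the assertion preceding the corollary, there is a point transformation~$\Phi$ mapping~\eqref{eq:RemarkableFP} onto~$\mathcal L$. Since~$\Phi$ prolongs to the infinite-order jet space and maps the solution manifold of~\eqref{eq:RemarkableFP} onto that of~$\mathcal L$, its pushforward~$\Phi_*$ induces a Lie-algebra isomorphism between the algebra~$\Sigma$ of canonical representatives of generalized symmetries of~\eqref{eq:RemarkableFP} and the corresponding algebra~$\tilde\Sigma$ for~$\mathcal L$, see \cite[Section~5.1]{olve1993A}. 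Thus it suffices to show that~$\Phi_*$ carries the splitting $\Sigma=\Lambda_{\mathfrak r}\lsemioplus\Sigma^{-\infty}$ of Theorem~\ref{thm:GenSymsRemarkableFP} onto the claimed splitting $\tilde\Sigma=\tilde\Lambda\lsemioplus\tilde\Sigma^{-\infty}$.

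First I would use the fact that within the class of homogeneous linear second-order partial differential equations every equivalence transformation is fiber-linear in the dependent variable: $\Phi$ acts as a diffeomorphism of the independent variables together with a multiplication $u\mapsto\lambda(t,x,y)u$ by a nowhere-vanishing factor, modulo a superposition symmetry that does not affect canonical representatives. From this fiber-linearity I would deduce that~$\Phi_*$ preserves the three structural ingredients of the splitting. It maps the superposition ideal~$\Sigma^{-\infty}$ onto~$\tilde\Sigma^{-\infty}$, because multiplication by~$\lambda$ composed with the base diffeomorphism sends the solution set of~\eqref{eq:RemarkableFP} linearly onto that of~$\mathcal L$. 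It maps the subalgebra~$\Lambda=\Lambda_{\mathfrak r}$ of linear generalized symmetries onto the subalgebra~$\tilde\Lambda$ of linear generalized symmetries of~$\mathcal L$, since a characteristic that is linear in~$u$ and its derivatives is sent to a characteristic of the same form. And it respects the order filtration, so these images again split the target algebra.

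It then remains to identify the generators of~$\Phi_*(\Lambda_{\mathfrak r})$ with those in the statement. Because the essential Lie invariance algebra of~$\mathcal L$ is eight-dimensional, it is isomorphic to~$\mathfrak g^{\rm ess}$, so its radical is five-dimensional with one-dimensional center spanned by~$u\p_u$; the operators~$\tilde{\mathrm P}^0,\dots,\tilde{\mathrm P}^3$ are the Lie-symmetry operators attached to the four remaining basis elements. Since~$\Phi$ maps~$\mathfrak g^{\rm ess}$ isomorphically onto the essential Lie invariance algebra of~$\mathcal L$, preserving the radical and its center, the images~$\Phi_*(\mathcal P^j)$ are linear combinations of these radical basis elements, and the pushforward of the associated first-order Lie-symmetry operator~$\mathrm P^j$ is the corresponding first-order Lie-symmetry operator of~$\mathcal L$. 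Hence~$\Phi_*$ sends the generating set $(\mathrm P^3)^{i_3}(\mathrm P^2)^{i_2}(\mathrm P^1)^{i_1}(\mathrm P^0)^{i_0}u\,\p_u$ of~$\Lambda_{\mathfrak r}$ into the algebra generated by the action of~$\tilde{\mathrm P}^0,\dots,\tilde{\mathrm P}^3$ on~$u\p_u$, which is exactly~$\tilde\Lambda$, and applying~$\Phi_*$ to $\Sigma=\Lambda_{\mathfrak r}\lsemioplus\Sigma^{-\infty}$ yields the required decomposition.

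I expect the main obstacle to be the two compatibility claims underlying the last paragraph: establishing that the equivalence transformation is genuinely fiber-linear in~$u$, and tracking the induced change of basis in the radical so that the generating set built from the \emph{intrinsic} operators~$\tilde{\mathrm P}^0,\dots,\tilde{\mathrm P}^3$ of~$\mathcal L$, rather than from some fixed preferred basis, coincides with the image of~$\Lambda_{\mathfrak r}$. Once fiber-linearity and the correct behaviour of Lie-symmetry operators under pushforward are secured, the rest is a formal transport of Theorem~\ref{thm:GenSymsRemarkableFP} along~$\Phi_*$.
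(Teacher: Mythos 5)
Your proposal is correct and follows essentially the same route as the paper: the paper derives the corollary in one sentence by transporting Theorem~\ref{thm:GenSymsRemarkableFP} along the point transformation that maps the remarkable Fokker--Planck equation to~$\mathcal L$, which is exactly your argument. You merely make explicit the details the paper leaves implicit (fiber-linearity of the equivalence transformation, preservation of the superposition ideal, the radical and its center, and basis-independence of the generated algebra~$\tilde\Lambda$), all of which are sound.
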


Thus, the algebra~$\tilde\Sigma$ splits over
the algebra $\tilde\Sigma^{-\infty}$ of Lie symmetries of the equation~$\mathcal L$
related to the linear superposition of solutions of this equation,
which is the maximal abelian ideal in~$\tilde\Sigma$.
Its subalgebra complement in~$\tilde\Sigma$ is
the algebra $\tilde\Lambda$ of canonical representatives of linear generalized symmetries of~$\mathcal L$,
which is isomorphic to the Lie algebra ${\rm W}(2,\mathbb R)^{\mbox{\tiny$(-)$}}$
associated with the second Weyl algebra ${\rm W}(2,\mathbb R)$.

Specific ultraparabolic linear second-order partial differential equation with three independent variables
whose essential Lie invariance algebra is eight-dimensional
appeared in the literature independently of the equation~\eqref{eq:RemarkableFP}.
In particular, these are
the Kolmogorov backward equation~\cite{zhan2020a}\looseness=-1
\begin{gather}\label{eq:Power5FP}
u_t+xu_y=x^5u_{xx}
\end{gather}
and the Kramers equations~\cite{spic1999a}
\begin{gather*}
u_t+xu_y=\gamma u_{xx}+\gamma(x-\tfrac34\gamma y)u_x+\gamma u,\\
u_t+xu_y=\gamma u_{xx}+\gamma(x+\tfrac3{16}\gamma y)u_x+\gamma u.
\end{gather*}
Point transformations mapping these equations to the equation~\eqref{eq:RemarkableFP}
were found in~\cite[Eq.~(6)]{kova2024c} and~\cite[Section~9]{kova2023a}, respectively.
The pushforwards~$\tilde{\mathrm P}^0$, \dots, $\tilde{\mathrm P}^3$
of ~$\mathrm P^0$, \dots, $\mathrm P^3$ by these transformations take the form
\begin{gather*}
\tilde{\mathrm P}^0=\mathrm D_t,\quad
\tilde{\mathrm P}^1=y\mathrm D_t-x^2\mathrm D_x+x,\\
\tilde{\mathrm P}^2=y^2\mathrm D_t-2x^2y\mathrm D_x+2yx+x^{-1},\quad
\tilde{\mathrm P}^3=y^3\mathrm D_t-3x^2y^2\mathrm D_x+3(xy^2-t+x^{-1}y),
\\[1.5ex]
\tilde{\mathrm P}^0={\rm e}^{-\frac32\gamma t}\left(\tfrac1\gamma\mathrm D_y-\tfrac32\mathrm D_x\right),\quad
\tilde{\mathrm P}^1={\rm e}^{-\frac12\gamma t}\left(\tfrac1\gamma\mathrm D_y-\tfrac12\mathrm D_x
-\tfrac12(x+\tfrac32\gamma y)\right),\\
\tilde{\mathrm P}^2={\rm e}^{\frac12\gamma t}\left(\tfrac1\gamma\mathrm D_y+\tfrac12\mathrm D_x\right),\quad
\tilde{\mathrm P}^3={\rm e}^{\frac32\gamma t}\left(\tfrac1\gamma\mathrm D_y+\tfrac32\mathrm D_x
+\tfrac32(x-\tfrac12\gamma y)\right),
\\[1.5ex]
\tilde{\mathrm P}^0={\rm e}^{-\frac34\gamma t}\left(\tfrac1\gamma\mathrm D_y-\tfrac34\mathrm D_x\right),\quad
\tilde{\mathrm P}^1={\rm e}^{-\frac14\gamma t}\left(\tfrac2\gamma\mathrm D_y-\tfrac12\mathrm D_x\right),\\
\tilde{\mathrm P}^2={\rm e}^{\frac14\gamma t}\left(\tfrac4\gamma\mathrm D_y+\mathrm D_x
+x+\tfrac34\gamma y\right),\quad
\tilde{\mathrm P}^3={\rm e}^{\frac34\gamma t}\left(\tfrac8\gamma\mathrm D_y+6\mathrm D_x
+6(x+\tfrac14\gamma y)\right).
\end{gather*}
For the equation~\eqref{eq:Power5FP} as an evolution equation,
the choice of canonical representatives of its generalized symmetries
in Corollary~\ref{cor:GenSymsEquivRemarkableFP} is not standard.
One can fix this point via replacing, in view of the equation~\eqref{eq:Power5FP},
the operator~$\mathrm D_t$ in~$\tilde{\mathcal P}^3$,~\dots, $\tilde{\mathcal P}^0$
by the operator~$x^5\mathrm D_x^{\,2}-x\mathrm D_y$.
At the same time, this replacement complicates the expanded form of the representatives and increases their order.

\noprint{
Elements $\tilde{\mathcal P}^0$, \dots $\tilde{\mathcal P}^3$
of the radicals of the corresponding essential Lie invariance algebras
take the form
\begin{gather*}
\tilde{\mathcal P}^0=\p_t,\quad
\tilde{\mathcal P}^1=y\p_t-x^2\p_x-xu\p_u,\\
\tilde{\mathcal P}^2=y^2\p_t-2x^2y\p_x-(2yx+x^{-1})u\p_u,\quad
\tilde{\mathcal P}^3=y^3\p_t-3x^2y^2\p_x-3(xy^2-t+x^{-1}y)u\p_u,
\\[1.5ex]
\tilde{\mathcal P}^0={\rm e}^{-\frac32\gamma t}\left(\tfrac1\gamma\p_y-\tfrac32\p_x\right),\quad
\tilde{\mathcal P}^1={\rm e}^{-\frac12\gamma t}\left(\tfrac1\gamma\p_y-\tfrac12\p_x
+\tfrac12(x+\tfrac32\gamma y)u\p_u\right),\\
\tilde{\mathcal P}^2={\rm e}^{\frac12\gamma t}\left(\tfrac1\gamma\p_y+\tfrac12\p_x\right),\quad
\tilde{\mathcal P}^3={\rm e}^{\frac32\gamma t}\left(\tfrac1\gamma\p_y+\tfrac32\p_x
-\tfrac32(x-\tfrac12\gamma y)u\p_u\right),
\\[1.5ex]
\tilde{\mathcal P}^0={\rm e}^{-\frac34\gamma t}\left(\tfrac1\gamma\p_y-\tfrac34\p_x\right),\quad
\tilde{\mathcal P}^1={\rm e}^{-\frac14\gamma t}\left(\tfrac2\gamma\p_y-\tfrac12\p_x\right),\\
\tilde{\mathcal P}^2={\rm e}^{\frac14\gamma t}\left(\tfrac4\gamma\p_y+\p_x-(x+\tfrac34\gamma y)u\p_u\right),\quad
\tilde{\mathcal P}^3\todo={\rm e}^{\frac34\gamma t}\left(\tfrac8\gamma\p_y+6\p_x-6(x+\tfrac14\gamma y)u\p_u\right).
\end{gather*}
}

\section{Generalized symmetries and finding exact solutions}\label{sec:ExactSolutions}

Generalized symmetries can be used for finding exact solutions of systems of differential equations,
advancing the well-established methods that are based on Lie symmetries.
However, the computations involving generalized symmetries
are much more complicated than their counterparts involving Lie symmetries.
This is why such usage of generalized symmetries is not quite common in the literature.
Furthermore, in the case of linear systems of differential equations,
it has a number of specific features, which we would like to discuss in general
before the particular consideration of the equation~\eqref{eq:RemarkableFP}.

\subsection{Theoretical background}\label{sec:ExactSolutionsTheory}

Given a system~$\mathcal L$ of differential equations
for unknown functions $u=(u^1,\dots,u^m)$ of independent variables $x=(x_1,\dots,x_n)$, $m,n\in\mathbb N$,
with point-symmetry (pseudo)group~$G$, maximal Lie invariance algebra~$\mathfrak g$
and generalized-symmetry algebra~$\Sigma$,
one can use these objects or their parts to construct particular exact solutions of~$\mathcal L$.
The range of possibilities include the following \cite{blum2010A,blum1989A,boch1999A,ibra1985A,kras1986A,olve1977a}.

\medskip\par\noindent
\emph{Generation of solutions via acting by point symmetries}.
If $u=f(x)$ is a solution of~$\mathcal L$ and $\Phi$ is an arbitrary element of~$G$,
then $u=\Phi_*f(x)$ (resp.\ $u=\Phi^*f(x)$) is a solution of~$\mathcal L$ as well.
Here $\Phi_*$ and $\Phi^*$ denote the pushforward and pullback of functions by
the local diffeomorphism~$\Phi$, respectively.
This induce an equivalence relation of the solution set of~$\mathcal L$,
which we call the $G$-equivalence.
All the constructions of exact solutions of~$\mathcal L$ can be carried out up to
this equivalence.

\medskip\par\noindent
\emph{Lie reductions}.
For any subalgebra~$\mathfrak s$ of~$\mathfrak g$ that satisfies the transversality condition
\cite[Eq.~(3.34)]{olve1993A},
the system~$\mathcal S$ of differential constraints $Q_1[u]=0$, \dots, $Q_r[u]=0$,
where $Q_1$,~\dots, $Q_r$ constitute a basis of~$\mathfrak s$
and $Q_i[u]$ denotes the characteristic of~$Q_i$, $i=1,\dots,r$,
is formally compatible with~$\mathcal L$.
The solution set of the extended system $\mathcal L\cup\mathcal S$
coincides with the set of $\mathfrak s$-invariant solutions of~$\mathcal L$.
Since the solution of~$\mathcal S$ reduces
to the solution of a system of $r$ first-order quasilinear partial differential equations
with respect to a single function of $n+m$ independent variables,
the system~$\mathcal S$ can be integrated, which gives an ansatz for $u$
in terms of a tuple~$\phi$ of new unknown functions of $n-r$ arguments.
Substituting this ansatz to~$\mathcal L$, one obtain the so-called reduced system~$\mathcal L/\mathfrak s$
for~$\phi$ with a less number of independent variables than that in the original system~$\mathcal L$.
Any solution of~$\mathcal L/\mathfrak s$ gives, via the ansatz, a solution of~$\mathcal L$.
$G$-equivalent subalgebras of~$\mathfrak g$ result in $G$-equivalent families of invariant solutions.
Hence, in view of the previous points,
only $G$-inequivalent subalgebras of~$\mathfrak g$ of should be used for Lie reductions.

\medskip\par\noindent
\emph{Solutions invariant with respect to generalized symmetries}.
Similarly to Lie reductions, consider a (finite-dimensional) subalgebra~$\Theta$ of~$\Sigma$
with a basis $(\mathfrak Q_1,\dots,\mathfrak Q_r)$.
The system~$\mathcal S$ of differential constraints $\mathfrak Q_1[u]=0$, \dots, $\mathfrak Q_r[u]=0$,
where $\mathfrak Q_i[u]$ denotes the characteristic of~$\mathfrak Q_i$, $i=1,\dots,r$,
is formally compatible with~$\mathcal L$.
At the same time, in contrast to Lie reductions, there is no unified way to integrate
the extended system $\mathcal L\cup\mathcal S$,
and the deep analysis of the structure of the algebra~$\Sigma$ becomes important in this context.
Since the group~$G$ acts on generalized symmetries as well,
only $G$-inequivalent subalgebras of~$\Sigma$ should be used for
constructing solutions that are invariant with respect to generalized symmetries.

\medskip

All the above techniques also work if $\mathcal L$ is a homogeneous linear system of differential equations,
but their application has specific features in this case.
In particular, the generation of solutions via acting by point symmetries
can be extended using the intermediate complexification.
More specifically, if the system~$\mathcal L$ has real-analytical coefficients,
we can complexify it, assuming both independent and dependent variables to be complex.
After finding families of exact solutions of the complexified system~$\mathcal L$,
we extend these families by acting with the complexified version of the group~$G$.
Then assuming the independent variables to be real,
we take the real and imaginary parts of the obtained solutions,
thus constructing new families of (real) exact solutions of~$\mathcal L$.

Moreover, denote by~$\Lambda$ the algebra of linear generalized symmetries of~$\mathcal L$,
which is a subalgebra of~$\Sigma$,
and by~$\Upsilon$ the associative algebra of linear differential recursion operators of~$\mathcal L$.
As discussed in the beginning of Section~\ref{sec:RemarkableFPPolynomialSols},
$\mathrm Q\in\Upsilon$ if and only if $(\mathrm Qu)\p_u\in\Lambda$ \cite[Proposition~5.22]{olve1993A},
and then the function~$\mathrm Qh$ is a solution of~$\mathcal L$ whenever the function~$h$ is.
This gives one more specific way for generating solutions of linear systems of differential equations.
One should select only $G^{\rm ess}$-inequivalent elements%
\footnote{%
It is a quite common situation for a linear system~$\mathcal L$ of differential equations that
the corresponding point-symmetry (pseudo)group~$G$ consists of fiber-preserving transformations
that are affine with respect to the dependent variables.
Let $G^{\rm ess}$ denote the (pseudo)subgroup of all elements of~$G$
that induce homogeneous linear transformations of the unknown functions of~$\mathcal L$.
The natural action of group~$G^{\rm ess}$ preserves
the algebra~$\Lambda$ of linear generalized symmetries of~$\mathcal L$.
Hence the $G^{\rm ess}$-equivalence of these symmetries is well-defined
and induces the $G^{\rm ess}$-equivalence of linear differential recursion operators of~$\mathcal L$.
}
of~$\Upsilon$, but even this selection does not guarantee nontrivial results.
In addition, it is necessary to carefully analyze the action
of particular elements of~$\Upsilon$ on known families of solutions of~$\mathcal L$~\cite{kova2024a}.
For convenience, we call $\langle(\mathrm Qu)\p_u\rangle$-invariant solutions of~$\mathcal L$
merely \emph{$\mathrm Q$-invariant solutions} of~$\mathcal L$.
Thus, a solution~$h$ of~$\mathcal L$ is $\mathrm Q$-invariant if and only if $\mathrm Qh=0$.
It is clear that given an element~$\mathrm Q$ of~$\Upsilon$,
the action by any element from the principal left ideal $\Upsilon\mathrm Q$
on any $\mathrm Q$-invariant solution of~$\mathcal L$ results in the zero solution of~$\mathcal L$.
More generally, any element~$\mathrm R$ of~$\Upsilon$ with $\mathrm Q\mathrm R\in\Upsilon\mathrm Q$
maps the space of $\mathrm Q$-invariant solutions of~$\mathcal L$ to itself.
In particular, this is the case for any element of $\Upsilon\mathrm Q+\mathrm C_{\Upsilon}(\mathrm Q)$, i.e.,
for the sum of any elements of the principal left ideal $\Upsilon\mathrm Q$
and of the centralizer~$\mathrm C_{\Upsilon}(\mathrm Q)$ of~$\mathrm Q$ in~$\Upsilon$.
This is why the study of algebraic structure of~$\Upsilon$ is relevant in the context
of generating solutions of~$\mathcal L$ via acting by elements of~$\Upsilon$.

The relation of linear generalized symmetries of linear partial differential equations
with separation of variables in these equations is well known for a long time~\cite{mill1977A}.

For the further consideration, we need some auxiliary statements.

\begin{lemma}\label{lem:OperatorPolymomialKer}
Suppose that $\mathbb F$ is an algebraically closed field,
$V$ is a vector space over~$\mathbb F$, $A\in{\rm End}(V)$,
$P\in\mathbb F[x]$,
$\lambda_1$, \dots, $\lambda_r$ are all the distinct roots of~$P$ with multiplicities~$k_1$, \dots, $k_r$,
respectively.
Denote by ${\rm id}$ the identity endomorphism on~$V$.
Then
\[
\ker P(A)=\bigoplus_{i=1}^r\ker(A-\lambda_i{\rm id})^{k_i}.
\]
\end{lemma}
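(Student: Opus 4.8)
The plan is to recognize this as the primary (generalized-eigenspace) decomposition theorem for the restriction of~$A$ to $\ker P(A)$, and to prove it through the pairwise coprimality of the factors of~$P$ combined with B\'ezout's identity. Since $\mathbb F$ is algebraically closed, $P$ splits completely, so I would write $P(x)=c\prod_{i=1}^r(x-\lambda_i)^{k_i}$ with a nonzero constant~$c$. The inclusion~$\supseteq$ is immediate, and I would dispose of it first: for each~$i$ the factor $(x-\lambda_i)^{k_i}$ divides~$P$, whence $\ker(A-\lambda_i{\rm id})^{k_i}\subseteq\ker P(A)$, so the right-hand sum is contained in $\ker P(A)$.

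For the reverse inclusion and the directness of the sum simultaneously, I would build a complete system of orthogonal idempotents out of~$A$. Setting $P_i(x):=P(x)/(x-\lambda_i)^{k_i}$, the polynomials $P_1,\dots,P_r$ share no common root and are therefore globally coprime, so B\'ezout furnishes $a_1,\dots,a_r\in\mathbb F[x]$ with $\sum_{i=1}^r a_iP_i=1$. I would then define $e_i:=a_i(A)P_i(A)\in{\rm End}(V)$; all of these commute with~$A$ and with one another. The properties I would establish are: $\sum_i e_i={\rm id}$ on all of~$V$ (this is exact, being the image of the B\'ezout relation under $x\mapsto A$); and, because $(A-\lambda_i{\rm id})^{k_i}P_i(A)=P(A)$ and because $P$ divides $P_iP_j$ for $i\ne j$, both $(A-\lambda_i{\rm id})^{k_i}e_i$ and $e_ie_j$ (for $i\ne j$) are polynomial multiples of $P(A)$.

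Restricting to the $A$-invariant subspace $W:=\ker P(A)$, these congruences become honest identities: on~$W$ one has $\sum_i e_i|_W={\rm id}_W$, $e_ie_j|_W=0$ for $i\ne j$, and $e_i(W)\subseteq\ker(A-\lambda_i{\rm id})^{k_i}$. From the first two it follows that each $e_i|_W$ is idempotent, that $e_i$ restricts to the identity on $\ker(A-\lambda_i{\rm id})^{k_i}$, and that $e_i$ annihilates $\ker(A-\lambda_j{\rm id})^{k_j}$ for $j\ne i$. Writing $v=\sum_i e_iv$ for $v\in W$ then yields $\ker P(A)\subseteq\sum_i\ker(A-\lambda_i{\rm id})^{k_i}$, and applying $e_j$ to a vanishing combination $\sum_i v_i=0$ with $v_i\in\ker(A-\lambda_i{\rm id})^{k_i}$ forces $v_j=0$, which secures directness.

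The conceptual content is entirely standard, so I do not anticipate a genuine obstacle; the one place demanding care is verifying the two divisibility facts that turn the $e_i$ into orthogonal idempotents on~$W$, namely tracking precisely which products of the factors $(x-\lambda_i)^{k_i}$ are divisible by~$P$. This is exactly where the global coprimality of the $P_i$ and the complete splitting of~$P$ (guaranteed by the algebraic closedness of~$\mathbb F$) enter, and keeping that bookkeeping straight is the only real task.
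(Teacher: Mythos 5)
Your proof is correct, but it follows a different route from the paper's. You run the classical Chinese-Remainder/idempotent argument: set $P_i=P/(x-\lambda_i)^{k_i}$, use a B\'ezout identity $\sum_i a_iP_i=1$ for the whole coprime family, and turn the operators $e_i=a_i(A)P_i(A)$ into a complete system of orthogonal idempotents on $W=\ker P(A)$, which simultaneously yields the spanning statement and the directness. The paper instead peels off a single root at a time: it factors $P(x)=R(x)(x-\lambda)^k$ with $R(\lambda)\ne0$, reduces to showing $\ker P(A)=\ker R(A)\oplus\ker(A-\lambda\,{\rm id})^k$, produces the $\ker A^k$-component of a given $v$ explicitly via a truncated geometric series $\check v=R(0)^{-1}\sum_{j=0}^{k}(Q(A))^j w$ that inverts $R(A)$ modulo $\ker A^k$, and invokes B\'ezout (for just two polynomials) only to prove directness. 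Your version is more symmetric and dispatches all roots at once, at the cost of the bookkeeping you correctly identify (which products $P_iP_j$ and $(x-\lambda_i)^{k_i}P_i$ are divisible by $P$) and of the $r$-term B\'ezout identity; the paper's version is more hands-on, needs only the two-polynomial coprimality, and gives an explicit formula for the projection onto the generalized eigenspace, but requires an (implicit) induction over the roots to assemble the full decomposition. All the steps you outline check out: $\sum_ie_i={\rm id}$ exactly, $e_ie_j$ and $(A-\lambda_i{\rm id})^{k_i}e_i$ are multiples of $P(A)$ and hence vanish on $W$, $e_i$ acts as the identity on $\ker(A-\lambda_i{\rm id})^{k_i}$ and kills the other summands, so both inclusions and the directness follow. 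No gap.
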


\begin{proof}
Let $\lambda$ be a root of the polynomial~$P$ with multiplicity~$k$.
Then this polynomial can be factored in the form $P(x)=R(x)(x-\lambda)^k$,
where $R\in\mathbb F[x]$ with $R(\lambda)\ne0$.
For the proof of the lemma, it suffices to show that
\[\ker P(A)=\ker R(A)\oplus\ker(A-\lambda\,{\rm id})^k.\]

Shifting~$x$, $x-\lambda\mapsto x$, or, equivalently, replacing~$A$ by $A+\lambda\,{\rm id}$,
we can assume without loss of generality that $\lambda=0$,
and thus $P(x)=R(x)x^k$ with $R(0)\ne0$.
We represent $R$ as $R(x)=R(0)(1-Q(x))$, where $Q\in x\mathbb F[x]$.
Let $v\in\ker P(A)$. Then $R(A)v=w$, where $w\in\ker A^k$.
Hence $v=\hat v+\check v$, where $\hat v\in\ker R(A)$ and
\[
\check v=\frac1{R(0)}\sum_{j=0}^k\big(Q(A)\big)^jw\in\ker A^k.
\]
Therefore, $\ker P(A)=\ker R(A)+\ker A^k$.
Suppose that $v\in\ker R(A)\cap\ker A^k$.
Since the polynomial~$R$ and the monomial~$x^k$ are coprime,
according to B\'ezout's identity,
there exist polynomials~$N$ and~$M$ such that $N(x)R(x)+M(x)x^k=1$.
Then $v=\big(N(A)R(A)+M(A)A^k\big)v=0$.
This proves that the above sum of kernels is direct.
\end{proof}

\begin{corollary}\label{cor:OperatorPolymomialKer}
Suppose that $V$ is a vector space over a field~$\mathbb F$, $A\in{\rm End}(V)$,
$P\in\mathbb F[x]$, and $P=P_1\cdots P_s$ is a factorization with coprime polynomials~$P_1$, \dots, $P_s$.
Then
\[\ker P(A)=\ker P_1(A)\oplus\dots\oplus\ker P_s(A).\]
\end{corollary}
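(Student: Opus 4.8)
The plan is to prove the statement by induction on the number~$s$ of factors, reducing everything to the two-factor case and to the core mechanism that already drives the proof of Lemma~\ref{lem:OperatorPolymomialKer}, namely B\'ezout's identity. The crucial observation is that this identity is available over an \emph{arbitrary} field, since $\mathbb F[x]$ is a principal ideal domain; no appeal to algebraic closure is required. This is precisely why the corollary holds in the stated generality even though Lemma~\ref{lem:OperatorPolymomialKer} was formulated only over an algebraically closed field.

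First I would settle the base case $s=2$, where $P=P_1P_2$ with $P_1$ and $P_2$ coprime. Since each $P_i$ divides~$P$ and all polynomials in~$A$ commute, the inclusions $\ker P_i(A)\subseteq\ker P(A)$ are immediate. For the reverse inclusion, B\'ezout's identity furnishes $N,M\in\mathbb F[x]$ with $NP_1+MP_2=1$. Given $v\in\ker P(A)$, I would write
\[
v=M(A)P_2(A)v+N(A)P_1(A)v,
\]
and check that the first summand lies in $\ker P_1(A)$ and the second in $\ker P_2(A)$, using $P_1(A)M(A)P_2(A)v=M(A)P(A)v=0$ and symmetrically for the other summand. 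This gives $\ker P(A)=\ker P_1(A)+\ker P_2(A)$. To see that the sum is direct, I would take $v\in\ker P_1(A)\cap\ker P_2(A)$ and apply the same identity to obtain $v=\big(N(A)P_1(A)+M(A)P_2(A)\big)v=0$, exactly as in the directness argument of Lemma~\ref{lem:OperatorPolymomialKer}.

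For the induction step, suppose the claim holds for any $s-1$ pairwise coprime factors. Observing that $P_1$ is coprime to the product $P_2\cdots P_s$ (a product of polynomials each coprime to~$P_1$), I would apply the two-factor case to the factorization $P=P_1\cdot(P_2\cdots P_s)$, obtaining
\[
\ker P(A)=\ker P_1(A)\oplus\ker\big((P_2\cdots P_s)(A)\big),
\]
and then invoke the induction hypothesis on the second summand to split it as $\ker P_2(A)\oplus\dots\oplus\ker P_s(A)$. Combining the two decompositions yields the asserted direct sum. No step presents a genuine obstacle; the only subtlety worth flagging is the verification that coprimality is preserved under taking products, which is routine in a principal ideal domain and is exactly what keeps B\'ezout's identity applicable at each stage of the induction.
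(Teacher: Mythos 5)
Your proof is correct, and it takes a genuinely different route from the paper. The paper derives the corollary from Lemma~\ref{lem:OperatorPolymomialKer} by passing to the algebraic closure $\bar{\mathbb F}$: it forms $\bar V:=\bar{\mathbb F}\otimes_{\mathbb F}V$ and $\bar A:=1_{\bar{\mathbb F}}\otimes_{\mathbb F}A$, applies the lemma to $\bar A$ (grouping the roots of $P$ according to the coprime factors $P_j$), and then intersects the resulting decomposition of $\ker P(\bar A)$ with $V$. Your argument instead works directly over the arbitrary field $\mathbb F$: the two-factor case is the standard B\'ezout (Chinese-Remainder-style) splitting $v=M(A)P_2(A)v+N(A)P_1(A)v$, and the general case follows by induction once one notes that pairwise coprimality passes to products. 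What each approach buys: the paper's proof is short because it reuses the already-established lemma, but it imports the base-change machinery and quietly relies on the fact that the components of a vector $v\in V$ under the decomposition of $\bar V$ again lie in $V$ --- a point your proof never has to confront, since your components $M(A)P_2(A)v$ and $N(A)P_1(A)v$ are manifestly in $V$. Your argument is more elementary and self-contained; indeed, applied to the factorization $P=R\cdot(x-\lambda)^k$ it reproduces (and slightly streamlines) the key splitting step inside the proof of Lemma~\ref{lem:OperatorPolymomialKer} itself, so the logical dependence could even be reversed. The only point worth making explicit is the one you already flag: the hypothesis must be read as \emph{pairwise} coprimality, which is what makes $P_1$ coprime to $P_2\cdots P_s$ in the induction step.
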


\begin{proof}
If the field~$\mathbb F$ is algebraically closed, the corollary's statement
directly follows from Lemma~\ref{lem:OperatorPolymomialKer}.

Otherwise, let $\bar{\mathbb F}$ be the algebraic closure of~$\mathbb F$,
and $\bar V:=\bar{\mathbb F}\otimes_{\mathbb F}V$ and
$\bar A:=1_{\bar{\mathbb F}}\otimes_{\mathbb F}A\colon\bar V\to\bar V$
be the counterparts of~$V$ and~$A$ under this closure.
The vector space is naturally embedded in~$\bar V$ via identifying it with $1_{\bar{\mathbb F}}\otimes_{\mathbb F}V$.
In view of Lemma~\ref{lem:OperatorPolymomialKer}, we have
${\ker P(\bar A)=\ker P_1(\bar A)\oplus\dots\oplus\ker P_s(\bar A)}$.
Since $\ker P(A)=\ker P(\bar A)\cap V$ and $\ker P_j(A)=\ker P_j(\bar A)\cap V$, $j=1,\dots,s$,
we obtain the required equality for~$A$.
\end{proof}

\begin{lemma}\label{lem:OperatorPowerKer}
Suppose that $V$ is a vector space over a field~$\mathbb F$,
$A,B\in{\rm End}(V)$, $C:=[A,B]$ commutes with~$A$ and $C\ker A=\ker A$.
Then for any $r\in\mathbb N$,
\begin{gather}\label{eq:OperatorPowerKer}
\ker A^r=\sum_{i=0}^{r-1}B^i\ker A.
\end{gather}
\end{lemma}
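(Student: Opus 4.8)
The plan is to establish the two inclusions in~\eqref{eq:OperatorPowerKer} separately. The inclusion $\supseteq$ will be routine, while $\subseteq$ will be proved by induction on~$r$, the base case $r=1$ being the tautology $\ker A=\ker A$. Everything will rest on a single preliminary identity and one slightly surprising consequence of it, so I would set those up first.

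The preliminary identity is the commutator formula
\[
A^mB=BA^m+mA^{m-1}C,\qquad m\in\mathbb N,
\]
which follows by an immediate induction on~$m$ from $AB=BA+C$ together with the hypothesis $[A,C]=0$; the latter is exactly what makes all the intermediate contributions collapse into the single term $mA^{m-1}C$. The same hypothesis shows that $C$ commutes with every power of~$A$, and hence that $C$ preserves each subspace $\ker A^m$.

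The key step, and the one I expect to be the main obstacle, is to show that on $\ker A$ the \emph{balanced} products behave exactly as in the harmonic-oscillator model, namely
\[
A^mB^mz=m!\,C^mz\qquad\text{for every }z\in\ker A,\ m\in\mathbb N_0.
\]
A priori one would fear a cascade of iterated commutators $[B,C],\,[B,[B,C]],\dots$ spoiling this formula, since nothing at all is assumed about $[B,C]$. The point is that these corrections cancel. Proving the displayed equality by induction on~$m$, I would expand $A^{m+1}B^{m+1}z=A^m(AB)B^mz=A^mB\,(AB^mz)+A^m C\,B^mz$, push the leading $A^mB$ through with the commutator formula, and use repeatedly that $A$ commutes with~$C$ and annihilates~$z$. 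Every ``junk'' term then carries a factor $A^{k}z=0$ or $A^{k}C^{j}z=C^{j}A^{k}z=0$ with $k\ge1$, and what survives is precisely $(m+1)!\,C^{m+1}z$. This identity is the crux; once it is available the remainder is formal.

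With it in hand, the inclusion $\supseteq$ is immediate: for $z\in\ker A$ and $i\le r-1$ one has $A^{i+1}B^iz=A\bigl(i!\,C^iz\bigr)=i!\,C^i(Az)=0$, so $B^i\ker A\subseteq\ker A^{i+1}\subseteq\ker A^r$. For the inclusion $\subseteq$, take $v\in\ker A^r$, so that $A^{r-1}v\in\ker A$. The hypothesis $C\ker A=\ker A$ says $C$ is surjective on $\ker A$, hence so is $C^{r-1}$, and therefore the endomorphism $z\mapsto A^{r-1}B^{r-1}z=(r-1)!\,C^{r-1}z$ of $\ker A$ is surjective. Choosing $z\in\ker A$ with $A^{r-1}B^{r-1}z=A^{r-1}v$ gives $A^{r-1}(v-B^{r-1}z)=0$, i.e.\ $v-B^{r-1}z\in\ker A^{r-1}$, which by the induction hypothesis lies in $\sum_{i=0}^{r-2}B^i\ker A$; adding back $B^{r-1}z\in B^{r-1}\ker A$ yields $v\in\sum_{i=0}^{r-1}B^i\ker A$ and closes the induction. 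The only role of the ground field is the invertibility of $(r-1)!$, which is automatic in the characteristic-zero setting of the paper.
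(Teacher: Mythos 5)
Your proof is correct and follows essentially the same route as the paper's: both establish the inclusion $\supseteq$ from a commutator identity pushing powers of $A$ past powers of $B$, and both prove $\subseteq$ by induction on $r$ using the identity $A^{r-1}B^{r-1}z=(r-1)!\,C^{r-1}z$ for $z\in\ker A$ together with the surjectivity of $C$ (hence of $C^{r-1}$) on $\ker A$ to subtract off a suitable element of $B^{r-1}\ker A$. The only cosmetic difference is that the paper packages the first inclusion via the factorization $A^rB^i=(rC+BA)\cdots((r-i+1)C+BA)A^{r-i}$ rather than your balanced identity, and your closing remark about the invertibility of $(r-1)!$ applies equally to the paper's own argument.
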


The condition $[A,C]=0$ implies $C\ker A\subseteq\ker A$,
but we need the stronger condition $C\ker A=\ker A$.

\begin{proof}
Denote the direct sum in the right-hand side of the last equality by~$U_r$.

We can show by induction with respect to~$i$ that for any~$r>i$,
\[
A^rB^i=\big(rC+BA\big)\big((r-1)C+BA\big)\cdots\big((r-i+1)C+BA\big)A^{r-i}.
\]
Therefore, $B^i\ker A\subseteq\ker A^r$ for any~$r>i$, $i\in\mathbb N$,
which obviously implies $U_r\subseteq\ker A^r$.

The inverse inclusion $\ker A^r\subseteq U_r$ is proved by the induction with respect to~$r$.
The induction base $r=1$ is obvious since $U_1=\ker A$.
Let $r>1$. The induction hypothesis is $\ker A^{r-1}\subseteq U_{r-1}$.
Suppose that $v\in\ker A^r$.
Then $A^{r-1}v=w$, where $w\in\ker A$.
Since $C\ker A=\ker A$, then $C^k\ker A=\ker A$ for any $k\in\mathbb N$.
For any $\hat v\in\ker A$, we have $A^{r-1}B^{r-1}\hat v=(r-1)!C^{r-1}\hat v$.
Therefore, there exists $\hat v\in\ker A$ such that $A^{r-1}B^{r-1}\hat v=w$,
and then $\check v:=v-B^{r-1}\hat v\in\ker A^{r-1}$.
As a result, $v=B^{r-1}\hat v+\check v\in B^{r-1}\ker A+\ker A^{r-1}\subseteq B^{r-1}\ker A+U_{r-1}=U_r$.
\end{proof}

\begin{remark}\label{rem:OperatorPowerKer}
If in addition to the condition of Lemma~\ref{lem:OperatorPowerKer},
we have $\ker A\cap\ker C=\{0\}$,
then the sum in~\eqref{eq:OperatorPowerKer} is direct.
Indeed, then the operator~$C$ and, therefore, all its powers are injective on~$\ker A$.
Let $v\in B^i\ker A\cap B^j\ker A$, $i<j$.
Then $0=A^jv=A^jB^i\hat v=A^jB^j\hat v=(j-1)!C^{j-1}\hat v$,
which implies that $v=0$.
\end{remark}

\begin{corollary}\label{cor:OperatorPowerKer}
Suppose that $V$ is a vector space over a field~$\mathbb F$,
$A,B\in{\rm End}(V)$ such that $[A,B]=c\,{\rm id}$ with some nonzero $c\in\mathbb F$.
Then for any $r\in\mathbb N$,
\[
\ker A^r=\bigoplus_{i=0}^{r-1}B^i\ker A.
\]
\end{corollary}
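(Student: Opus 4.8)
The plan is to derive this corollary as an immediate specialization of Lemma~\ref{lem:OperatorPowerKer} together with Remark~\ref{rem:OperatorPowerKer}, taking the commutator operator to be the nonzero scalar operator $C:=[A,B]=c\,{\rm id}$. First I would verify that the hypotheses of Lemma~\ref{lem:OperatorPowerKer} are satisfied in this setting. Since $C=c\,{\rm id}$ is a scalar multiple of the identity, it trivially commutes with~$A$. Moreover, because $c\neq0$, the map $v\mapsto cv$ is a bijection of~$V$, and in particular it restricts to a bijection of the subspace~$\ker A$ onto itself, so that $C\ker A=c\ker A=\ker A$. Thus both conditions of Lemma~\ref{lem:OperatorPowerKer} hold, and it directly yields the equality $\ker A^r=\sum_{i=0}^{r-1}B^i\ker A$ for every $r\in\mathbb N$.

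It then remains only to upgrade this sum to a direct sum, which is exactly the content of Remark~\ref{rem:OperatorPowerKer} under the supplementary assumption $\ker A\cap\ker C=\{0\}$. Here $C=c\,{\rm id}$ with $c\neq0$, so $\ker C=\{v\in V\mid cv=0\}=\{0\}$, and hence $\ker A\cap\ker C=\{0\}$ holds automatically. Applying Remark~\ref{rem:OperatorPowerKer} therefore gives the directness of the sum, which completes the proof.

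Since every step reduces to an elementary verification about the scalar operator $c\,{\rm id}$, I do not anticipate any genuine obstacle: the entire corollary is a clean specialization of Lemma~\ref{lem:OperatorPowerKer}, with the two nontrivial hypotheses there (namely $C\ker A=\ker A$ and, for directness, $\ker A\cap\ker C=\{0\}$) both becoming automatic consequences of the scalar~$c$ being invertible. The only point worth stating explicitly is precisely this reduction, so that the reader sees why the abstract conditions of the lemma and remark collapse to triviality in the canonical-commutation-relation case.
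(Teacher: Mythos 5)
Your proposal is correct and matches the paper's intent exactly: the corollary is stated without proof precisely because it is the specialization of Lemma~\ref{lem:OperatorPowerKer} and Remark~\ref{rem:OperatorPowerKer} to $C=c\,{\rm id}$, and your verification that the hypotheses $[A,C]=0$, $C\ker A=\ker A$, and $\ker A\cap\ker C=\{0\}$ all become automatic for an invertible scalar is the whole content of the argument.
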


We use the above assertions in the context of finding
solutions that are invariant with respect to generalized symmetries.
If generalized vector field $(\mathrm Qu)\p_u$ is a (linear) generalized symmetry of~$\mathcal L$,
then for any polynomial~$P$ of~$\mathrm Q$,
the generalized vector field $(P(\mathrm Q)u)\p_u$ is such as well.
If $P=P_1\cdots P_s$ is a factorization with coprime polynomials~$P_1$,~\dots, $P_s$,
then the space of $P(\mathrm Q)$-invariant solutions of~$\mathcal L$
is the direct sum of the spaces of $P_j(\mathrm Q)$-invariant solutions of~$\mathcal L$, $j=1,\dots,s$.
Therefore, when finding $P(\mathrm Q)$-invariant solutions of~$\mathcal L$
for fixed~$\mathrm Q$ and various~$P$,
it suffices to consider only the polynomials $P(x)=(x-\lambda)^k$, $k\in\mathbb N$, $\lambda\in\mathbb C$,
and, for $\lambda\in\mathbb C\setminus\mathbb R$ in the real case,
realify the corresponding space of invariant solutions.
In addition, one should select only $G^{\rm ess}$-inequivalent recursion operators~$\mathrm Q$,
where the action of~$G^{\rm ess}$ on such operators, $\mathrm Q$,
is induced by the action on the corresponding linear generalized symmetries, $(\mathrm Qu)\p_u$.
Moreover, finding $(\mathrm Q-\lambda E)^k$-invariant solutions of~$\mathcal L$,
where $E$ is the $m\times m$ identity matrix,
is simplified when the operator $\mathrm Q-\lambda E$ fits into the framework of Lemma~\ref{lem:OperatorPowerKer}.
More specifically, we assume that $V$ is the solution space of~$\mathcal L$.
If there exists $\mathrm S\in\Upsilon$ such that
the operator $[\mathrm Q,\mathrm S]$ commutes with~$\mathrm Q$
and its restriction to the kernel of~$\mathrm Q-\lambda E$ is a surjection,
then the space of $(\mathrm Q-\lambda E)^k$-invariant solutions of~$\mathcal L$
can be constructed as the sum of the images
of the space of $(\mathrm Q-\lambda E)$-invariant solutions of~$\mathcal L$
under successive action of~$\mathrm S$ up to $k-1$ times.

In fact, the generalized reduction procedure can be systematically realized only
for linear generalized symmetries of the form $(P(\mathrm Q)u)\p_u$,
where $\mathrm Q$ is in addition a first-order Lie-symmetry operator,
$\mathrm Q=(\xi^i(x)\mathrm D_i)E-H(x)$ with
an $m\times m$ matrix function $H(x)=\big(\eta^{ab}(x)\big)$ of~$x$, and $\xi^i\ne0$ for some~$i$.
Here and in what follows the index~$i$ runs from~1 to~$n$,
the indices~$a$ and~$b$ run from~1 to~$m$,
and we assume summation with respect to repeated indices.
As stated above for the case of general~$\mathrm Q$,
it suffices to consider very particular polynomials~$P$.
More specifically, let $\lambda_1$,~\dots, $\lambda_r$ be all the distinct roots of~$P$ over~$\mathbb C$,
and let $k_j$ be the multiplicity of~$\lambda_j$, $j=1,\dots,r$.
Then the space of $P(\mathrm Q)$-invariant solutions of~$\mathcal L$
decomposes into the direct sum of the spaces of
$(\mathrm Q-\lambda_jE)^{k_j}$-invariant solutions of~$\mathcal L$ for each $\lambda_j\in\mathbb R$
and
$(\mathrm Q-\lambda_jE)^{k_j}(\mathrm Q-\bar\lambda_jE)^{k_j}$-invariant solutions of~$\mathcal L$
for each unordered pair $\{\lambda_j,\bar\lambda_j\}$ with $\lambda_j\in\mathbb C\setminus\mathbb R$,
where $\bar\lambda_j$ denotes the complex conjugate of~$\lambda_j$.
In other words, the construction of solutions of~$\mathcal L$ that are invariant with respect to
generalized symmetries associated with polynomials of a single Lie-symmetry operators~$\mathrm Q$
reduces to the case when polynomials are powers of binomials $\mathrm Q-\lambda E$
and, if $\lambda\in\mathbb C\setminus\mathbb R$,
the separation of the real and imaginary parts of the obtained solutions.
For each binomial power $(\mathrm Q-\lambda E)^k$,
the corresponding ansatz is just a representation of the general solution of the system $(\mathrm Q-\lambda E)^ku=0$,
and thus it takes the form
\begin{gather}\label{eq:GenRedAnsatzGeneralForm}
u^a={\rm e}^{\lambda\zeta}\sum_{j=0}^{k-1}\sum_{b=1}^mf^{ab}(x)\varphi^{bj}(\omega_1,\dots,\omega_{n-1})\zeta^j,
\quad a=1,\dots,m.
\end{gather}
Here
$\omega_1=\omega_1(x)$, \dots, $\omega_{n-1}=\omega_{n-1}(x)$ are functionally independent solutions
of the equation $\xi^i(x)\p_i\omega=0$,
$\zeta=\zeta(x)$ is a particular solution of the equation $\xi^i(x)\p_i\zeta=1$, and
$(f^{1b},\dots,f^{mb})$ are linearly independent solutions of the system $\xi^i(x)\p_if^a=\eta^{ab}(x)f^b$.
The selection of $G^{\rm ess}$-inequivalent Lie-symmetry operators reduces to
the classification of one-dimensional subalgebras
in the essential Lie invariance algebra of the system~$\mathcal L$.
Substituting the ansatz~\eqref{eq:GenRedAnsatzGeneralForm} into this system
results in a homogeneous linear system of differential equations
with respect to the new unknown functions~$\varphi^{bj}$
of less number of independent variables~$\omega_1$,~\dots,~$\omega_{n-1}$.

\subsection{Point symmetries and Lie-invariant solutions}

Now we present the results on point symmetries and Lie reductions of the equation~\eqref{eq:RemarkableFP}
that were obtained in~\cite{kova2023a,kova2024c}.

\begin{theorem}[\cite{kova2023a}]\label{thm:RemarkableFPSymGroup}
The complete point symmetry pseudogroup~$G$ of the remarkable Fokker--Planck equation~\eqref{eq:RemarkableFP}
consists of the transformations of the form
\begin{gather}\label{eq:RemarkableFPSymGroup}
\begin{split}
&\tilde t=\frac{\alpha t+\beta}{\gamma t+\delta},
\quad
\tilde x=\frac{\hat x}{\gamma t+\delta}
-\frac{3\gamma\hat y}{(\gamma t+\delta)^2},
\quad
\tilde y=\frac{\hat y}{(\gamma t+\delta)^3},
\\[1ex]
&\tilde u=\sigma(\gamma t+\delta)^2\exp\left(
\frac{\gamma\hat x^2}{\gamma t+\delta}
-\frac{3\gamma^2\hat x\hat y}{(\gamma t+\delta)^2}
+\frac{3\gamma^3\hat y^2}{(\gamma t+\delta)^3}
\right)
\\
&\hphantom{\tilde u={}}
\times\exp\big(
3\lambda_3(y-tx)-\lambda_2x-(3\lambda_3^2t^3+3\lambda_3\lambda_2t^2+\lambda_2^2t)
\big)
\big(u+f(t,x,y)\big),
\end{split}
\end{gather}
where
$\hat x:=x+3\lambda_3t^2+2\lambda_2t+\lambda_1$,
$\hat y:=y+\lambda_3t^3+\lambda_2t^2+\lambda_1t+\lambda_0$;
$\alpha$, $\beta$, $\gamma$ and $\delta$ are arbitrary constants with $\alpha\delta-\beta\gamma=1$;
$\lambda_0$,~\dots, $\lambda_3$ and $\sigma$ are arbitrary constants with $\sigma\ne0$,
and $f$ is an arbitrary solution of~\eqref{eq:RemarkableFP}.	
\end{theorem}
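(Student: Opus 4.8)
The plan is to determine~$G$ by the direct method, looking for \emph{all} invertible point transformations $\Phi\colon(t,x,y,u)\mapsto(\tilde t,\tilde x,\tilde y,\tilde u)$ that map the solution set of~\eqref{eq:RemarkableFP} onto itself; the advantage is that completeness, including any discrete symmetries, is then obtained within a single computation rather than by a separate analysis of the component group. First I would show that every such~$\Phi$ is fiber-preserving and affine in the dependent variable, that is, $\tilde t=T(t,x,y)$, $\tilde x=X(t,x,y)$, $\tilde y=Y(t,x,y)$ and $\tilde u=\Psi(t,x,y)\,u+\Theta(t,x,y)$ with $\Psi\neq0$. This follows by expanding $\tilde u_{\tilde x\tilde x}$ via the chain rule: any dependence of $(T,X,Y)$ on~$u$ or any nonlinearity of~$\tilde u$ in~$u$ generates terms that are nonlinear in the jet coordinates~$u_{kl}$ and cannot be cancelled on the solution manifold of the linear equation~\eqref{eq:RemarkableFP}. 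By linearity the inhomogeneous part~$\Theta$ decouples and must pull back to a solution of~\eqref{eq:RemarkableFP}, which is exactly the arbitrary solution~$f$ added to~$u$ in~\eqref{eq:RemarkableFPSymGroup}; the homogeneous transformation $\tilde u=\Psi u$ then has to be a symmetry on its own.

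I would next substitute this homogeneous transformation into $\tilde u_{\tilde t}+\tilde x\tilde u_{\tilde y}-\tilde u_{\tilde x\tilde x}=0$, re-express all transformed derivatives through the total derivatives $\mathrm D_t,\mathrm D_x,\mathrm D_y$ and the inverse Jacobian of $(T,X,Y)$, restrict the result to the manifold~$\mathscr F$ by eliminating $t$-derivatives via $u_t=u_{20}-xu_{01}$ and its differential consequences, and split the remaining identity with respect to the free jet coordinates~$u_{kl}$. The principal (second-order) part is decisive: since $\mathrm D_x^2$ is the only second-order operator in~\eqref{eq:RemarkableFP}, matching the coefficients of the second-order jet variables $u_{20}$, $u_{11}$ and $u_{02}$ forces $T_x=Y_x=0$ together with the heat-type scaling relation $X_x^{\,2}=T_t$ tying the $x$- and $t$-rescalings, while the first-order coefficients further give $T_y=0$, so that $T=T(t)$ and $Y=Y(t,y)$. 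The lower-order coefficients then produce the remaining coupled equations for $X$, $Y$ and~$\Psi$.

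The heart of the proof is the integration of this coupled system, and I expect the drift term~$xu_y$ to be the principal source of difficulty, since it rigidly links the rescalings of~$x$ and~$y$ to that of~$t$. Integrating the constraint on~$T$ forces it to be a M\"obius transformation $\tilde t=(\alpha t+\beta)/(\gamma t+\delta)$, which after fixing the inessential common scale is normalized by $\alpha\delta-\beta\gamma=1$; this is precisely the exponentiated Levi factor $\mathfrak f\simeq{\rm sl}(2,\mathbb R)$. The homogeneous solutions of the subsystem for~$X$ and~$Y$ contribute the four integration constants $\lambda_0,\dots,\lambda_3$, which realize the four-dimensional irreducible summand~$\rho_3$ of the $\mathfrak f$-module~$\mathfrak r$, whereas solving the final first-order equation for~$\Psi$ yields the scalar~$\sigma\neq0$ and the explicit exponential multiplier in~\eqref{eq:RemarkableFPSymGroup}. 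The two signs of~$\sigma$ carry the only essential discrete freedom, namely the linear reflection $u\mapsto-u$; the spatial reflection $(x,y)\mapsto(-x,-y)$ is already contained in the connected piece as the image of $-I\in{\rm SL}(2,\mathbb R)$, obtained for $\alpha=\delta=-1$, $\beta=\gamma=0$.

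Finally I would check by direct substitution that each transformation of the form~\eqref{eq:RemarkableFPSymGroup} is indeed a symmetry and that this family is closed under composition and inversion, so that it constitutes a pseudogroup; since the direct method exhausts all point symmetries, this identifies it with the complete pseudogroup~$G$. The main obstacle is the middle step: untangling the over-determined, nonlinearly coupled determining system for $(T,X,Y,\Psi)$ and verifying that its general solution assembles exactly into the ${\rm sl}(2,\mathbb R)\lsemioplus{\rm h}(2,\mathbb R)$ pattern visible in~\eqref{eq:RemarkableFPSymGroup}, with careful bookkeeping of the integration constants needed to confirm that no additional continuous or discrete symmetries appear.
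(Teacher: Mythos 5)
Your proposal follows the direct method — postulating a general point transformation, establishing fiber-preservation and affineness in~$u$, splitting the pushed-forward equation on the solution manifold, and integrating the resulting determining system — which is precisely the approach by which this theorem is established in the cited source~\cite{kova2023a}; the present paper only imports the result and gives no proof of its own. Your outline, including the identification of the M\"obius action of the Levi factor, the four translation-type parameters $\lambda_0,\dots,\lambda_3$, and the single essential discrete symmetry $u\mapsto-u$ carried by $\sgn\sigma$, is consistent with the stated form~\eqref{eq:RemarkableFPSymGroup}.
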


Pulling back an arbitrary solution $u=h(t,x,y)$ of~\eqref{eq:RemarkableFP}
by an arbitrary point symmetry transformation of the form~\eqref{eq:RemarkableFPSymGroup},
we obtain, in the notation of Theorem~\ref{thm:RemarkableFPSymGroup},
the formula of generating new solutions of~\eqref{eq:RemarkableFP}
from known ones under the action of elements of~$G$,
\begin{gather}\label{eq:RemarkableFPNewSolutionsByG0}
\begin{split}
u={}&
\frac{{\rm e}^{\lambda_2x-3\lambda_3(y-tx)+3\lambda_3^2t^3+3\lambda_3\lambda_2t^2+\lambda_2^2t}}{\sigma(\gamma t+\delta)^2}
\exp\left(
-\frac{\gamma\hat x^2}{\gamma t+\delta}
+\frac{3\gamma^2\hat x\hat y}{(\gamma t+\delta)^2}
-\frac{3\gamma^3\hat y^2}{(\gamma t+\delta)^3}
\right)
\\
&
\times
h\left(
\frac{\alpha t+\beta}{\gamma t+\delta},\,
\frac{\hat x}{\gamma t+\delta}-\frac{3\gamma\hat y}{(\gamma t+\delta)^2},\,
\frac{\hat y}{(\gamma t+\delta)^3}
\right)
-f(t,x,y).
\end{split}
\end{gather}
Due to the complexification trick, applying the formula~\eqref{eq:RemarkableFPNewSolutionsByG0}
for generating solutions to a real analytical solution of the equation~\eqref{eq:RemarkableFP},
one can assume all the constant parameters in~\eqref{eq:RemarkableFP} to be complex
and then take the real and imaginary parts of the obtained solutions.

We use the modified transformation composition as the group operation in $G$.
More specifically, this composition respects the natural domains of transformations of the form~\eqref{eq:RemarkableFPNewSolutionsByG0},
see~\cite[Section~3]{kova2023a} for details.
The point transformations of the form
\[
\mathscr Z(f)\colon\quad \tilde t=t,\quad \tilde x=x,\quad \tilde y=y,\quad \tilde u=u+f(t,x,y),
\]
where the parameter function $f=f(t,x,y)$ is an arbitrary solution of the equation~\eqref{eq:RemarkableFP},
are associated with the linear superposition of solutions of this equation
and thus can be considered as trivial.
They constitute the normal pseudosubgroup $G^{\rm lin}$ of the pseudogroup $G$.
The pseudogroup~$G$ splits over~$G^{\rm lin}$, $G=G^{\rm ess}\ltimes G^{\rm lin}$,
where $G^{\rm ess}$ is the \emph{subgroup} of~$G$ consisting of the transformations of the form~\eqref{eq:RemarkableFPSymGroup} with $f=0$
and with their natural domains, and thus it is an eight-dimensional Lie group.

We exhaustively carried out Lie reductions of the equation~\eqref{eq:RemarkableFP}
in \cite{kova2023a}, beginning with the classification
of $G^{\rm ess}$-inequivalent one- and two-dimensional subalgebras of $\mathfrak g^{\rm ess}$.

\begin{lemma}\label{lem:RemarkableFP1DSubalgs}
A complete list of $G^{\rm ess}$-inequivalent one-dimensional subalgebras of $\mathfrak g^{\rm ess}$ is exhausted by the subalgebras
\begin{gather*}
\mathfrak s_{1.1}=\langle\mathcal P^t+\mathcal P^3\rangle,\ \
\mathfrak s_{1.2}^\delta=\langle\mathcal P^t+\delta\mathcal I\rangle,\ \
\mathfrak s_{1.3}^\nu=\langle\mathcal D+\nu\mathcal I\rangle,\ \
\mathfrak s_{1.4}^\mu=\langle\mathcal P^t+\mathcal K+\mu\mathcal I\rangle,\\
\mathfrak s_{1.5}^\varepsilon=\langle\mathcal P^2+\varepsilon\mathcal P^0\rangle,\ \
\mathfrak s_{1.6}=\langle\mathcal P^1\rangle,\ \
\mathfrak s_{1.7}=\langle\mathcal P^0\rangle,\ \
\mathfrak s_{1.8}=\langle\mathcal I\rangle,
\end{gather*}
where $\varepsilon\in\{-1,1\}$, $\delta\in\{-1,0,1\}$, and $\mu$ and~$\nu$ are arbitrary real constants with $\nu\geqslant0$.
\end{lemma}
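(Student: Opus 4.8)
The plan is to classify the nonzero elements $X\in\mathfrak g^{\rm ess}$ up to the adjoint action of the connected group~$G^{\rm ess}$ and up to nonzero rescaling (which does not change the spanned line), exploiting the Levi decomposition $\mathfrak g^{\rm ess}=\mathfrak f\lsemioplus\mathfrak r$. Write a general element as $X=X_{\mathfrak f}+X_{\mathfrak r}$ with $X_{\mathfrak f}=a_1\mathcal P^t+a_2\mathcal D+a_3\mathcal K$ and $X_{\mathfrak r}=b_3\mathcal P^3+b_2\mathcal P^2+b_1\mathcal P^1+b_0\mathcal P^0+c\mathcal I$. Since $\mathfrak r$ is an ideal, the projection $\mathfrak g^{\rm ess}\to\mathfrak g^{\rm ess}/\mathfrak r\simeq\mathfrak f$ intertwines the $G^{\rm ess}$-action with the adjoint action of the Levi subgroup ${\rm SL}(2,\mathbb R)$ on~$\mathfrak f$. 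First I would normalize~$X_{\mathfrak f}$ by the standard classification of one-dimensional subspaces of ${\rm sl}(2,\mathbb R)$ according to the sign of the invariant $a_1a_3-a_2^2$ (the Killing-form sign), giving four cases with representatives $X_{\mathfrak f}\in\{0,\ \mathcal P^t,\ \mathcal D,\ \mathcal P^t+\mathcal K\}$ -- the radical, parabolic, hyperbolic and elliptic cases. Because $G^{\rm ess}$ is connected and acts on~$\mathfrak f$ through $PSL(2,\mathbb R)\simeq SO^+(2,1)$, I would record which reflections are realizable: the spacelike direction~$\mathcal D$ admits an inner flip $\mathcal D\mapsto-\mathcal D$ (the Weyl element), while the timelike direction $\mathcal P^t+\mathcal K$ does not, and the null ray~$\langle\mathcal P^t\rangle$ is only positively rescaled by its stabilizer.

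In each case the residual freedom is the stabilizer of the normalized~$X_{\mathfrak f}$ together with the unipotent subgroup corresponding to~$\mathfrak r$. Conjugating by $\exp Z$, $Z\in\mathfrak r$, yields $X+[Z,X_{\mathfrak f}]+[Z,X_{\mathfrak r}]+\tfrac12[Z,[Z,X]]+\cdots$, where $[Z,X_{\mathfrak f}]=-{\rm ad}_{X_{\mathfrak f}}Z$ lies in the $\rho_3$-part and $[Z,X_{\mathfrak r}]\in\langle\mathcal I\rangle$ by the Heisenberg relations. In the hyperbolic and elliptic cases ${\rm ad}_{X_{\mathfrak f}}$ acts on the four-dimensional $\rho_3$-space with nonzero eigenvalues (respectively $3,1,-1,-3$ and four nonzero purely imaginary ones), hence invertibly, so the whole $\rho_3$-part is removed and only $c\mathcal I$ survives. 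Since $\mathcal I$ spans the trivial summand~$\rho_0$ and is central, its surviving coefficient is a genuine invariant: the Weyl flip sends $\nu\mapsto-\nu$ in $\mathcal D+\nu\mathcal I$, giving $\mathfrak s_{1.3}^\nu$ with $\nu\geqslant0$ and $\nu$ a true continuous modulus, whereas no flip exists for the timelike $\mathcal P^t+\mathcal K+\mu\mathcal I$, giving $\mathfrak s_{1.4}^\mu$ with $\mu\in\mathbb R$.

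In the radical case $X_{\mathfrak f}=0$ the problem becomes the classification of nonzero elements of $\mathfrak r\simeq{\rm h}(2,\mathbb R)$ under $\rho_3\oplus\rho_0$ and the inner automorphisms of the Heisenberg algebra. Here I would first note that the Heisenberg commutator restricts to a nondegenerate symplectic pairing on the $\rho_3$-part, so whenever that part is nonzero the central coefficient~$c$ is annihilated by a suitable $\exp Z$; the four-dimensional part is then governed by the ${\rm SL}(2,\mathbb R)$-orbits of binary cubic forms, separated by the root pattern of the associated cubic (triple root, double-plus-simple root, three distinct real roots, or one real and a complex-conjugate pair). These produce $\mathfrak s_{1.7}=\langle\mathcal P^0\rangle$, $\mathfrak s_{1.6}=\langle\mathcal P^1\rangle$, and $\mathfrak s_{1.5}^\varepsilon=\langle\mathcal P^2+\varepsilon\mathcal P^0\rangle$ with $\varepsilon=\pm1$ recording the sign of the discriminant (the two real forms of one complex orbit), while a vanishing $\rho_3$-part leaves the center $\mathfrak s_{1.8}=\langle\mathcal I\rangle$.

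The remaining case, $X_{\mathfrak f}=\mathcal P^t$, is where I expect the main subtlety, since ${\rm ad}_{\mathcal P^t}$ is a regular nilpotent on the $\rho_3$-space with one-dimensional kernel~$\langle\mathcal P^0\rangle$ and cokernel~$\langle\mathcal P^3\rangle$. Thus only three of the four $\rho_3$-directions can be eliminated, leaving a highest-weight term $b\,\mathcal P^3$ together with $c\mathcal I$; the careful part is disentangling these two surviving coefficients using the residual parabolic stabilizer and the relation $[\mathcal P^0,\mathcal P^3]=3\mathcal I$. Concretely, if $b\neq0$ the central term is removed by an $\exp(s\mathcal P^0)$-conjugation and the sign of~$b$ is flipped by the torus element scaling $\mathcal P^t\mapsto a^2\mathcal P^t$ and $\mathcal P^3\mapsto a^3\mathcal P^3$ with $a<0$, collapsing everything to $\mathfrak s_{1.1}=\langle\mathcal P^t+\mathcal P^3\rangle$; if $b=0$ the same torus rescales $c$ by the positive factor $a^{-2}$ while fixing its sign, leaving $\mathfrak s_{1.2}^\delta=\langle\mathcal P^t+\delta\mathcal I\rangle$ with $\delta\in\{-1,0,1\}$. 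Finally I would verify pairwise inequivalence of the eight families by comparing the Killing-form sign of their Levi projections together with the discriminant and root invariants of their radical parts, which separates all cases and pins down the admissible ranges of $\varepsilon$, $\delta$, $\mu$ and~$\nu$.
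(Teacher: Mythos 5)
The paper states this lemma without proof: it is imported from~\cite{kova2023a}, as the sentence immediately preceding it indicates, so there is no internal argument to compare yours against. Judged on its own merits, your proposal is correct and arrives at exactly the eight families with the right parameter ranges. Your route is more structural than the coefficient-by-coefficient normalization under explicitly written inner automorphisms that is standard in this literature (and used in~\cite{kova2023a}): you classify the Levi projection by the sign of the Killing form, kill the $\rho_3$-part by invertibility of ${\rm ad}_{X_{\mathfrak f}}$ in the hyperbolic and elliptic cases, identify the surviving obstruction in the nilpotent case with the one-dimensional cokernel $\langle\mathcal P^3\rangle$ of a regular nilpotent, and reduce the radical case to the ${\rm SL}(2,\mathbb R)$-orbits of binary cubics separated by their root patterns. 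This buys conceptual clarity and makes the pairwise-inequivalence check nearly automatic, since each family is labelled by invariants (Killing sign; vanishing or not of the cokernel class; discriminant of the cubic; the central coefficient).

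Two steps should be made explicit in a full writeup. First, the sign flip $b\mapsto-b$ in $\mathcal P^t+b\,\mathcal P^3$ via a torus element with $a<0$ is legitimate only because $G^{\rm ess}$ acts on the radical through the genuine ${\rm SL}(2,\mathbb R)$-representation $\varrho_3\oplus\varrho_0$, so that $-E$ acts as $-1$ on the $\rho_3$-part while fixing $\mathcal P^t$, $\mathcal D$, $\mathcal K$ and $\mathcal I$; this does not follow from the Lie-algebra data alone and must be read off from the explicit form of the transformations in Theorem~\ref{thm:RemarkableFPSymGroup} (the map $(t,x,y)\mapsto(t,-x,-y)$ is indeed a point symmetry). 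Second, the assertion that the surviving central coefficients $\nu$ and $\mu$ are genuine moduli requires the observation that a conjugation by $\exp Z$ with $Z$ in the $\rho_3$-part necessarily reintroduces the nonzero $\rho_3$-component $-{\rm ad}_{X_{\mathfrak f}}Z$, so the only residual freedom is the normalizer of $\langle X_{\mathfrak f}\rangle$ in the Levi subgroup, which fixes $\mathcal I$, acts on $\mathcal D$ by $\pm1$, and cannot reverse the timelike direction $\mathcal P^t+\mathcal K$. Both facts are true and easily supplied, so I regard your proposal as a correct proof in outline.
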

The families of solutions of the equation~\eqref{eq:RemarkableFP}
that are invariant with respect to the subalgebras
$\mathfrak s_{1.2}^0$, $\mathfrak s_{1.5}^\varepsilon$, $\mathfrak s_{1.6}$ and $\mathfrak s_{1.7}$
are parameterized by the general solutions of the (1+1)-dimensional linear heat equations
with the zero and the inverse square potentials,
\begin{gather}\label{eq:F0HeatSolution0}
\solution
u=|x|^{-\frac14}\vartheta^\mu\Big(\tfrac94\tilde\varepsilon y,|x|^{\frac32}\Big)
\quad\mbox{with}\quad \mu=\tfrac5{36},\quad \tilde\varepsilon:=\sgn x,
\\[.5ex]\label{eq:F0HeatSolution1}
\solution
u=|t|^{-\frac12}{\rm e}^{-\frac{x^2}{4t}}\vartheta^0
\Big(\tfrac13{t^3}+2\varepsilon t-t^{-1},2y-(t+\varepsilon t^{-1})x\Big)
\quad\mbox{with}\quad \varepsilon\in\{-1,1\},
\\[.5ex]\label{eq:F0HeatSolution2}
\solution
u=\vartheta^0\Big(\tfrac13t^3,y-tx\Big),
\\[.5ex]\label{eq:F0HeatSolution3}
\solution
u=\vartheta^0(t,x),
\end{gather}
where $\vartheta^\mu=\vartheta^\mu(z_1,z_2)$
is an arbitrary solution of the equation $\vartheta^\mu_1=\vartheta^\mu_{22}+\mu z_2^{-2}\vartheta^\mu$.
It was shown in~\cite{kova2023a} that only Lie reductions of codimension one give essentially new solutions.
The solutions obtained using Lie reductions of codimensions two and three are superfluous
since they give solutions that are $G^{\rm ess}$-equivalent to
elements of the families~\eqref{eq:F0HeatSolution0}--\eqref{eq:F0HeatSolution3}
corresponding to known invariant solutions of the equations $\vartheta^\mu_1=\vartheta^\mu_{22}+\mu z_2^{-2}\vartheta^\mu$ with $\mu\in\{0,\frac5{36}\}$.
Recall that a complete collection of inequivalent Lie invariant solutions
of the (1+1)-dimensional linear heat equation ($\mu=0$)
was presented in Examples 3.3 and 3.17 in~\cite{olve1993A}
and then enhanced in~\cite[Section~A]{vane2021a}.
An analogous collection for all nonzero values of $\mu$ was constructed in \cite[Section~A]{kova2023a},
see also~\cite{gung2018a,gung2018b}.

The sole possibility to extend the Lie reduction procedure using the complexification trick
is to assume the real constants appearing in optimal lists of subalgebras,
like~$\mu$ and~$\nu$ in Lemma~\ref{lem:RemarkableFP1DSubalgs}, to be complex.

\subsection{Generation of solutions by recursion operators}\label{sec:ExactSolutionsGenerationFP}

The analysis of generating solutions of the remarkable Fokker--Planck equation~\eqref{eq:RemarkableFP}
via acting by elements of the associative algebra~$\Upsilon$ of its linear differential recursion operators
was initiated in~\cite{kova2023a,kova2024c}.
Therein, only solutions that are invariant with respect to one-dimensional subalgebras
of the (nil)radical~$\mathfrak r$ of~$\mathfrak g^{\rm ess}$ were considered as seed ones.
In this section, we systematically analyze the use of arbitrary Lie-invariant solutions of~\eqref{eq:RemarkableFP}
within this framework.
It suffices to consider the solutions of~\eqref{eq:RemarkableFP}
that are invariant with respect to the $G^{\rm ess}$-inequivalent one-dimensional subalgebras
of~$\mathfrak g^{\rm ess}$ that are listed in Lemma~\ref{lem:RemarkableFP1DSubalgs}.
Recall that the subalgebra~$\mathfrak s_{1.8}$ is not appropriate for Lie reduction.
Acting on an arbitrary solution $u=h(t,x,y)$ of the equation~\eqref{eq:RemarkableFP} by an element
\begin{gather}\label{eq:UpsilonGenElement}
\mathrm Q=\sum_{(i_3,i_2,i_1,i_0)\in\mathbb N_0^4}c_{i_3i_2i_1i_0}
(\mathrm P^3)^{i_3}(\mathrm P^2)^{i_2}(\mathrm P^1)^{i_1}(\mathrm P^0)^{i_0}
\end{gather}
of the algebra~$\Upsilon=\Upsilon_{\mathfrak r}$,
where only finitely many real constants $c_{i_3i_2i_1i_0}$ are nonzero,
we obtain a~solution $\mathrm Qh$ of~\eqref{eq:RemarkableFP}.
The problem is that applying this procedure to a known solution, one may obtain a solution that is known as well.

For each of fixed subalgebra~$\mathfrak s_{1.k}^*$ from Lemma~\ref{lem:RemarkableFP1DSubalgs},
where $*$ denotes a value of the tuple of parameters of the $k$th subalgebra family, $k=1,\dots,7$,
we denote by~$\mathscr S_k^*$ the family of $\mathfrak s_{1.k}^*$-invariant solutions of~\eqref{eq:RemarkableFP}
and by $\mathrm B$ the element of~$\Upsilon$
that is associated up to the multiplier $-1$ with its canonical basis element.
We directly compute low-degree generators%
\footnote{%
Finding the entire centralizers~$\mathrm C_\Upsilon(\mathrm B)$
is a nontrivial problem for most elements~$\mathrm B$ of~$\Upsilon\simeq{\rm W}(2,\mathbb R)$.
See~\cite{dixm1968a} and references therein,
where the analogous problem is considered for the simpler first Weyl algebra ${\rm W}(1,\mathbb R)$.
}
of the centralizer~$\mathrm C_\Upsilon(\mathrm B)$ of~$\mathrm B$ in~$\Upsilon$
and expand an arbitrary $\mathrm Q\in\Upsilon$, i.e., an arbitrary operator of the form~\eqref{eq:UpsilonGenElement}
with respect to a basis of~$\Upsilon$
in which powers of these generators are right multipliers in the basis elements.
Since the action by elements of~$\mathrm C_\Upsilon(\mathrm B)$ preserves the family~$\mathscr S_k^*$,
such an expansion allows us to determine the action of which elements of~$\Upsilon$
may lead, up to linearly combining solutions, to new solutions of~\eqref{eq:RemarkableFP}.
In the course of this analysis, we have to use the explicit representation~\eqref{eq:PDKRepresentations}
of the operators~$\hat{\mathrm P}^t$, $\hat{\mathrm D}$ and~$\hat{\mathrm K}$
in terms of the canonical generators of the algebra~$\Upsilon$.

In the expansions of~$\mathrm Q$ below,
the indices~$j$ run through~$\mathbb N_0$,
we assume summation with respect to repeated indices,
and only a finite number of operator polynomials~$R$ indexed by tuples of indices~$j$ are nonzero.

\medskip\par\noindent
$\boldsymbol{\mathfrak s_{1.1}.}$
The centralizer $\mathrm C_\Upsilon(\mathrm B)$ of $\mathrm B:=\hat{\mathrm P}^t+\mathrm P^3$ in the algebra $\Upsilon$
contains the operators
\[
\mathrm H^1:=\mathrm P^1-\frac16(\mathrm P^0)^2,\quad
\mathrm H^2:=\mathrm P^2+\frac2{27}(\mathrm P^0)^3-\frac23\mathrm P^1\mathrm P^0,\quad
\mathrm B   =\mathrm P^3-\mathrm P^2\mathrm P^0+(\mathrm P^1)^2.
\]
We expand the operator~$\mathrm Q$ as
$\mathrm Q=R^{j_1j_2 j_3}(\mathrm P^0)\,(\mathrm H^1)^{j_1}(\mathrm H^2)^{j_2}\mathrm B^{j_3}.$
It implies that action of an arbitrary element $\mathrm Q$ of~$\Upsilon$ on
an arbitrary solution $h$ from the family~$\mathscr S_1$ is a (finite) linear combinations
of solutions of the form $(\mathrm P^0)^j\hat h$, where $j\in\mathbb N_0$ and $\hat h\in\mathscr S_2^0$.
Thus, when using seed solutions from~$\mathscr S_1$,
only acting with powers of $\mathrm P^0$ may give solutions out of~$\mathscr S_1$.

\medskip\par\noindent
$\boldsymbol{\mathfrak s_{1.2}^\delta.}$
The operator $\mathrm B:=\hat{\mathrm P}^t+\delta$ commutes with the operators
\[
\mathrm P^0,\quad
\hat{\mathrm P}^t=-\mathrm P^2\mathrm P^0+(\mathrm P^1)^2,\quad
\mathrm H:=\mathrm P^3(\mathrm P^0)^2-3\mathrm P^2\mathrm P^1\mathrm P^0+2(\mathrm P^1)^3.
\]
In view of Lemma~7 from~\cite{kova2023b}, any solution $h\in\mathscr S_2^\delta$
can be represented as $(\mathrm P^0)^k\tilde h$
with $k:=\max\big(\{2i_3+i_2-i_0\mid c_{i_3i_2i_1i_0}\ne0\}\cup\{0\}\big)$ for some $\tilde h\in\mathscr S_2^\delta$,
and thus
\[
\mathrm Qh=\mathrm Q(\mathrm P^0)^k\tilde h
=R^{j_1j_2j_3}(\mathrm P^1)\,(\mathrm P^0)^{j_1}(\hat{\mathrm P}^t)^{j_2}\mathrm H^{j_3}\tilde h.
\]
It implies that action of an arbitrary element $\mathrm Q$ of~$\Upsilon$ on
an arbitrary solution $h$ from the family~$\mathscr S_2^\delta$ is a (finite) linear combinations
of solutions of the form $(\mathrm P^1)^j\hat h$, where $j\in\mathbb N_0$ and~$\hat h\in\mathscr S_2^\delta$.
In~other words, the only way for generating new solutions of the equation~\eqref{eq:RemarkableFP}
from its solutions from the family~$\mathscr S_2^0$ via acting by its linear differential recursion operators
is to use powers of~$\mathrm P^1$.

\medskip\par\noindent
$\boldsymbol{\mathfrak s_{1.3}^\nu.}$
Some families of Lie-invariant solutions are preserved when acting by all linear differential recursion operators.
Consider the span~$\mathscr S_3$ of all $\mathfrak s_{1.3}^\nu$-invariant solutions,
where the parameter~$\nu$ runs through~$\mathbb R$, $\mathscr S_3:=\sum_{\nu\in\mathbb R}\mathscr S_3^\nu$.
Since
$\mathrm P^3\hat{\mathrm D}=(\hat{\mathrm D}-3)\mathrm P^3$,
${\mathrm P^2\hat{\mathrm D}=(\hat{\mathrm D}-1)\mathrm P^2}$,
$\mathrm P^1\hat{\mathrm D}=(\hat{\mathrm D}+1)\mathrm P^1$,
$\mathrm P^0\hat{\mathrm D}=(\hat{\mathrm D}+3)\mathrm P^0$ and thus
\[
(\mathrm P^3)^{i_3}(\mathrm P^2)^{i_2}(\mathrm P^1)^{i_1}(\mathrm P^0)^{i_0}(\hat{\mathrm D}-\nu)
=
(\hat{\mathrm D}-\nu-3i_3-i_2+i_1+3i_0)(\mathrm P^3)^{i_3}(\mathrm P^2)^{i_2}(\mathrm P^1)^{i_1}(\mathrm P^0)^{i_0},
\]
the action by $(\mathrm P^3)^{i_3}(\mathrm P^2)^{i_2}(\mathrm P^1)^{i_1}(\mathrm P^0)^{i_0}$
maps $\mathscr S_3^\nu$ into $\mathscr S_3^{\nu'}$ with $\nu'=\nu+3i_3+i_2-i_1-3i_0$.
Therefore, the action of any element~$\mathrm Q$ of the algebra~$\Upsilon$
maps~$\mathscr S_3$ into itself.

\medskip\par\noindent
$\boldsymbol{\mathfrak s_{1.4}^\mu.}$
The centralizer $\mathrm C_\Upsilon(\mathrm B)$ with $\mathrm B=\hat{\mathrm P}^t+\hat{\mathrm K}+\mu$
contains the associative algebra generated by
\begin{gather*}
\hat{\mathrm P}^t+\hat{\mathrm K}=(\mathrm P^2)^2-\mathrm P^3\mathrm P^1-\mathrm P^2\mathrm P^0+(\mathrm P^1)^2,\quad
\mathrm H:=(\mathrm P^3)^2+3\mathrm P^3\mathrm P^1+3\mathrm P^2\mathrm P^0+(\mathrm P^0)^2,\quad \mathrm C,
\\[1ex]
\mathrm S:=
(\mathrm P^3)^3\mathrm P^0-3(\mathrm P^3)^2\mathrm P^2\mathrm P^1+2\mathrm P^3(\mathrm P^2)^3
+3\mathrm P^3(\mathrm P^2)^2\mathrm P^0-6\mathrm P^3\mathrm P^2(\mathrm P^1)^2-\mathrm P^3(\mathrm P^1)^2\mathrm P^0
\\\qquad{}
-\mathrm P^3(\mathrm P^0)^3
+3(\mathrm P^2)^3\mathrm P^1+6(\mathrm P^2)^2\mathrm P^1\mathrm P^0-3\mathrm P^2(\mathrm P^1)^3
+3\mathrm P^2\mathrm P^1(\mathrm P^0)^2
-2(\mathrm P^1)^3\mathrm P^0
\\\qquad{}
-4(\mathrm P^3)^2
+8(\mathrm P^0)^2
+12\mathrm P^2\mathrm P^0
.
\end{gather*}
Applying arguments similar to those above to the pair~$\hat{\mathrm P}^t+\hat{\mathrm K}$ and~$\mathrm H$,
we conclude that only the action by polynomials
\[
\mathrm P^3\mathrm P^2(\mathrm P^1)^{j_1}(\mathrm P^0)^{j_2},\quad
\mathrm P^3(\mathrm P^1)^{j_1}(\mathrm P^0)^{j_2},\quad
\mathrm P^2(\mathrm P^1)^{j_1}(\mathrm P^0)^{j_2},\quad
(\mathrm P^1)^{j_1}(\mathrm P^0)^{j_2}
\]
might result in nontrivial solution generations.
It is not clear how this consideration can be modified using
higher-degree elements like~$\mathrm C$ and~$\mathrm S$.

\medskip\par\noindent
$\boldsymbol{\mathfrak s_{1.5}^\varepsilon,\mathfrak s_{1.6},\mathfrak s_{1.7}.}$
The centralizers~$\mathrm C_\Upsilon(\mathrm B)$ of
$\mathrm B:=\mathrm P^2+\varepsilon\mathrm P^0$, $\mathrm B:=\mathrm P^1$ and $\mathrm B:=\mathrm P^0$
in the algebra $\Upsilon$
contain the associative algebras generated by the sets
$\{\mathrm P^3-3\varepsilon\mathrm P^1,\mathrm P^2,\mathrm P^0\}$,
$\{\mathrm P^3,\mathrm P^1,\mathrm P^0\}$ and
$\{\mathrm P^2,\mathrm P^1,\mathrm P^0\}$, respectively.
For each of these associative algebras,
we expand an arbitrary element $\mathrm Q$ of~$\Upsilon$, involving its elements as described above.
This implies that the only way to construct essentially new solutions
via acting by linear differential recursion operators
starting with the seeds from the families~$\mathscr S_5^\varepsilon$, $\mathscr S_6$ and~$\mathscr S_7$
is to use powers of~$\mathrm P^1$, $\mathrm P^2$ and $\mathrm P^3$, respectively.

\medskip\par
We combine the above results with results of Sections~\ref{sec:ExactSolutionsGenerationFP}
to construct wide families of exact solutions of the equation~\eqref{eq:RemarkableFP}.
More specifically, we have shown
that for the families~\eqref{eq:F0HeatSolution0}, \eqref{eq:F0HeatSolution1}, \eqref{eq:F0HeatSolution2}
and~\eqref{eq:F0HeatSolution3} of solutions of the equation~\eqref{eq:RemarkableFP}
that are invariant with respect to the subalgebras
$\mathfrak s_{1.2}^0$, $\mathfrak s_{1.5}^\varepsilon$, $\mathfrak s_{1.6}$ and $\mathfrak s_{1.7}$,
only the action by the monomials $(\mathrm P^1)^k$, $(\mathrm P^1)^k$, $(\mathrm P^2)^k$ and $(\mathrm P^3)^k$,
$k\in\mathbb N$, respectively, in general leads to essentially new solutions of~\eqref{eq:RemarkableFP},
\begin{gather}\label{eq:F0HeatSolutionGen0}
\solution
u=(\mathrm P^1)^k\Big(|x|^{-\frac14}\vartheta^\mu\big(\tfrac94\tilde\varepsilon y,|x|^{\frac32}\big)\Big)
\quad\mbox{with}\quad \mu=\tfrac5{36},\quad \tilde\varepsilon:=\sgn x,
\\\label{eq:F0HeatSolutionGen1}
\solution u=(\mathrm P^1)^k\Big(|t|^{-\frac12}{\rm e}^{-\frac{x^2}{4t}}
\vartheta^0\big(\tfrac13{t^3}+2\varepsilon t-t^{-1},2y-(t+\varepsilon t^{-1})x\big)
\Big)
\quad\mbox{with}\quad \varepsilon\in\{-1,1\},
\\\label{eq:F0HeatSolutionGen2}
\solution u=(\mathrm P^2)^k\vartheta^0\Big(\tfrac13t^3,y-tx\Big),
\\\label{eq:F0HeatSolutionGen3}
\solution u=(\mathrm P^3)^k\vartheta^0(t,x).
\end{gather}

Even if the action of an element~$\mathrm Q$ of~$\Upsilon$ preserve a solution family,
it may result in an interesting solution generation within this family.
An example of such a generation is presented in Section~\ref{sec:RemarkableFPPolynomialSols}.
The solution $u=1$ is invariant with respect to the four-dimensional subalgebra
$\langle\mathcal P^t,\mathcal D+2\mathcal I,\mathcal P^1,\mathcal P^0\rangle$
of the essential Lie invariance algebra~$\mathfrak g^{\rm ess}$ of~\eqref{eq:RemarkableFP}.
In other words, this solution belongs to $\mathscr S_2^0\cap\mathscr S_2^2\cap\mathscr S_6\cap\mathscr S_7$.
Lemma~\ref{lem:RemarkableFPPolynomialSols} states that
any solution of the equation~\eqref{eq:RemarkableFP}
that is polynomial with respect to~$x$
is polynomial with respect to the entire tuple of independent variables $(t,x,y)$
and is a (finite) linear combination
of the basis polynomials $(\mathrm P^3)^k(\mathrm P^2)^l1$, $k,l\in\mathbb N_0$,
thus belonging to the space~$\mathscr S_3:=\sum_{\nu\in\mathbb R}\mathscr S_3^\nu$.

\subsection{Generalized reductions}

Using the theory developed in Section~\ref{sec:ExactSolutionsTheory},
we also revisit and extend the results of~\cite{kova2023a,kova2024c}
on particular generalized reductions of the equation~\eqref{eq:RemarkableFP}.
According to this theory,
only polynomials of Lie-symmetry operators of the equation~\eqref{eq:RemarkableFP} can be systematically used
in the procedure of its generalized reduction.
In view of Lemma~\ref{lem:OperatorPolymomialKer},
it suffices to just consider the powers of binomials of the form $\mathrm Q-\lambda$,
where, up to the $G^{\rm ess}$-equivalence,
$\mathrm Q$ runs through the Lie-symmetry operators of~\eqref{eq:RemarkableFP} associated with the basis elements
of the subalgebras~$\mathfrak s_{1.k}^*$, $k=1,\dots,7$, from Lemma~\ref{lem:RemarkableFP1DSubalgs}.
Nevertheless, only some families even of these specific invariant solutions
can be described completely.\looseness-1

It turns out that the easiest and the most complete description is achieved
when the corresponding subalgebra is contained in the (nil)radical $\mathfrak r$
of the algebra $\mathfrak g^{\rm ess}$ and thus, modulo the $G^{\rm ess}$-equivalence,
it is one of the subalgebras $\mathfrak s_{1.5}^\varepsilon$, $\mathfrak s_{1.6}$ and $\mathfrak s_{1.7}$
or, equivalently, $\mathrm Q\in\{\mathrm P^2+\varepsilon\mathrm P^0,\mathrm P^1,\mathrm P^0\}$.
In this case, up to the $G^{\rm ess}$-equivalence,
the parameter $\lambda$ in $\mathrm Q-\lambda$ can be set to zero,
i.e., it suffices to merely describe $\mathrm Q^n$-invariant solutions.
In view of the commutation relations
$[\mathrm P^2+\varepsilon\mathrm P^0,\mathrm P^1]=1$,
$[\mathrm P^1,\mathrm P^2]=1$ and
$[\mathrm P^0,\mathrm P^3]=3$,
Corollary~\ref{cor:OperatorPowerKer} implies that the spaces of
$(\mathrm P^2+\varepsilon\mathrm P^0)^n$-, $(\mathrm P^1)^n$- and $(\mathrm P^0)^n$-invariant solutions,
$n\in\mathbb N$, are the direct sums of spaces of solutions of the form~\eqref{eq:F0HeatSolutionGen1},
\eqref{eq:F0HeatSolutionGen2} and~\eqref{eq:F0HeatSolutionGen3} with fixed~$k$, respectively,
where $k$ runs from~0 to~$n-1$.
Combining the complexification trick, the $G^{\rm ess}$-action and the linear superposition of solutions,
we obtain the entire span of solutions that are invariant
with respect to polynomials of Lie-symmetry operators associated with elements of~$\mathfrak r$.

Considering the subalgebra $\mathfrak s_{1.1}$, where $\mathrm Q=\hat{\mathrm P}^t+\mathrm P^3$,
we can also set $\lambda=0$  modulo the $G^{\rm ess}$-equivalence
and, since $[\hat{\mathrm P}^t+\mathrm P^3,\mathrm P^0]=-3$, apply Corollary~\ref{cor:OperatorPowerKer}.
As a result, we conclude that the space
of $(\hat{\mathrm P}^t+\mathrm P^3)^n$-invariant solutions of~\eqref{eq:RemarkableFP}, $n\in\mathbb N$,
is the direct sums of spaces of solutions of the form
$(\mathrm P^0)^kh$ with fixed~$k$ and an arbitrary $h\in\mathscr S_1$, where $k$ runs from~0 to~$n-1$.
These solutions arise as the result of the only essential nontrivial solution generation
using linear differential recursion operators and seed solutions from the set~$\mathscr S_1$.
Recall \cite[Section~5]{kova2023a} that the function $h=h(t,x,y)$ belongs to~$\mathscr S_1$
if and only if
\[
h={\rm e}^{\frac3{10}t(t^4-5tx+10y)}w(z_1,z_2),\quad z_1:=y-\tfrac14t^4,\quad z_1:=x-t^3,
\]
where $w$ is an arbitrary solution of the reduced equation $z_2w_1=w_{22}-3z_1w$.
The problem is that no nonzero solutions of the latter equation
and, therefore, no nonzero elements of~$\mathscr S_1$ are known.

The description of generalized reductions associated with polynomials of~$\hat{\mathrm P}^t$
is more involved comparing to the previous cases.
Up to the $G^{\rm ess}$-equivalence, the parameter~$\lambda$ in the operator $\hat{\mathrm P}^t+\lambda$
can be gauged at most to $\delta\in\{-1,0,1\}$.
In other words, we should consider the operator $\hat{\mathrm P}^t+\delta$
associated with the basis element of the subalgebra $\mathfrak s_{1.2}^\delta$
for each $\delta\in\{-1,0,1\}$.
We have $[\hat{\mathrm P}^t+\delta,\mathrm P^1]=\mathrm P^0$, $[\hat{\mathrm P}^t+\delta,\mathrm P^0]=0$
and, in view of~\cite[Lemma~7]{kova2023b}, $\mathrm P^0\ker(\hat{\mathrm P}^t+\delta)=\ker(\hat{\mathrm P}^t+\delta)$.
Lemma~\ref{lem:OperatorPowerKer} in this setting implies
that the space of $(\hat{\mathrm P}^t+\delta)^n$-invariant solutions of~\eqref{eq:RemarkableFP}, $n\in\mathbb N$,
is the sum of the spaces of solutions of the form
$(\mathrm P^0)^kh$ with fixed~$k$ and an arbitrary $h\in\mathscr S_2^\delta$,
where $k$ runs from~0 to~$n-1$.
The $\mathfrak s_{1.2}^\delta$-invariant solutions are of the form
$u={\rm e}^{\delta t}w(x,y)$, where $w$ is an arbitrary solution of the equation $xw_y=w_{xx}-\delta w$.
For $\delta=0$, the last equation is reduced by a point transformation
to a (1+1)-dimensional linear heat equation with an inverse square potential,
and thus the generated solutions are precisely given by~\eqref{eq:F0HeatSolutionGen0}.
In the case $\delta\ne0$, we were able to construct
only those among the $\mathfrak s_{1.2}^\delta$-invariant solutions that
are in addition $\mathrm P^0$-invariant, i.e., that are $\mathfrak s_{2.2}^\delta$-invariant,
where $\mathfrak s_{2.2}^\delta=\langle\mathcal P^t+\delta\mathcal I,\mathcal P^0\rangle$.
The solution generation using linear differential recursion operators
and seed solutions from the set~$\mathscr S_2^\delta\cap\mathscr S_7$
results only in certain solutions of the form~\eqref{eq:F0HeatSolutionGen3}.
Hence it is necessary to find other $\mathfrak s_{1.2}^\delta$-invariant solutions with $\delta\ne0$,
which is an open nontrivial problem.

The analysis of generalized reductions associated with polynomials of~$\hat{\mathrm D}$
and of~$\hat{\mathrm P}^t+\hat{\mathrm K}$ is far more complicated.
Modulo the $G^{\rm ess}$-equivalence, the parameter~$\lambda$
can at most be made nonnegative in the operator $\hat{\mathrm D}-\lambda$
and cannot be changed in the operator $\hat{\mathrm P}^t+\hat{\mathrm K}-\lambda$.
This is why the consideration of all the subalgebras~$\mathfrak s_{1.3}^\nu$ and $\mathfrak s_{1.4}^\mu$
from Lemma~\ref{lem:RemarkableFP1DSubalgs} is relevant here.
The space~$\mathscr S_3:=\sum_{\nu\in\mathbb R}\mathscr S_3^\nu$
is preserved by the action of any element~$\mathrm Q$ of~$\Upsilon$,
see Section~\ref{sec:ExactSolutionsGenerationFP}.
We can show analogously that there is no obvious way to derive \smash{$(\hat{\mathrm D}-\lambda)^n$}-invariant solutions
from analogous solutions with lower values of~$n$.
The same claim holds for $(\hat{\mathrm P}^t+\hat{\mathrm K}-\lambda)^n$-invariant solutions.
Following the consideration in the last paragraph of Section~\ref{sec:ExactSolutionsTheory},
we construct generalized ansatzes for such solutions and the corresponding reduced systems, which are
\begin{gather*}
u=|t|^{\frac12\lambda-1}\sum_{j=0}^{n-1}w^j(z_1,z_2)\zeta^j,\quad
\omega_1:=|t|^{-\frac32}y,\quad
\omega_2:=|t|^{-\frac12}x,\quad
\zeta:=\frac12\ln|t|,
\\
(2\varepsilon'z_2-3z_1)w^j_1=2\varepsilon'w^j_{22}+z_2w^j_2-(\lambda-2)w^j-(j+1)w^{j+1},\quad
j=0,\dots,n-1,\quad w^n:=0
\end{gather*}
with $\varepsilon':=\sgn t$ for the operator $(\hat{\mathrm D}-\lambda)^n$ and
\begin{gather*}
u=\dfrac{{\rm e}^{\theta(t,x,y)+\lambda\zeta}}{t^2+1}\sum_{j=0}^{n-1}w^j(z_1,z_2)\zeta^j,\quad
\theta(t,x,y):=-\dfrac{3t^3y^2+t(2x(t^2+1)-3ty)^2}{4(t^2+1)^3},
\\\qquad
z_1:=\dfrac y{(t^2+1)^{\frac32}},\quad
z_2:=\dfrac{(t^2+1)x-3ty}{(t^2+1)^{\frac32}},\quad
\zeta:=\arctan t,
\\
z_2w^j_1=3z_1w^j_2+w^j_{22}+(z_2^{\,2}-\lambda)w^j-(j+1)w^{j+1},\quad j=0,\dots,n-1,\quad w^n:=0
\end{gather*}
for the operator $(\hat{\mathrm P}^t+\hat{\mathrm K}-\lambda)^n$.
At the same time, even in the case $n=1$, when the corresponding invariant solutions
are in fact Lie-invariant and the corresponding reduced systems are just single equations,
the reduction procedure did not result to finding new exact solutions of the equation~\eqref{eq:RemarkableFP}
\cite[Section~5]{kova2023a}.

\begin{remark}
Solutions of the equation~\eqref{eq:RemarkableFP}
that are invariant with respect to linear differential recursion operators that are not polynomials of single Lie-symmetry operators, even the simplest among such operators, e.g., $(\mathrm P^0)^2+\mathrm P^1$,
need a separate consideration.
\end{remark}

\section{Conclusion}\label{sec:Conclusion}

The successful exhaustive classical symmetry analysis of the remarkable Fokker--Planck equation~\eqref{eq:RemarkableFP}
in~\cite{kova2023a} inspired us to study its generalized symmetries as well.
To this end, we began with computing the generalized symmetries of~\eqref{eq:RemarkableFP} up to order four
by using the excellent package {\sf Jets} by Baran and Marvan \cite{BaranMarvan} for {\sf Maple},
which is based on results of~\cite{marv2009a}.
Carefully analysing the computation results, we made two interesting observations
that allowed us to precisely conjecture the statement of Theorem~\ref{thm:GenSymsRemarkableFP}.

The first observation was that all the linear generalized symmetries of order not greater than four
are generated by the action of the Lie-symmetry operators of~\eqref{eq:RemarkableFP}
associated with the radical~$\mathfrak r$ of the algebra~$\mathfrak g^{\rm ess}$
on the elementary Lie symmetry~$u\p_u$.
In other words, $\Lambda^4=\Lambda_{\mathfrak r}^4$.

The second observation concerned the unexpected involvement of the Casimir operator
of the Levi factor~$\mathfrak f$ of~$\mathfrak g^{\rm ess}$
in the consideration of the algebra~$\Upsilon_{\mathfrak r}$.
The counterpart~$\mathrm C$ of this operator in the algebra~$\Upsilon_{\mathfrak r}$ has degree four
as a polynomial of $(\mathrm P^3,\mathrm P^2,\mathrm P^1,\mathrm P^0)$,
while it is of order three as a differential operator.
This degree--order inconsistency hinted that
straightforwardly computing the dimensions of the subspaces~$\Lambda^n$ of the algebra $\Lambda$
via evaluating the dimensions of the corresponding subspaces of the solution space
of the system of determining equations $\Delta_{kl}$, $k,l\in\mathbb N_0$, with order restrictions
is very difficult, perhaps even impossible.

Recall that the standard approach to finding the algebra of generalized symmetries
of a linear system of differential equations includes the following steps:
\begin{enumerate}\itemsep=0ex
\item
For each $n\in\mathbb N_0$,
compute the dimension of the space of canonical representatives of linear generalized symmetries
of order less than or equal to $n$.

\item
If all the dimensions obtained in the previous step are finite,
then apply the Shapovalov--Shirokov theorem~\cite{shap1992a} to state that the linear generalized symmetries
exhaust all generalized symmetries up to their equivalence and linear superposition of solutions.

\item
By comparing the dimensions for each fixed order~$n$,
check whether the algebra of linear generalized symmetries
is generated by the action of known linear differential recursion operators on simple seed symmetries,
in particular, by the action of Lie-symmetry operators on the elementary Lie symmetry~$u\p_u$.
\end{enumerate}
For a number of systems of differential equations,
their generalized-symmetry algebras were computed via following these steps in the presented order
\cite{kova2023b,opan2020e,shap1992a}.

In contrast, we begin by showing that the entire algebra~$\Lambda$ of linear generalized symmetries
of the equation~\eqref{eq:RemarkableFP}
coincides with the algebra $\Lambda_{\mathfrak r}$ of generalized symmetries generated by the action
of the Lie-symmetry operators~$\mathrm P^3$, $\mathrm P^2$, $\mathrm P^1$ and $\mathrm P^0$
on the vector field~$u\p_u$.
In other words, we effectively start with step~3, leaving aside the dimension counting.

From the equality $\Lambda=\Lambda_{\mathfrak r}$, we derive
\smash{$\dim\Lambda^{[n]}_{\vphantom{\mathfrak r}}=\dim\Lambda_{\mathfrak r}^{[n]}$}.
At the same time, computing the dimension \smash{$\dim\Lambda_{\mathfrak r}^{[n]}$} is a nontrivial problem,
once again due to the above inconsistency between the degree and the order of the operator~$\mathrm C$.
However, we have managed to transfer the problem to the context of ring theory and algebraic geometry,
which has allowed us to overcome this issue,
prove the inequality $\dim\Lambda_{\mathfrak r}^{[n]}<\infty$ for any $n\in\mathbb N_0$
and thus apply the Shapovalov--Shirokov theorem.
This has resulted in the proof of Theorem~\ref{thm:GenSymsRemarkableFP},
thereby completing the description of the algebra~$\Sigma$ of the equation~\eqref{eq:RemarkableFP}.

A natural question to be addressed is whether there are more examples
of differential equations for which the computation of their generalized-symmetry algebras
using the approach developed in this paper is beneficial.

We also intend to extend the study of generalized symmetries
to other (1+2)-dimensional ultraparabolic Fokker--Planck equations,
in particular to prove Conjecture~8 from~\cite{kova2024a}
on the generalized-symmetry algebra of the fine Fokker--Planck equation
$u_t+xu_y=x^2u_{xx}$.

There is an important observation that
if a homogeneous linear differential equation possesses a sufficiently large number
of linearly independent essential Lie symmetries,
then all its generalized symmetries are generated by acting
with recursion operators related to such Lie symmetries on the simplest seed Lie symmetry~$u\p_u$.
Examples of this situation include
the linear (1+1)-dimensional heat equation,
the (1+1)-dimensional Klein--Gordon equation and
the remarkable Fokker--Planck equation,
where sufficient sets of recursion operators are exhausted by selections of Lie-symmetry operators,
as well as the linear Korteweg--de Vries equation,
where one in addition needs to use the inversion of a Lie-symmetry operator
associated with the space translations.
It is an open question what are necessary and sufficient conditions
for linear systems of differential equations
whose algebras of generalized symmetries are exhausted by
those generated using Lie symmetries.
Examples of the opposite situation can be constructed from the above ones
using differential substitutions like Darboux transformations
such that the essential Lie invariance algebras of the mapped equations are trivial
while their algebras of generalized symmetries are quite large. \looseness=-1

We have developed a theoretical framework
for using linear generalized symmetries of homogeneous linear systems of differential equations
or, equivalently, their linear differential recursion operators
for constructing and generating their exact solutions.
The procedure of generalized reduction has been shown to properly work
in the case of polynomials of single Lie-symmetry operators,~$\mathrm Q$,
which can be reduced to the consideration of powers of elementary binomials, $\mathrm Q-\lambda$.
The developed techniques have been efficiently applied
to the remarkable Fokker--Planck equation~\eqref{eq:RemarkableFP},
which have essentially extended results from~\cite{kova2023a,kova2024c}.

In the context of the classical group analysis,
the linear (1+1)-dimensional heat equation~\eqref{eq:HeatEq}
and the remarkable Fokker--Planck equation~\eqref{eq:RemarkableFP}
are related to each other
since they have similar Lie- and point-symmetry properties within the classes
of parabolic linear second-order partial differential equations with two independent variables
and of ultraparabolic linear second-order partial differential equations with three independent variables,
respectively.
Surprisingly, this relation manifests on the level of generalized symmetries as well.
In particular, both the respective algebras~$\Lambda_{\rm h}$ and~$\Lambda$
of linear generalized symmetries are generated by the action of the Lie-symmetry operators
associated with the radicals of the corresponding essential Lie invariance algebras
on the elementary Lie-symmetry vector fields~$u\p_u$.
Therefore, the algebras~$\Lambda_{\rm h}$ and~$\Lambda$ are isomorphic
to the Lie algebras ${\rm W}(1,\mathbb R)^{\mbox{\tiny$(-)$}}$ and ${\rm W}(2,\mathbb R)^{\mbox{\tiny$(-)$}}$,
respectively.

The above relation can be embedded in a much wider framework.
For each $n\in\mathbb N$, consider the class~$\mathcal U_n$ of
(ultra)parabolic linear second-order partial differential equations with $1+n$ independent variables
$t$, $x_1$, \dots, $x_n$ and dependent variable~$u$,
where the corresponding (symmetric) matrices of coefficients of second-order derivatives
of the dependent variable~$u$ are of rank one,
and the number $n+1$ of independent variables is essential in the sense
that none among them plays the role of a parameter even up to their point transformations.
The equation
\[
\mathcal F_n\colon\quad u_t+\sum_{i=1}^{n-1}x_iu_{x_{i+1}}=u_{x_1x_1}
\]
belongs to the class~$\mathcal U_n$,
and the equations~$\mathcal F_1$ and~$\mathcal F_2$
coincide with the equations~\eqref{eq:HeatEq} and~\eqref{eq:RemarkableFP}, respectively.
An in-depth preliminary analysis allows us to conjecture that for each $n\in\mathbb N$,
the equation~$\mathcal F_n$ is singular within the class~$\mathcal U_n$
and has the following properties.
\begin{enumerate}\itemsep=0ex
\item
The dimension of the essential Lie invariance algebra~$\mathfrak g^{\rm ess}_n$ of~$\mathcal F_n$ is equal to $2n+4$,
and this algebra is isomorphic to the algebra
${\rm sl}(2,\mathbb R)\lsemioplus_{\rho_{2n-1}\oplus\rho_0}{\rm h}(n,\mathbb R)$,
see Section~\ref{sec:RemarkableFPMIA} for the notation.
The Levi factor~$\mathfrak f_n$ and the (nil)radical~$\mathfrak r_n$ of~$\mathfrak g^{\rm ess}_n$ are isomorphic to
the algebras ${\rm sl}(2,\mathbb R)$ and ${\rm h}(n,\mathbb R)$, respectively.
\item
The dimension of~$\mathfrak g^{\rm ess}_n$ is maximal among those of the essential Lie invariance algebras
of equations from the class~$\mathcal U_n$, and each equation whose essential Lie invariance algebra
is of this maximal dimension is reduced to~$\mathcal F_n$ by a point transformation
in the space $\mathbb R^{1+n}_{t,x_1,\dots,x_n}\times\mathbb R_u$.
\item
The essential point-symmetry group~$G^{\rm ess}_n$ of the equation~$\mathcal F_n$ is isomorphic to the Lie group
$\big({\rm SL}(2,\mathbb R)\ltimes_{\varrho_{2n-1}\oplus\varrho_0}{\rm H}(n,\mathbb R)\big)\times\mathbb Z_2$,
where ${\rm H}(n,\mathbb R)$ denotes the rank-$n$ Heisenberg group
and~$\varrho_m$ is the irreducible representation of ${\rm SL}(2,\mathbb R)$ in $\mathbb R^{m+1}$.
\item
A complete list of discrete point symmetry transformations of the equation~$\mathcal F_n$
that are independent up to combining with each other
and with continuous point symmetry transformations of this equation
is exhausted by the single involution~$\mathscr I$ alternating the sign of~$u$,
$\mathscr I\colon(t,x_1,\dots,x_n,u)\mapsto(t,x_1,\dots,x_n,-u).$
Thus, the quotient group of the complete point-symmetry pseudogroup~$G_n$ of~$\mathcal F_n$
with respect to its identity component is isomorphic to~$\mathbb Z_2$.
\item
The algebra of canonical representatives of generalized symmetries of~$\mathcal F_n$ is
$\Sigma_n=\Lambda_n\lsemioplus\Sigma^{-\infty}_n$.
Here
$\Lambda_n$ is the subalgebra of linear generalized symmetries of~$\mathcal F_n$,
which is generated by acting with the Lie-symmetry operators associated with
the canonical basis of the complement of the center $\langle u\p_u\rangle$
in the (nil)radical~$\mathfrak r_n$ of~$\mathfrak g^{\rm ess}_n$
on the elementary seed symmetry vector field $u\p_u$,
and
$\Sigma^{-\infty}_n$ is the ideal associated with linear superposition of solutions of~$\mathcal F_n$.
\item
The algebra~$\Lambda_n$ is isomorphic to the Lie algebra ${\rm W}(n,\mathbb R)^{\mbox{\tiny$(-)$}}$
associated with the $n$th Weyl algebra ${\rm W}(n,\mathbb R)$,
$\Lambda_n\simeq {\rm W}(n,\mathbb R)^{\mbox{\tiny$(-)$}}$.
Hence, the algebra $\Lambda_n$ is $\mathbb Z$-graded.
\item
The algebra~$\Lambda_n$ is two-generated as a Lie algebra,
i.e., there is a pair of its elements such that $\Lambda_n$ coincides with its subalgebra
containing all successive commutators (aka nonassociative monomials) of these two elements.
\end{enumerate}
The complete and rigorous proofs of the listed properties
constitute the subject of a substantial research program
whose realization will result in a deeper understanding of symmetry properties
of linear second-order partial differential equations.
It will essentially extend the results of \cite{kova2023b,lie1881a,olve1993A}
on the linear (1+1)-dimensional heat equation~$\mathcal F_1$
and of~\cite{kova2023a} and this paper
on the remarkable Fokker--Planck equation~$\mathcal F_2$
to~$\mathcal F_n$ with an arbitrary $n\in\mathbb N$.

Establishing the isomorphism between the algebras~$\Upsilon_{\mathfrak r}$ and ${\rm W}(2,\mathbb R)$
(resp.\ $\Lambda$ and ${\rm W}(2,\mathbb R)^{\mbox{\tiny$(-)$}}$)
allows us to transfer the results naturally obtained for one of them to the other,
and the mapped results can be not as apparent as their counterparts.
See Remarks~\ref{rem:LamdbaTwo-Generation} and~\ref{rem:W2R-Grading},
where the transfers go from the abstract algebras to their realizations.
Unexpected examples of opposite transfers are given by filtrations of the above associative algebras.
Both the filtration~$F_2$ of the algebra~$\Upsilon_{\mathfrak r}$
with respect to the degree of its elements as (noncommutative) polynomials
in $\{\mathrm P^0,\mathrm P^1,\mathrm P^2,\mathrm P^3\}$, which is given in Section~\ref{sec:RemarkableFPMIA},
and its counterpart for the algebra~${\rm W}(2,\mathbb R)$ are natural
in the context of general noncommutative polynomial algebras.
At the same time, we see no way to naturally interpret, for the second Weyl algebra~${\rm W}(2,\mathbb R)$,
the image of the other filtration of~$\Upsilon_{\mathfrak r}$ presented therein
and associated with the order of elements of~$\Upsilon_{\mathfrak r}$ as differential operators.
There are other similar natural filtrations of~$\Upsilon_{\mathfrak r}$
that are related to various interpretations of the order of differential operators.
In particular, we can assign to each element of~$\Upsilon_{\mathfrak r}$
the order of its counterpart obtained by excluding multiple derivatives with respect to~$x$
according to the Kovalevskaya form $u_{xx}=u_t+xu_y$ of the equation~\eqref{eq:RemarkableFP}.
The discussion in this paragraph is of significant interest since, as far as we know,
the general problems of classifying filtrations and gradings
of the Weyl algebras and of the associated Lie algebras have not been solved.

\section*{Acknowledgments}
The authors are grateful to Alexander Bihlo, Yuri Bahturin, Mikhail Kochetov, Vyacheslav Boyko and Galyna Popovych
for valuable discussions.
They are also deeply thankful to anonymous reviewers for excellent remarks and suggestions,
which helped to essentially improve the paper.
This research was undertaken thanks to funding from the Canada Research Chairs program,
the InnovateNL LeverageR{\&}D program and the NSERC Discovery Grant program.
It was also supported in part by the Ministry of Education, Youth and Sports of the Czech Republic (M\v SMT \v CR)
under RVO funding for I\v C47813059.
ROP expresses his gratitude for the hospitality shown by the University of Vienna during his long stay at the university.
The authors express their deepest thanks to the Armed Forces of Ukraine and the civil Ukrainian people
for their bravery and courage in defense of peace and freedom in Europe and in the entire world from russism.
\looseness=-1




\begin{thebibliography}{10}\footnotesize\itemsep=0.ex

\bibitem{BaranMarvan}
Baran H. and Marvan M.,
Jets. A software for differential calculus on jet spaces and diffieties.
\newline \texttt{http://jets.math.slu.cz}

\bibitem{berg1978a}
Bergman G.M.,
The diamond lemma for ring theory,
{\it Adv. in Math.} {\bf 29} (1978), 178--218.

\bibitem{blum2010A}
Bluman G.W., Cheviakov A.F. and Anco S.C.,
{\it Applications of symmetry methods to partial differential equations},
Springer, New York, 2010. 

\bibitem{blum1989A}
Bluman G.W. and Kumei S.,
{\it Symmetries and differential equations},
Springer, New York, 1989.

\bibitem{boch1999A}
Bocharov A.V., Chetverikov V.N., Duzhin S.V., Khor'kova N.G., Krasil'shchik I.S., Samokhin A.V., Torkhov~Y.N., Verbovetsky~A.M. and Vinogradov~A.M.,
{\it Symmetries and conservation laws for differential equations of mathematical physics},
American Mathematical Society, Providence, 1999.

\bibitem{dixm1968a}
Dixmier J.,
Sur les alg\`ebres de Weyl,
{\it Bull. Soc. Math. France} {\bf 96} (1968), 209--242.

\bibitem{east2005a}
Eastwood M.,
Higher symmetries of the Laplacian,
{\it Ann. of Math.} {\bf 161} (2005), 1645--1665, arXiv:hep-th/0206233.

\bibitem{east2008A}
Eastwood M.G. and Leistner T.,
Higher symmetries of the square of the Laplacian,
in {\it Symmetries and overdetermined systems of partial differential equations},
{\it IMA Vol. Math. Appl.}, {\bf 144}, Springer, New York, 2008, pp.~319--338, arXiv:math/0610610.

\bibitem{gove2012a}
Gover A.R. and \v Silhan J.,
Higher symmetries of the conformal powers of the Laplacian on conformally flat manifolds,
{\it J.~Math. Phys.} {\bf 53} (2012), 032301,arXiv:0911.5265.

\bibitem{guth1994a}
Guthrie G.A.,
Recursion operators and non-local symmetries,
{\it Proc. Roy. Soc. London Ser. A} {\bf 446} (1994), 107--114. 

\bibitem{gung2018a}
G{\"u}ng{\"o}r F.,
Equivalence and symmetries for variable coefficient linear heat type equations.~I,
{\it J.~Math. Phys.} \textbf{59} (2018), 051507, arXiv:1501.01481.

\bibitem{gung2018b}
G\"{u}ng\"{o}r F.,
Equivalence and symmetries for variable coefficient linear heat type equations.~II. Fundamental solutions,
{\it J.~Math. Phys.} {\bf 59} (2018), 061507. 

\bibitem{ibra1985A}
Ibragimov~N.H.,
{\it Transformation groups applied to mathematical physics},
D. Reidel Publishing Co., Dordrecht, 1985.

\bibitem{jano2024a}
Jahnov\'a J. and Voj\v c\'ak P.,
On recursion operators for full-fledged nonlocal symmetries of the reduced quasi-classical self-dual Yang--Mills equation,
{\it Ann. Henri Poincar\'e} {\bf 25} (2024), 4633--4669, arXiv:2310.11194.

\bibitem{kolm1934a}
Kolmogoroff A.,
Zuf\"allige Bewegungen (zur Theorie der Brownschen Bewegung),
{\it Ann. of Math. (2)} {\bf 35} (1934), 116--117. 

\bibitem{kova2023a}
Koval S.D., Bihlo A. and Popovych R.O.,
Extended symmetry analysis of remarkable (1+2)-dimensional Fokker--Planck equation,
{\it European J. Appl. Math.} {\bf 34} (2023), 1067--1098, arXiv:2205.13526.

\bibitem{kova2024c}
Koval S.D., Dos Santos Cardoso-Bihlo E. and Popovych R.O.,
Surprising symmetry properties and exact solutions of Kolmogorov backward equations with power diffusivity,
arXiv:2407.10356.

\bibitem{kova2023b}
Koval S.D. and Popovych R.O.,
Point and generalized symmetries of the heat equation revisited,
{\it J.~Math. Anal. Appl.} {\bf 527} (2023), 127430, arXiv:2208.11073.

\bibitem{kova2024a}
Koval S.D. and Popovych R.O.,
Extended symmetry analysis of (1+2)-dimensional fine Kolmogorov backward equation,
{\it Stud. Appl. Math.} {\bf 153} (2024), e12695, arXiv:2402.08822.

\bibitem{kova2025a}
Koval S.D. and Popovych R.O.,
Any Weyl algebra has two Lie generators,
in preparation.

\bibitem{kova2013a}
Kovalenko S.S., Kopas I.M. and Stogniy V.I.,
Preliminary group classification of a class of generalized linear Kolmogorov equations,
{\it Research Bulletin of the National Technical University of Ukraine ``Kyiv Polytechnic Institute''} {\bf 4} (2013), 67--72 (in Ukrainian).

\bibitem{kras1986A}
Krasil'shchik I.S., Lychagin V.V. and Vinogradov A.M.,
{\it Geometry of jet spaces and nonlinear partial differential equations},
Gordon and Breach Science Publishers, New York, 1986.

\bibitem{leva2017a}
Levasseur T. and Stafford J.T.,
Higher symmetries of powers of the Laplacian and rings of differential operators,
{\it Compos. Math.} {\bf 153} (2017), 678--716, arXiv:1508.01664.

\bibitem{lie1881a}
Lie S.,
\"Uber die Integration durch bestimmte Integrale von einer Klasse linear partieller Differentialgleichungen,
{\it Arch. for Math.} {\bf 6} (1881), 328--368;
translation by N.H. Ibragimov:
Lie S.,
On integration of a class of linear partial differential equations by means of definite integrals,
in {\it CRC Handbook of Lie group analysis of differential equations}, vol. 2, CRC Press, Boca Raton, FL, 1995, pp.~473--508.

\bibitem{marv1996a}
Marvan M.,
Another look on recursion operators,
in Proc. 6th Int. Conf. {\it Differential geometry and applications (Brno, 1995)},
Masaryk University, Brno, 1996, pp. 393--402.

\bibitem{marv2004b}
Marvan M.,
Reducibility of zero curvature representations with application to recursion operators,
{\it Acta Appl. Math.} {\bf 83} (2004) 39--68, arXiv:nlin/0306006.

\bibitem{marv2009a}
Marvan M.,
Sufficient set of integrability conditions of an orthonomic system,
{\it Found. Comput. Math.} {\bf 9} (2009), 651--674, arXiv:nlin/0605009.

\bibitem{marv2003a}
Marvan M. and Sergyeyev A.,
Recursion operator for the stationary Nizhnik--Veselov--Novikov equation,
{\it J.~Phys. A} {\bf 36} (2003), L87--L92, arXiv:nlin/0210028. 

\bibitem{mich2014a}
Michel J.-P.,
Higher symmetries of the Laplacian via quantization,
{\it Ann. Inst. Fourier (Grenoble)} {\bf 64} (2014), 1581--1609, arXiv:1107.5840.

\bibitem{mill1977A}
Miller W. Jr.,
{\it Symmetry and separation of variables},
Addison-Wesley, Reading, Mass.--London--Amsterdam, 1977.

\bibitem{noet1918}
Noether E.,
Invariante Variationsprobleme,
{\it Nachr. Ges. Wiss. G\"ottingen, Math.-Phys. Kl.} (1918), 235--257
(see {\it Transport Theory Statist. Phys.} {\bf 1} (1971), 186--207 for an English translation).

\bibitem{olve1977a}
Olver P.J.,
Evolution equations possessing infinitely many symmetries
{\it J.~Math. Phys.} {\bf 18} (1977), 1212--1215.

\bibitem{olve1993A}
Olver P.J.,
{\it Application of Lie groups to differential equations},
Springer, New York, 1993.

\bibitem{opan2020c}
Opanasenko S., Bihlo A., Popovych R.O. and Sergyeyev A.,
Generalized symmetries, conservation laws and Hamiltonian structures of an isothermal no-slip drift flux model,
{\it Phys.~D} {\bf 411} (2020), 132546, arXiv:1908.00034.

\bibitem{opan2020e}
Opanasenko S. and Popovych R.O.,
Generalized symmetries and conservation laws of (1+1)-dimensional Klein--Gordon equation,
{\it J.~Math. Phys.} {\bf 61} (2020), 101515, arXiv:1810.12434.

\bibitem{papa1990a}
Papachristou C.J.,
Potential symmetries for self-dual gauge fields,
{\it Phys. Lett.~A} {\bf 145} (1990), 250--254.

\bibitem{popo2024a}
Popovych D.R., Bihlo A. and Popovych R.O.,
Generalized symmetries of Burgers equation,
arXiv: 2406.02809.

\bibitem{popo2010a}
Popovych R.O. and Sergyeyev A.,
Conservation laws and normal forms of evolution equations,
{\it Phys. Lett.~A} {\bf 374} (2010), 2210--2217, arXiv:1003.1648.

\bibitem{serg2017a}
Sergyeyev A.,
A simple construction of recursion operators for multidimensional dispersionless integrable systems,
{\it J.~Math. Anal. Appl.} {\bf 454} (2017), 468--480, arXiv:1501.01955

\bibitem{serg2022a}
Sergyeyev A.,
Recursion operators for multidimensional integrable PDEs,
{\it Acta Appl. Math.} {\bf 181} (2022), Paper No.~10, arXiv:1710.05907.

\bibitem{shap1992a}
Shapovalov A.V. and Shirokov I.V.,
Symmetry algebras of linear differential equations,
{\it Theoret. and Math. Phys.} {\bf 92} (1992), 697--703.

\bibitem{shte1987a}
Shtelen W.M.,
On a method of constructing exact solutions of multidimensional linear differential equations,
{\it Symmetry and exact solutions of nonlinear equations of mathematical physics}, Institute of Mathematics, Kyiv, 1987, pp.~31--35.

\bibitem{shte1989a}
Shtelen W.M. and Stogny V.I.,
Symmetry properties of one- and two-dimensional Fokker--Planck equations,
{\it J.~Phys.~A} {\bf 22} (1989), L539--L543.

\bibitem{spic1999a}
Spichak S. and Stognii V.,
Symmetry classification and exact solutions of the one-dimensional Fokker--Planck equation with arbitrary coefficients of drift and diffusion,
{\it J. Phys. A} {\bf 32} (1999), 8341--8353.

\bibitem{turk1988a}
Turkowski P.,
Low-dimensional real Lie algebras,
{\it J.~Math. Phys.} {\bf 29} (1988), 2139--2144. 

\bibitem{vane2021a}
Vaneeva O.O., Popovych R.O. and Sophocleous C.,
Extended symmetry analysis of two-dimensional degenerate Burgers equation,
{\it J. Geom. Phys.} {\bf 169} (2021), 104336, arXiv:1908.01877. 

\bibitem{wang2015b}
Wang J.P.,
Representations of the algebra ${\rm sl}(2,\mathbb C)$ in category~$\mathcal O$ and master symmetries,
{\it Theoret. and Math. Phys.} {\bf 184} (2015), 1078--1105, arXiv:1408.3437.

\bibitem{zari1960A}
Zariski O. and Samuel P.,
{\it Commutative algebra}, Vol. II,
D. Van Nostrand Co., Inc., Princeton, N.J.--Toronto--London--New York, 1960.

\bibitem{zhan2020a}
Zhang Z.-Y., Zheng J., Guo L.-L. and Wu H.-F.,
Lie symmetries and conservation laws of the Fokker--Planck equation with power diffusion,
{\it Math. Methods Appl. Sci.} {\bf 43} (2020), 8894--8905.

\end{thebibliography}
\end{document}